\numberwithin{equation}{section}
\begin{document}

\newtheorem{theorem}{Theorem}[section]
\newtheorem{corollary}[theorem]{Corollary}
\newtheorem{lemma}[theorem]{Lemma}
\newtheorem{proposition}[theorem]{Proposition}

\newcommand{\adiffop}{A$\Delta$O}
\newcommand{\adiffops}{A$\Delta$Os}

\newcommand{\be}{\begin{equation}}
\newcommand{\ee}{\end{equation}}
\newcommand{\bea}{\begin{eqnarray}}
\newcommand{\eea}{\end{eqnarray}}
\newcommand{\sh}{{\rm sh}}
\newcommand{\ch}{{\rm ch}}
\newcommand{\einde}{$\ \ \ \Box$ \vspace{5mm}}
\newcommand{\De}{\Delta}
\newcommand{\de}{\delta}
\newcommand{\Z}{{\mathbb Z}}
\newcommand{\N}{{\mathbb N}}
\newcommand{\C}{{\mathbb C}}
\newcommand{\Cs}{{\mathbb C}^{*}}
\newcommand{\R}{{\mathbb R}}
\newcommand{\Q}{{\mathbb Q}}
\newcommand{\T}{{\mathbb T}}
\newcommand{\re}{{\rm Re}\, }
\newcommand{\im}{{\rm Im}\, }
\newcommand{\cW}{{\cal W}}
\newcommand{\cJ}{{\cal J}}
\newcommand{\cE}{{\cal E}}
\newcommand{\cA}{{\cal A}}
\newcommand{\cR}{{\cal R}}
\newcommand{\cP}{{\cal P}}
\newcommand{\cM}{{\cal M}}
\newcommand{\cN}{{\cal N}}
\newcommand{\cI}{{\cal I}}
\newcommand{\cMs}{{\cal M}^{*}}
\newcommand{\cB}{{\cal B}}
\newcommand{\cD}{{\cal D}}
\newcommand{\cC}{{\cal C}}
\newcommand{\cL}{{\cal L}}
\newcommand{\cF}{{\cal F}}
\newcommand{\cH}{{\cal H}}
\newcommand{\cS}{{\cal S}}
\newcommand{\cT}{{\cal T}}
\newcommand{\cU}{{\cal U}}
\newcommand{\cQ}{{\cal Q}}
\newcommand{\cV}{{\cal V}}
\newcommand{\cK}{{\cal K}}
\newcommand{\intR}{\int_{-\infty}^{\infty}}
\newcommand{\intI}{\int_{0}^{\pi/2r}}
\newcommand{\limp}{\lim_{\re x \to \infty}}
\newcommand{\limn}{\lim_{\re x \to -\infty}}
\newcommand{\limpn}{\lim_{|\re x| \to \infty}}
\newcommand{\diag}{{\rm diag}}
\newcommand{\Ln}{{\rm Ln}}
\newcommand{\Arg}{{\rm Arg}}

\title{Joint eigenfunctions for the relativistic Calogero-Moser Hamiltonians of hyperbolic type. I. First steps}
\author{Martin Halln\"as \\Department of Mathematical Sciences, \\ Loughborough University, Leicestershire LE11 3TU, UK \\ and \\Simon Ruijsenaars \\ School of Mathematics, \\ University of Leeds, Leeds LS2 9JT, UK}

\date{}

\maketitle

\begin{abstract}
We present and develop a recursion scheme to construct joint eigenfunctions for the commuting analytic difference operators associated with the integrable $N$-particle systems of hyperbolic relativistic Calogero-Moser type. The scheme is based on kernel identities we obtained in previous work. In this first paper of a series we present the formal features of the scheme, show explicitly its arbitrary-$N$ viability for the `free' cases, and supply the analytic tools to prove the joint eigenfunction properties in suitable holomorphy domains.  
\end{abstract}

\tableofcontents


\section{Introduction}

This paper is the first in a series of articles that are concerned with the explicit diagonalization and Hilbert space transform theory for the relativistic generalization of the hyperbolic $N$-particle Calogero-Moser system. As is well known, the nonrelativistic version is defined by the Hamiltonian
\be\label{H2}
H_2=-\frac{\hbar^2}{2}\sum_{j=1}^N\partial_{x_j}^2+g(g-\hbar)\sum_{1\le j<l\le N}\mu^2/4\sinh^2(\mu(x_j-x_l)/2),
\ee
where $\hbar$ is Planck's constant, $g>0$ a coupling constant with dimension [action], and~$\mu>0$ a parameter with dimension [position]$^{-1}$. There exist $N-1$ additional independent PDOs~$H_k$ of order~$k$, $k=1,3,\ldots,N$, such that the PDOs form a commutative family. The simplest Hamiltonian is the momentum operator
\be
H_1=-i\hbar \sum_{j=1}^N\partial_{x_j},
\ee
but the remaining Hamiltonians will not be specified here. (They can be found for example in the survey~\cite{OP83}.)

The arbitrary-$g$ joint eigenfunctions of these PDOs were introduced and studied by Heckman and Opdam~\cite{HO87}, and their Hilbert space transform properties were obtained by Opdam~\cite{Opd95}. (More precisely, these authors handle arbitrary root systems, whereas we restrict attention to~$A_{N-1}$.) For the case $N=2$ the joint eigenfunction amounts to a specialization of the hypergeometric function $_2F_1$, and the associated Hilbert space transform is a special Jacobi function transform, cf.~Koornwinder's survey in~\cite{Koo84}.

The relativistic generalization~\cite{RS86,Rui87} is given by the commuting  analytic difference operators (henceforth A$\De$Os)
\be\label{Sk}
		S_k(x) = \sum_{\substack{I\subset\lbrace 1,\ldots,N\rbrace\\ |I|=k}}\prod_{\substack{m\in I\\ n\notin I}}f_-(x_m-x_n)\prod_{l\in I}\exp(-i\hbar \beta\partial_{x_l})\prod_{\substack{m\in I\\ n\notin I}}f_+(x_m-x_n),\ \ \ k=1,\ldots,N,
\ee
where
\be
f_{\pm}(z)= \big(\sinh(\mu (z\pm i\beta g )/2)/\sinh(\mu z/2)\big)^{1/2}.
\ee
Here, $\beta>0$ can be viewed as $1/mc$, with $m=1$ the particle mass and $c$ the speed of light. In the nonrelativistic limit $c\to\infty$ these operators give rise to the above commuting PDOs. (See~\cite{Rui94} for a survey of the relativistic Calogero-Moser systems and their various limits.) 

Thus far, only for~$N=2$ the eigenfunctions and Hilbert space transform are well understood. Indeed, they can be obtained by specializing results by the second-named author pertaining to a `relativistic' hypergeometric function generalizing~$_2F_1$~\cite{Rui99,Rui03II,Rui03III}. This function is defined in terms of an integral whose integrand involves solely products of the hyperbolic gamma function from~\cite{Rui97}. (See also~\cite{vdB06,BRS07} for other perspectives on this function.)

In recent years, novel integral representations of the pertinent $A_1$-type (one-coupling) specializations of the latter $BC_1$ (four-coupling) function have been obtained~\cite{Rui11}. These representations amount to Fourier transforms of products of hyperbolic gamma functions. 
For our purposes, the latter Fourier transform representations are of crucial importance. Indeed, as we shall show, they can be viewed as the result of the step from~$N=1$ to $N=2$ in a recursive construction of the arbitrary-$N$ joint eigenfunctions of the A$\De$Os~$S_k$~\eqref{Sk}. The point is that the plane wave in the integrand can be viewed as the $N=1$  eigenfunction, whereas the product of hyperbolic gamma functions serves as a kernel function, connecting the free $N=1$ A$\De$O~$\exp(-i\beta\hbar d/dx)$ to the interacting $N=2$ A$\De$Os.

In a recent joint paper~\cite{HR11}, we presented a comprehensive study of kernel functions for all of the Calogero-Moser and Toda systems of $A_{N-1}$ type. In particular, we obtained kernel functions connecting the hyperbolic A$\De$Os for the $N$-particle case to those for the $(N-1)$-particle case (see also~\cite{KNS09}). For the case $N=2$, the pertinent kernel functions amount to those occurring in~\cite{Rui11}, and this enables us to set up a recursion scheme for arbitrary~$N$, as sketched in Section~2.

The idea that such a recursive construction might be feasible is not new. It appears to date back to work by Gutzwiller~\cite{Gut81}, who used it to connect eigenfunctions for the periodic and nonperiodic (nonrelativistic) Toda systems. This formalism was then used for several other cases, in particular by Kharchev, Lebedev and Semenov-Tian-Shansky~\cite{KLS02} for the relativistic Toda systems, and by Gerasimov, Kharchev and Lebedev~\cite{GKL04} for the $g=1/2$ specialization of the nonrelativistic hyperbolic Calogero-Moser system (cf.~\eqref{H2}) and for the nonrelativistic Toda systems. We would like to express our indebtedness to this previous work, without which the scheme might seem to come out of the blue. 

In the later work following Gutzwiller's pioneering contribution, the representation theory of Whittaker modules and Yangians plays a pivotal role, and accordingly a considerable algebraic machinery occurs. However, as will become clear for the present case, the formal structure of the recursion scheme can be understood quite easily. Indeed, in our approach the main algebraic input consists solely of the kernel identities from our previous paper, and the same reasoning applies to the nonrelativistic hyperbolic Calogero-Moser Hamiltonians and to the nonperiodic Toda  Hamiltonians, for which the pertinent kernel identities were also obtained in~\cite{HR11}. 

On the other hand, the simplicity of the construction in Section~2  hinges on glossing over a great many analytical difficulties. In fact, it is a major undertaking to show that the integrals yield meromorphic joint eigenfunctions that give rise to a unitary eigenfunction transform with the long list of expected symmetry properties. The snags at issue are already considerable for the first steps, and will become clear in due course. In this paper our focus is on a complete proof of the joint eigenfunction properties in suitable holomorphy domains, leaving various issues (including global meromorphy) open for now.
 
We plan to address the analogous problems for the nonrelativistic hyperbolic Calogero-Moser and nonperiodic Toda  systems in later papers. In particular, in the aforementioned work dealing with recursive eigenfunctions for these systems, the expected duality properties are not shown and unitarity (`orthogonality and completeness') is left open. Also, the associated scattering theory needs to be studied, so as to confirm the long-standing conjecture that the particles exhibit soliton scattering (conservation of momenta and factorization). These features are quantum analogs of classical ones exhibited by the action-angle transforms for these systems~\cite{Rui88}, and their relevance for the relation to the sine-Gordon quantum field theory has been discussed in~\cite{Rui01}.

There are two length scales in the relativistic Hamiltonians~\eqref{Sk}, which we reparametrize as
\be
a_{+}=2\pi/\mu,\ \ \ a_{-}=\hbar\beta.
\ee
Also, we trade the coupling $g$ for a new parameter $b$ with dimension [position], namely,
\be
b= \beta g.
\ee
With these replacements in~\eqref{Sk} in effect, the Hamiltonians with $a_{+}$ and $a_{-}$ interchanged commute with the given ones, since the shifts change the  arguments of the coefficients by a period. The resulting $2N$ commuting Hamiltonians can be rewritten as
\be\label{Hk}
	H_{k,\delta}(b;x) = \sum_{\substack{I\subset\lbrace 1,\ldots,N\rbrace\\ |I|=k}}\prod_{\substack{m\in I\\ n\notin I}}f_{\delta,-}(x_m-x_n)\prod_{l\in I}\exp(-ia_{-\delta}\partial_{x_l})\prod_{\substack{m\in I\\ n\notin I}}f_{\delta,+}(x_m-x_n),
\ee
where $k=1,\ldots,N$, $ \de=+,-$, and
\be
	f_{\delta,\pm}(z) = \left(\frac{s_\delta(z\pm ib)}{s_\delta(z)}\right)^{1/2}.
\ee
Here and throughout the paper, we use the functions
\be\label{sce}
s_{\de}(z)= \sinh(\pi z/a_{\de}),\ \ c_{\de}(z)= \cosh(\pi z/a_{\de}),\ \ e_{\de}(z)= \exp(\pi z/a_{\de}),\ \ \de=+,-.
\ee
It is also convenient to use the parameters
\be\label{aconv}
\alpha= 2\pi/a_+a_-,\ \ \ \ a= (a_++a_-)/2,
\ee 
\be\label{asl}
a_s=\min (a_+,a_-),\ \ \ a_l=\max (a_+,a_-),\ \ \ a_+,a_->0.
\ee

To be sure, there are a great many different Hamiltonians commuting with the given Hamiltonians $H_{1,+}\ldots ,H_{N,+}$. Indeed, one can replace the functions $f_{-,\pm}(z)$ in~$H_{k,-}(x)$ by {\it arbitrary} functions with period $ia_{-}$. Of course, in that case the resulting Hamiltonians~$H_1^{'},\ldots ,H_N^{'}$ will not pairwise commute any more. But the latter feature can be ensured by choosing any parameter $b'$ that differs from $b$, and then all of the~$2N$ Hamiltonians do commute. Unless $b'$ equals $2a-b$, however, it it extremely unlikely that joint eigenfunctions of the Hamiltonians $H_{k,+}$ and $H_k^{'}$ exist. The choice $b'=2a-b$ is exceptional, since it yields again the Hamiltonians $H_{k,-}$, as becomes clear by pushing the functions on the right of the shifts to the left, with arguments shifted accordingly.

From the perspective of Hilbert space (which we do not address in this paper), it is crucial to restrict the parameters to the set
\be\label{Pi}
\Pi \equiv \{ (a_+,a_-,b)\in (0,\infty)^3\mid b<2a \}.
\ee
Clearly, whenever the coupling $b$ is real, the $2N$ Hamiltonians~\eqref{Hk} are formally self-adjoint, viewed as operators on the Hilbert space~$L^2(\R^N,dx)$. To promote them to bona fide commuting self-adjoint operators, however, the restriction of $b$ to the bounded interval~$(0,2a)$ is already imperative for~$N=2$, since this key feature is violated for generic~$b>2a$, cf.~\cite{Rui00}.  

At this point we are in the position to add some further remarks about related literature. First, there is Chalykh's paper~\cite{Cha02}, where Baker-Akhiezer type eigenfunctions of the above $N$-particle Hamiltonians are introduced. With our conventions, these correspond to the special $b$-choices $b=ka_{+}$ or $b=ka_-$ with $k$ integer. We intend to clarify the relation of the arbitrary-$b$ eigenfunctions furnished by the present method to Chalykh's eigenfunctions in later work. 

Secondly, there are several papers where so-called Harish-Chandra series solutions of the joint eigenvalue problem for the Hamiltonians are studied. A comprehensive study along these lines with extensive references is the recent paper by van Meer and Stokman~\cite{VMS09}. Roughly speaking, in this setting one arrives at eigenfunctions that correspond to only one of the two modular parameters
\be 
q_+=\exp(i\pi a_+/a_-),\ \ q_-=\exp(i\pi a_-/a_+),
\ee
on which our eigenfunctions depend in a symmetric way, in the sense that there is dependence on a single parameter $q$ that must have modulus smaller than one. Accordingly, one only considers the above Hamiltonians~$H_{k,\de}$ for one choice of $\de$. (For the $BC_1$ case the relation between the latter type of eigenfunction and the modular-invariant relativistic hypergeometric function has been clarified in~\cite{BRS07}.)

We have occasion to use two further incarnations of the Hamiltonians~$H_{k,\de}$, viewed again as acting on analytic functions. These are obtained by similarity transformation with a weight function and a scattering function. The latter are defined in terms of the hyperbolic gamma function~$G(a_+,a_-;z)$, whose salient features are summarized in Appendix~A. 

First, we define the generalized Harish-Chandra function
\be\label{cz}
c(b;z)=G(z+ia-ib)/G(z+ia),
\ee
and its multivariate version
\be\label{Cx}
C(b;x)=\prod_{1\le j<k\le N}c(b;x_j-x_k).
\ee
(Here and below, we usually suppress the dependence on the parameters~$a_+$ and $a_-$; the dependence on~$b$ is often omitted as well.) Then the weight and scattering functions are defined by
\be\label{wW}
w(z)=1/c(z)c(-z),\ \ \ W(x)=1/C(x)C(-x),
\ee
\be\label{uU}
u(z)=-c(z)/c(-z),\ \ \ U(x)=(-)^{N(N-1)/2}C(x)/C(-x).
\ee
Now we introduce
\be\label{Ak}
A_{k,\delta}(x) \equiv W(x)^{-1/2}H_{k,\delta}(x)W(x)^{1/2},\ee
\be\label{cAk}
\cA_{k,\de}(x)\equiv U(x)^{ -1/2}H_{k,\de}(x)U(x)^{ 1/2}=C(x)^{-1}A_{k,\de}(x)C(x),
\ee
where $k=1,\ldots,N$, and $  \delta=+,-$.

From the difference equations~\eqref{Gades} satisfied by the hyperbolic gamma function it follows that we have
\be\label{Wade}
\frac{1}{W(x)}\prod_{m\in I}\exp(-ia_{-\delta}\partial_{x_m})W(x)=\prod_{\substack{m\in I\\ n\notin I}}\frac{f^{2}_{\delta,-}(x_m-x_n)}{f^{2}_{\delta,+}(x_m-x_n-ia_{-\de})}.
\ee
Hence we obtain
\be\label{Aks}
A_{ k,\delta}(x) =\sum_{\substack{I\subset\lbrace 1,\ldots,N\rbrace\\ |I|=k}}\prod_{\substack{m\in I\\ n\notin I}}\frac{s_{\delta}(x_m-x_n-ib)}{s_{\delta}(x_m-x_n)}\prod_{l\in I}\exp(- ia_{-\delta}\partial_{x_l}).
\ee
Likewise, we deduce that the second similarity transformation yields
\be\label{cAks}
\cA_{ k,\delta}(x) =\sum_{\substack{I\subset\lbrace 1,\ldots,N\rbrace\\  |I|=k}}\ \prod_{\substack{m\in I, n\notin I\\m>n}}\frac{s_{\de}(x_m-x_n-ib)}{s_{\de}(x_m-x_n)}\frac{s_{\de}(x_m-x_n+ib-ia_{-\de})}{s_{\de}(x_m-x_n-ia_{-\de})}\prod_{l\in I}\exp(- ia_{-\delta}\partial_{x_l}).
\ee
Thus the similarity-transformed A$\De$Os act on the space of meromorphic functions. For parameters in $\Pi$ and $x\in\R^N$, the weight function $W(x)$ is positive and the `$S$-matrix' $U(x)$ has modulus one. Accordingly, the A$\De$Os~$A_{k,\de}$ and $\cA_{k,\de}$ are then formally self-adjoint, viewed as operators on the Hilbert spaces~$L^2(\R^N,W(x)dx)$ and~$L^2(\R^N,dx)$, resp. (Once more, these features still hold for~$b$ real.)

The scattering function satisfies
\be
U(2a-b;x)=U(b;x),
\ee
whereas the weight function is not invariant under the reflection~$b\to 2a-b$. Therefore, the A$\De$Os $A_{k,\de}$ are not invariant, whereas we have
\be
\cA_{k,\de}(2a-b;x)=\cA_{k,\de}(b;x),\ \ \ k=1,\ldots,N,\ \ \de=+,-.
\ee
On the other hand, $W(x)$ is symmetric (invariant under arbitrary permutations), whereas $U(x)$ is not symmetric. (Indeed, $w(z)$ is even, whereas $u(-z)$ equals $1/u(z)$.) As a consequence, the A$\De$Os~$A_{k,\de}$ are symmetric, whereas the~$\cA_{k,\de}$ are not. (Note this can be read off from the restriction~$m>n$ in~\eqref{cAks}.) More precisely, these different behaviors hold true for $k\le N-1$, since we have
\be
H_{N,\de}=A_{N,\de}=\cA_{N,\de}=\prod_{l=1}^N\exp(- ia_{-\delta}\partial_{x_l}).
\ee

The $b$-choices $a_+$ and $a_-$ have a special status, inasmuch as they lead to constant coefficients in~$H_{k,\de}$ and~$\cA_{k,\de}$ (but not in~$A_{k,\de}$). Indeed,
we have
\be\label{free}
H_{k,\pm}(a_{\de};x)=\cA_{k,\pm}(a_{\de};x)=\sum_{\substack{I\subset\lbrace 1,\ldots,N\rbrace\\ |I|=k}}
\prod_{l\in I}\exp(- ia_{\mp}\partial_{x_l}),\ \ \ k=1,\ldots,N,\ \ \de=+,-,
\ee
in accordance with no scattering taking place:
\be
U(a_{\de};x)=1,\ \ \ \de=+,-.
\ee
(This follows from \eqref{cz} and \eqref{uU} by using~\eqref{Gades}.) For these free cases, we shall show that the recursion scheme gives rise to the multivariate sine transform, and all of the expected properties can be readily checked.
Even for these cases, however, some non-obvious identities emerge. This is because the kernel functions are already nontrivial for the free cases, and they are the building blocks of the eigenfunctions. 

We proceed to sketch the results and the organization of the paper in more detail. In Section~2 we introduce the relevant kernel functions and their salient features, and present the recurrence scheme in a formal fashion (i.e., not worrying about convergence of integrals, etc.). As will be seen, the key algebraic ingredient for getting the eigenvalue structure expected from the explicit solution of the classical theory~\cite{Rui88} is given by the following recurrence for the elementary symmetric functions~$S^{(M)}_k$ of~$M$ nonzero numbers $a_1,\ldots,a_M$:
\be\label{Srec}
S^{(M)}_k(a_1,\ldots,a_M)=a_M^k\Big(S^{(M-1)}_{k}(a_1/a_M,\ldots,a_{M-1}/a_M)+S^{(M-1)}_{k-1}(a_1/a_M,\ldots,a_{M-1}/a_M)\Big).
\ee
Here we have $M\ge 1$, $k=1,\ldots,M,$ and
\be
S^{(M-1)}_M\equiv 0,\ \ \ S^{(M-1)}_0\equiv 1.
\ee

Section~3 is concerned with the free cases $b=a_{\pm}$. For these cases the kernel functions reduce to hyperbolic functions, for which it is feasible to evaluate the relevant integrals explicitly. (A key auxiliary integral is relegated to Appendix~C, cf.~Lemma~C.1.) Thus, the joint eigenfunctions can be obtained recursively, yielding the expected results.

In Section~4 we focus on the analytic aspects of the first step of the scheme, allowing $b$ in the strip $\re b\in (0,2a)$. This step leads from the free one-particle plane wave eigenfunction to the interacting two-particle eigenfunction, and yields the relativistic conical function from~\cite{Rui11} (after removal of the center-of-mass factor). We reconsider some properties of this function, using arguments that do not involve the previous representations of the $BC_1$ case (for which no multivariate recurrence is known). 

More specifically, we focus on holomorphy domains, the joint eigenvalue equations, and uniform decay bounds, with our reasoning (as laid down in Props.~4.1--4.4) serving as a template for the $N>2$ case. In this special case, however, we can proceed much further than for $N>2$. More precisely, we can easily obtain a larger holomorphy domain and corresponding bounds (cf.~Props.~4.5 and~4.6), since contour deformations do not lead to significant complications.

Section~5 is devoted to the step from $N=2$ to $N=3$. This leads to novel difficulties, but the counterparts of Props.~4.1--4.4 can still be proved. To control contour deformations, however, is already a quite arduous task for $N=3$, and we cannot easily get the expected holomorphy features in this way. (In later papers we hope to clarify the global meromorphy character in both $x$ and $y$ for arbitrary $N$ in other ways.) We do extend the holomorphy domain in the variable $x$ (as detailed in Prop.~5.5), but the present method seems too hard to push through for arbitrary $N$.

On the other hand, once our arguments yielding Props.~5.1--5.4 are well understood, the remaining difficulties for the inductive step treated in Section~6 are largely of a combinatoric and algebraic nature. This relative simplicity hinges on the explicit evaluations of some key integrals,  cf.~Lemmas~C.2 and C.3. It came as an unexpected bonus of the free case study in Section~3 that the integrals arising there (as encoded in Lemma~C.1) suggested to aim for bounds involving the related integrals of Lemmas~C.2 and C.3. Indeed, the latter furnish the tools to control the inductive step, which is encapsulated in  Propositions 6.1--6.4.

As we have mentioned already, this paper is the first in a series of articles. In Section 7 we provide a brief outlook on future work. We discuss the main aspects of the joint eigenfunctions that we plan to investigate, and also mention some of the results we expect.

In Appendix~A we review features of the hyperbolic gamma function we have occasion to use. In Appendix~B we derive bounds on the $G$-ratio featuring in the kernel functions, and on the weight function building block~$w(z)$, cf.~\eqref{cz}--\eqref{wW}.

As already mentioned, the explicit integrals needed to handle the free cases in Section~3 can be exploited to  explicitly evaluate two further integrals that are of crucial importance for the method we use to render the scheme rigorous. Indeed, the latter integrals enable us to derive in a recursive fashion uniform decay bounds on the joint eigenfunctions, which are needed to control the inductive step. Lemmas~C.1--C.3 contain the statements and proofs of the pertinent integrals.


\section{Formal structure of the recursion scheme}

We begin this section by detailing the various kernel functions and identities. First, the special function
\be\label{cS}
	\cS(b;x,y) \equiv \prod_{j,k=1}^N\frac{G(x_j-y_k-ib/2)}{G(x_j-y_k+ib/2)},
\ee
satisfies the kernel identities
 \be\label{AS}
	\big(A_{k,\delta}(x) - A_{k,\delta}(-y)\big)\cS(x,y) = 0,\quad k=1,\ldots,N,\quad \delta=+,-,
\ee
so that the functions
\be\label{Psi}
\Psi(x,y)\equiv [W(x)W(y)]^{1/2}\cS(x,y),
\ee
and
\be\label{cK}
\cK(x,y)\equiv [C(x)C(-y)]^{-1}\cS(x,y),
\ee
satisfy
\be\label{HPsi}
\big(H_{k,\delta}(x) - H_{k,\delta}(-y)\big)\Psi(x,y) = 0,\quad  k=1,\ldots,N,\quad \delta=+,-,
\ee
\be\label{cAcK}
\big(\cA_{k,\delta}(x) - \cA_{k,\delta}(-y)\big)\cK(x,y) = 0,\quad  k=1,\ldots,N,\quad \delta=+,-.
\ee
(See~\cite{Rui06} for the proof of~\eqref{AS}; the elliptic regime handled there is easily specialized to the hyperbolic one.)

The kernel functions just defined connect the $N$-particle A$\De$Os to themselves. As we intend to show in a later paper, they have a rather surprising application to the study of the joint eigenfunctions produced by the scheme. The protagonists of the scheme, however, are kernel functions connecting the $N$-particle A$\De$Os to the $(N-1)$-particle A$\De$Os, obtained in~\cite{HR11}. They arise from the previous ones by first multiplying by a suitable plane wave and then letting $y_N$ go to infinity. From now on, the dependence of the A$\De$Os and kernel functions on $N$ shall be made explicit wherever ambiguities might otherwise arise, and we also use a superscript $\sharp$ to denote a kernel with first argument in~$\C^M$ and second one in~$\C^{M-1}$.

With these conventions in place, the kernel function
\be\label{cSf}
	\cS_N^{\sharp}(b;x,y) \equiv \prod_{j=1}^N\prod_{k=1}^{N-1}\frac{G(x_j-y_k-ib/2)}{G(x_j-y_k+ib/2)},\ \ \ N>1,
\ee
satisfies the key identities (cf.~Corollary~2.3 in~\cite{HR11})
\begin{multline}\label{key}
	 A^{(N)}_{k,\delta}(x_1,\ldots,x_N)\cS_N^{\sharp}(x,y)\\ = \big(A^{(N-1)}_{ k,\delta}(-y_1,\ldots,-y_{N-1})+A^{(N-1)}_{ k-1,\delta}(-y_1,\ldots,-y_{N-1})\big)\cS_N^{\sharp}(x,y),
\end{multline} 
where $k=1,\ldots,N$, $\de=+,-$, and
\be
A^{(N-1)}_{N,\de}(-y_1,\ldots,-y_{N-1})\equiv 0,\ \ \ A^{(N-1)}_{0,\de}(-y_1,\ldots,-y_{N-1})\equiv {\bf 1}.
\ee
Using notation that will be clear from context, we now set
\be\label{Psif}
\Psi_N^{\sharp}(x,y)\equiv [W_N(x)W_{N-1}(y)]^{1/2}\cS_N^{\sharp}(x,y),
\ee
\be\label{cKf}
\cK_N^{\sharp}(x,y)\equiv [C_N(x)C_{N-1}(-y)]^{-1}\cS_N^{\sharp}(x,y).
\ee
The counterparts of~\eqref{key} are then (cf.~also \eqref{Psi}--\eqref{cAcK})
\be\label{key2}
	 H^{(N)}_{k,\delta}(x)\Psi_N^{\sharp}(x,y) = \big(H^{(N-1)}_{ k,\delta}(-y)+H^{(N-1)}_{ k-1,\delta}(-y)\big)\Psi_N^{\sharp}(x,y),
\ee
\be\label{key3}
	 \cA^{(N)}_{k,\delta}(x)\cK_N^{\sharp}(x,y) = \big(\cA^{(N-1)}_{ k,\delta}(-y)+\cA^{(N-1)}_{ k-1,\delta}(-y)\big)\cK_N^{\sharp}(x,y),
\ee
where $k=1,\ldots,N$, $\de=+,-$, and
\be
H^{(N-1)}_{N,\de}(-y)=\cA^{(N-1)}_{N,\de}(-y)\equiv 0,\ \ \ H^{(N-1)}_{0,\de}(-y)=\cA^{(N-1)}_{0,\de}(-y)\equiv {\bf 1}.
\ee

We are now prepared to explain the `calculational' crux of the recursion scheme. Assume we have a joint eigenfunction~$J_{N-1}((x_1,\ldots,x_{N-1}),(y_1,\ldots,y_{N-1}))$ of the A$\De$Os $A^{(N-1)}_{k,\de}(x)$,  with eigenvalues given by
\be\label{recass}
A^{(N-1)}_{k,\de}(x)J_{N-1}(x,y)=S_k^{(N-1)}\big(e_{\de}(2y_1),\ldots,e_{\de}(2y_{N-1})\big)J_{N-1}(x,y),\ \ \ k=1,\ldots,N-1.
\ee
where $S_k^{(M)}(a_1,\ldots,a_M)$ denotes the elementary symmetric functions of the~$M$ numbers $a_1,\ldots,a_M$. Now consider the function~$J_N(x,y)$ with arguments $x,y\in\C^N$, formally given by
\be\label{JN}
J_N(x,y)=\frac{\exp\big(i\alpha y_N\sum_{j=1}^Nx_j\big)}{(N-1)!}\int_{\R^{N-1}}dzW_{N-1}(z) \cS_N^{\sharp}(x,z)J_{N-1}(z,(y_1-y_N,\ldots,y_{N-1}-y_N)).
\ee
At this stage we do not address the convergence of the integral, and we also assume that we can take the $x$-dependent shifts in the A$\De$Os under the integral sign. 

Acting with $A^{(N)}_{k,\delta}(x)$ on~$J_N$, we pick up a factor~$e_{\de}(2ky_N)$ upon shifting the A$\De$O through the plane wave up front (recall~\eqref{aconv}), after which we act on the kernel function and use~\eqref{key} with $y\to z$. Using formal self-adjointness on $L^2(\R^{N-1},W_{N-1}(z)dz)$ of the two A$\De$Os on the rhs, we now transfer their action to the factor $J_{N-1}$, noting that the argument $-z$ should then be replaced by $z$, since no complex conjugation occurs in~\eqref{JN}. As a consequence we can use our assumption~\eqref{recass}, the upshot being that we obtain 
\bea\label{receig}
A^{(N)}_{k,\delta}(x) J_N(x,y)  &  =  &  e_{\de}(2ky_N)\big[ S^{(N-1)}_k\big(e_{\de}(2(y_1-y_N)),\ldots,e_{\de}(2(y_{N-1}-y_N))\big)
\nonumber \\
&  +  &  S^{(N-1)}_{k-1}\big(e_{\de}(2(y_1-y_N)),\ldots,e_{\de}(2(y_{N-1}-y_N))\big)\big]J_N(x,y).
\eea
Invoking the symmetric function recurrence~\eqref{Srec},
the eigenvalue formula \eqref{receig} can now be rewritten as
\be\label{recstep}
A^{(N)}_{k,\delta}(x) J_N(x,y)=S^{(N)}_k(e_{\de}(2y_1),\ldots,e_{\de}(2y_N))J_N(x,y),\ \ \ \ k=1,\ldots,N.
\ee
Comparing this to our assumption~\eqref{recass}, we easily deduce that we have arrived at a recursive procedure to construct joint eigenfunctions. Indeed, we can start the recursion with the plane wave
\be\label{J1}
J_1(x,y)\equiv \exp (i\alpha xy),
\ee
which obviously satisfies
\be
A^{(1)}_{1,\pm}J_1(x,y)=e_{\pm}(2y)J_1(x,y),
\ee
and then proceed inductively to obtain joint eigenfunctions for arbitrary $N$.

\section{The free cases $b=a_{\pm}$}

The algebraic aspects of the procedure detailed in the previous section are easy to grasp and unassailable from a formal viewpoint. From the perspective of rigorous analysis, however, the scheme thus far
 has the advantage of theft over honest toil.
Indeed, the first step in the recursion already leads to some delicate issues, as we shall see in the next section. 

On the other hand, the $N=1$ starting point causes no difficulty. Specifically, upon multiplication by $(a_+a_-)^{-1/2}$, the function~\eqref{J1} yields the kernel of a unitary integral operator on $L^2(\R)$. Accordingly, the two analytic difference operators~$\exp(-ia_{\mp}d/dx)$ at issue can be promoted to commuting self-adjoint operators on $L^2(\R,dx)$, defined as the pullbacks of the multiplication operators $e_{\pm}(2y)$ under Fourier transformation.

Specializing to the case $b=a_+$, we shall show in this section that the recurrence can be performed explicitly, yielding a multivariate version of the Fourier transform. (As will be seen shortly, the case $b=a_-$ yields basically the same result.)

We begin by writing down the specializations of the relevant functions. Setting $b=a_+$ in \eqref{cSf}, we deduce from the A$\De$Es \eqref{Gades} satisfied by the hyperbolic gamma function that the kernel function $\cS_N^{\sharp}$ reduces to
\be\label{cSs}
\cS_N^{\sharp}(x,y)=\prod_{j=1}^N\prod_{k=1}^{N-1}\frac{1}{2c_-(x_j-y_k)}.
\ee
Similarly, we find that the weight function $W_N$ reduces to
\be\label{wWs}
W_N(x)=\prod_{1\le j< k\le N}4s_-(x_j-x_k)^2,
\ee
cf.~\eqref{cz}--\eqref{wW}.

Turning to the explicit implementation of the recurrence, we start with the $N=2$ case, assuming $x,y\in\R^2$ until further notice. Substituting \eqref{cSs}--\eqref{wWs} into \eqref{JN} for $N=2$, we find
\be\label{J2s}
J_2(x,y)=e^{i\alpha y_2(x_1+x_2)}\int_{\mathbb{R}}dz\exp\big(i\alpha z(y_1-y_2)\big)\prod_{j=1}^2\frac{1}{2c_-(x_j-z)}.
\ee
Changing variable $z\to a_-z/\pi$ and multiplying by $4\pi/a_-$, we see that for $x_1\ne x_2$ and $y_1\ne y_2$ the resulting integral is equal to the right-hand side of \eqref{integral}, with $N=1$ and with $t=\pi x/a_-$, $p_1=2(y_1-y_2)/a_+$. Invoking Lemma~C.1 and making use of the identities $\sigma(x)\cdot y=x\cdot\sigma^{-1}(y)$ and $(-)^\sigma=(-)^{\sigma^{-1}}$, we deduce that $J_2$ is given by
\be
J_2(x,y)=\frac{-ia_-}{4s_-(x_1-x_2)s_+(y_1-y_2)}\sum_{\sigma\in S_2}(-)^\sigma\exp\big(i\alpha x\cdot\sigma(y)\big),
\ee
where $x,y\in\R^2$ and $ x_1\ne x_2,  y_1\ne y_2$.

We claim that this structure persists in the general case, in the sense that
\be\label{assumption}
J_N(x,y)=\prod_{1\leq j<k\leq N}
\frac{-ia_-}{4s_-(x_j-x_k)s_+(y_j-y_k)}
\sum_{\sigma\in S_N}(-)^\sigma\exp\big(i\alpha x\cdot\sigma(y)\big),
\ee
where $x,y\in\R^N$, and $x_j\ne x_k$, $y_j\ne y_k$ for $j\ne k$.
To verify this claim, we proceed by induction on $N$. Since the case $N=2$ has just been shown, we assume \eqref{assumption} for $N\geq 2$ and prove its validity for $N\to N+1$.

From \eqref{JN} and \eqref{cSs}--\eqref{wWs}, we infer
\begin{multline}\label{JNp1}
J_{N+1}(x,y) =\frac{\exp(i\alpha y_{N+1}(x_1+\cdots+x_{N+1}))}{2^{N(N+1)}N!}\prod_{1\leq j<k\leq N}\frac{-ia_-}{s_+(y_j-y_k)}\\ \times\sum_{\sigma\in S_N}(-)^\sigma \int_{\mathbb{R}^N}dz\exp\left(i\alpha\sum_{j=1}^Nz_j(y_{\sigma(j)}-y_{N+1})\right)\frac{\prod_{1\leq j<k\leq N}s_-(z_j-z_k)}{\prod_{j=1}^{N+1}\prod_{k=1}^Nc_-(x_j-z_k)}.
\end{multline}
Now, we fix $\sigma\in S_N$ and consider the corresponding integral in \eqref{JNp1}. Changing variables $z_k\to a_-z_k/\pi$, $k=1,\ldots,N$, and multiplying by $(\pi/a_-)^N$, we arrive at the right-hand side of \eqref{integral} for $t=\pi x/a_-$ and $p_j=2(y_{\sigma(j)}-y_{N+1})/a_+$, $j=1,\ldots,N$. Hence, Lemma~C.1 yields
\bea
J_{N+1}(x,y) & = &\frac{1}{N!}\prod_{1\leq j<k\leq N+1}
\frac{-ia_-}{4s_-(x_j-x_k)s_+(y_j-y_k)}
\nonumber \\ & & \times \sum_{\sigma\in S_N}(-)^\sigma\sum_{\tau\in S_{N+1}}(-)^\tau \exp\big(i\alpha\tau(x)\cdot\sigma(y)\big).
\eea
Here, we have identified $\sigma\in S_N$ with the element in $S_{N+1}$ that acts by $\sigma$ on the first $N$ coordinates and leaves the last one invariant. Upon replacing $\tau(x)\cdot\sigma(y)$ by $x\cdot(\tau^{-1}\sigma)(y)$, substituting $\tau\to\sigma\tau^{-1}$, and then using $(-)^{\sigma\tau^{-1}}=(-)^\sigma(-)^\tau$, we reduce the double summation to
\be
N!\sum_{\tau\in S_{N+1}}(-)^\tau\exp\big(i\alpha x\cdot\tau(y)\big).
\ee
If we now take $\tau\to\sigma$, then we arrive at \eqref{assumption} for $N\to N+1$.

We now summarize and slightly extend our finding.

\begin{theorem}
Assume $b$ is equal to $a_+$ or to $a_-$. Then we have
\be\label{JNfree}
J_N(a_{\de};x,y)=\prod_{1\leq j<k\leq N}
\frac{-ia_{-\de}}{4s_{-\de}(x_j-x_k)s_{\de}(y_j-y_k)}
\sum_{\sigma\in S_N}(-)^\sigma\exp\big(i\alpha x\cdot\sigma(y)\big),\ \ \ \de=+,-,
\ee
where $x,y\in\R^N$, and $x_j\ne x_k, y_j\ne y_k$ for $j\ne k$.
\end{theorem}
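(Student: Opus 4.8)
The plan is to reduce the claimed Theorem~3.1 to the case $b=a_+$, which has already been established in the text above as equation~\eqref{assumption}. The Theorem differs from~\eqref{assumption} only in that it packages both free choices $b=a_\pm$ into a single formula~\eqref{JNfree}, with the roles of $a_+$ and $a_-$ (and correspondingly $s_+$ and $s_-$) interchanged when $\delta=-$. Thus the content of the proof is twofold: first, to observe that the derivation leading to~\eqref{assumption} goes through verbatim with every subscript $+$ and $-$ swapped, and second, to note that the left-hand side $J_N(a_\delta;x,y)$ is indeed symmetric under this swap at the level of the recursion scheme, since interchanging $a_+\leftrightarrow a_-$ carries $b=a_+$ to $b=a_-$ and leaves the parameter $\alpha=2\pi/a_+a_-$ invariant.

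First I would remark that the recursion scheme of Section~2 --- the formula~\eqref{JN} for $J_N$ in terms of $J_{N-1}$, the kernel function~\eqref{cSf}, the weight function, and the starting point~\eqref{J1} --- is manifestly invariant under the interchange $a_+\leftrightarrow a_-$. The only ingredient that carries an asymmetry is the choice of $b$: setting $b=a_+$ versus $b=a_-$. Hence $J_N(a_-;x,y)$ is obtained from $J_N(a_+;x,y)$ simply by swapping $a_+\leftrightarrow a_-$ throughout. Next I would record the specializations analogous to~\eqref{cSs}--\eqref{wWs} for $b=a_-$: using the difference equations~\eqref{Gades} for the hyperbolic gamma function with the roles of $a_+$ and $a_-$ exchanged, one finds $\cS_N^\sharp(x,y)=\prod_{j=1}^N\prod_{k=1}^{N-1}1/2c_+(x_j-y_k)$ and $W_N(x)=\prod_{1\le j<k\le N}4s_+(x_j-x_k)^2$. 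The auxiliary integral of Lemma~C.1 is itself symmetric in $a_+\leftrightarrow a_-$ (or can be re-read with the labels swapped), so the identical induction on $N$ --- performing the change of variables $z_k\to a_+z_k/\pi$, invoking Lemma~C.1, and applying the symmetric-group bookkeeping with $\tau\to\sigma\tau^{-1}$ exactly as before --- produces~\eqref{JNfree} for $\delta=-$. Combining this with the already-proven case $\delta=+$ gives the Theorem.

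The only genuinely substantive point to check is that the hypotheses of Lemma~C.1 and the domain restrictions ($x_j\ne x_k$, $y_j\ne y_k$, real arguments) are unaffected by the swap, which is immediate since $s_+$ and $s_-$ have the same zero set $\{z=0\}$ on the relevant lines. I would therefore not expect any real obstacle here: the proof is essentially the observation that Section~3's computation has a built-in $a_+\leftrightarrow a_-$ symmetry, together with a one-line verification of the specialized kernel and weight functions for $b=a_-$. If anything, the mild subtlety worth flagging explicitly is keeping track of which $s$-factor ($s_\delta$ or $s_{-\delta}$) sits in the denominator prefactor versus which appears under the integral, so that the final formula has the $x$-prefactor built from $s_{-\delta}$ and the $y$-prefactor from $s_\delta$, as written in~\eqref{JNfree}; this is precisely the pattern that the $b=a_+$ case exhibits with $\delta=+$, namely $s_-$ for the $x$'s and $s_+$ for the $y$'s, matching the appearance of $c_-$ in~\eqref{cSs} and $s_-$ in~\eqref{wWs}.
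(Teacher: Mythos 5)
Your proposal is correct and coincides with the paper's own argument: the paper also establishes the $\de=+$ case via the inductive computation culminating in \eqref{assumption}, and then obtains the $\de=-$ case by simply interchanging $a_+$ and $a_-$ throughout, exactly as you do (your extra remarks on the $a_+\leftrightarrow a_-$ invariance of $\alpha$, the kernel and weight specializations, and Lemma~C.1 just spell out why that interchange is legitimate).
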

\begin{proof}
We have just shown this equality for $\de=+$.  We need only interchange $a_+$ and $a_-$ throughout to obtain~\eqref{JNfree} for $\de=-$.
\end{proof}

Using the $G$-A$\De$Es~\eqref{Gades}, one easily checks that the definitions~\eqref{cz}--\eqref{Cx} entail
\be\label{Cfree}
C(a_{\de};x)=1/\prod_{1\leq j<k\leq N}
2is_{-\de}(x_j-x_k).
\ee
Thus it follows that the similarity transformations~\eqref{Ak}--\eqref{cAk} give rise to joint eigenfunctions of the Hamiltonians $H_{k,\de}(a_{\de};x)$ and $\cA_{k,\de}(a_{\de};x)$ that are proportional to the kernel of the multivariate sine transform,
\be
\Sigma (x,y)\equiv \sum_{\sigma\in S_N}(-)^\sigma\exp\big(i\alpha x\cdot\sigma(y)\big),
\ee
 in accord with their free character for $b=a_{\pm}$, cf.~\eqref{free}.


\section{The step from $N=1$ to $N=2$}

In this section we consider the $N=2$ case for $b$-values in the strip
\be\label{Sa}
S_a\equiv \{ b\in\C \mid \re b \in (0,2a)\}.
\ee
Fixing $y\in\R^2$ until further notice, we begin by studying  the integrand
\be\label{I2}
I_2(b;x,y,z)\equiv \cS^{\sharp}_2(b;x,z)J_1(z,y_1-y_2)=  \exp(i\alpha z(y_1-y_2))\prod_{j=1}^2\frac{G(x_j-z-ib/2)}{G(x_j-z+ib/2)},
\ee
arising in the first step $J_1\to J_2$, cf.~\eqref{JN}. In the $z$-plane it has upward/downward double sequences of $G$-poles at (cf.~\eqref{Gpo}--\eqref{Gze})
\be\label{Ipo}
z=x_j-ib/2+ia+z_{kl},\ \ z=x_j+ib/2-ia-z_{kl},\ \ \ j=1,2,\ \ \ k,l\in\N.
\ee 
Letting first $x\in\R^2$, it is clear that the integration contour $\R$ stays below/above these upward/downward  sequences.
From Lemma~B.1 we see that $I_2$ decays exponentially for $z\to\pm \infty$, so the function
\be\label{J2}
J_2(b;x,y)=\exp(i\alpha y_2(x_1+x_2))\int_{\R}dzI_2(b;x,y,z),\ \  \ b\in S_a,
\ \ \ x,y\in\R^2,  
\ee
is well defined.

We proceed to obtain holomorphy properties of $J_2$ in the variables $b$ and $x$, using arguments that allow a recursive generalization to arbitrary $N$. (For the $N=2$ case, it is easy to obtain more information, as will transpire after Prop.~4.4.) First, from the fixed contour representation~\eqref{J2} and pole locations~\eqref{Ipo}, it is immediate that
$J_2$ is holomorphic in~$b$ and $x_j$ for $b\in S_a$ and $|\im x_j|<a-\re b/2$, $j=1,2$. Secondly, Lemma~B.1 entails 
\be\label{S2b}
\cS^{\sharp}_2(b;x,z)=O(\exp(-\alpha \re b\,|\re z|)),\ \ \ |\re z|\to\infty,
\ee
where the implied constant is uniform for $( b, x,\im z)$ vaying over compact subsets of $S_a\times\C^2\times\R$. 
 Thus we can freely shift the contour $\R$ up and down as long as we do not meet any poles. 

In particular, for $x\in\R^2$ we may replace the contour $\R$ in~\eqref{J2} by a contour $\R-ia+ib/2+i\epsilon$, $\epsilon\in(0,a-\re b/2)$. Then we can read off that $J_2$ extends holomorphically to $\im x_j\in (\epsilon,- 2a+\re b+\epsilon)$, $j=1,2$. Clearly, this argument can be iterated. Likewise, we can move the contour up step by step to allow arbitrary positive $\im x_j$ with similar restrictions. More precisely, introducing the domain
\be\label{D2}
D_2\equiv \{ (b,x)\in S_a\times \C^2 \mid |\im (x_1-x_2)|<2a-\re b\},
\ee
we have the following result.

\begin{proposition}
Let $y\in\R^2$. Then the function $J_2(b;x,y)$ is holomorphic in~$D_2$. 
\end{proposition}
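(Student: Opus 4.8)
The plan is to establish holomorphy of $J_2(b;x,y)$ on $D_2$ by a contour-shifting argument that exploits the uniform exponential decay of the integrand $I_2$ recorded in Lemma~B.1 and the explicit pole structure~\eqref{Ipo}. Throughout, $y\in\R^2$ is fixed, so the plane-wave factor $\exp(i\alpha z(y_1-y_2))$ is bounded on each horizontal line and the plane-wave prefactor $\exp(i\alpha y_2(x_1+x_2))$ is entire in~$x$; hence it suffices to treat the integral $F(b;x)\equiv\int_{\R}dz\,I_2(b;x,y,z)$. The domain $D_2$ is defined only by the constraint $|\im(x_1-x_2)|<2a-\re b$ on the \emph{difference} of the imaginary parts, so the first reduction is to note that $I_2$ depends on $x_1,x_2$ only through a common translation plus the difference; more precisely, shifting the contour $z\mapsto z+i\,\tfrac{\im x_1+\im x_2}{2}$ (legitimate once we know the decay is locally uniform) reduces the problem to the case $\im x_1+\im x_2=0$, i.e. $\im x_1=-\im x_2=:\eta$ with $|\eta|<a-\re b/2$ after relabelling — so that we really only need holomorphy in a neighbourhood of such configurations, and then patch.

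The core step is the iterated contour shift already sketched in the text just before the statement. Fix a point $(b_0,x^0)\in D_2$. By the pole locations~\eqref{Ipo}, the upward $G$-pole sequences attached to $x_j^0$ start at height $\im x_j^0 + a - \re b_0/2$ and the downward ones at $\im x_j^0 - a + \re b_0/2$; for $x^0\in\R^2$ the real axis lies strictly between these, and Lemma~B.1 (via~\eqref{S2b}) gives decay $O(\exp(-\alpha\re b|\re z|))$ uniform for $(b,x,\im z)$ in compacts of $S_a\times\C^2\times\R$. This uniformity is exactly what is needed to (i) differentiate under the integral sign, so $F$ is holomorphic wherever the contour $\R$ avoids all poles, and (ii) justify each finite contour translation $\R\rightsquigarrow\R+ic$ by Cauchy's theorem, the horizontal pieces at $\re z=\pm R$ vanishing as $R\to\infty$. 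Starting from $\R$ one pushes the contour up to $\R - ia + i\re b_0/2 + i\epsilon$ with $\epsilon\in(0,a-\re b_0/2)$, crossing no poles; on this new contour $F$ is manifestly holomorphic for $\im x_j\in(\epsilon,\,-2a+\re b+\epsilon)$, widening the strip in $\im x_j$ by almost $a-\re b/2$ on the lower side. Iterating — alternately raising and lowering the contour in steps of size just under $a-\re b/2$ — one reaches every $x$ with $|\im(x_1-x_2)|<2a-\re b$: the point is that each individual $\im x_j$ may be pushed arbitrarily far, but the obstruction to sliding the contour between the pole sequences of $x_1$ and of $x_2$ is precisely the gap condition $|\im(x_1-x_2)|<2a-\re b$, which is the defining inequality of $D_2$.

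To organise this cleanly I would argue as follows: cover $D_2$ by open sets $D_2^{(m,n)}$ on which a fixed contour $\R + i c$ (with $c$ depending on $m,n$) stays, throughout the set, strictly below the upward pole sequences and strictly above the downward ones; on each such set $F$ is holomorphic by the usual Morera/differentiation-under-the-integral argument combined with the locally uniform bound from Lemma~B.1; and since $D_2$ is connected and these sets overlap, the local pieces glue to a single holomorphic function, which agrees with~\eqref{J2} on $S_a\times\R^2$. A small amount of care is required to check that the $D_2^{(m,n)}$ genuinely cover $D_2$ — this is the bookkeeping of how far one may shift in $\epsilon$-steps of length up to $a-\re b/2$ before hitting the next pole, and it is here that the factor $2a-\re b$ (rather than $a-\re b/2$) appears, since one is free to move the contour into the region between a downward sequence of $x_1$ and an upward sequence of $x_2$.

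The main obstacle, I expect, is not any single estimate but precisely this combinatorial verification that the contour can always be routed between \emph{all four} pole sequences (two upward, two downward, attached to $x_1$ and $x_2$) as long as $(b,x)\in D_2$, together with checking that the uniformity in Lemma~B.1 is strong enough to survive the infinitely-iterated shifting — i.e. that for each target compact subset of $D_2$ only \emph{finitely many} shift steps are needed and the implied constants stay controlled. Everything else (vanishing of the side contributions, holomorphy on each fixed contour, independence of the contour) is routine once that routing is in hand.
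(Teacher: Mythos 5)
Your proposal is correct and follows essentially the same route as the paper: one places the contour at the mean height $\im(x_1+x_2)/2$ (your normalization $\im x_1=-\im x_2=\eta$, $|\eta|<a-\re b/2$ is exactly this), checks via \eqref{Ipo} that the defining inequality of $D_2$ is precisely what keeps this contour between the upward and downward $G$-pole sequences, and justifies the iterated shifts and differentiation under the integral by the locally uniform decay \eqref{S2b} from Lemma~B.1. Your covering-and-gluing bookkeeping is just a more explicit write-up of the paper's ``iterative contour shift procedure,'' and your worry about infinitely many steps is moot since any compact subset of $D_2$ requires only finitely many shifts.
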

\begin{proof}
We fix $(b,x)\in D_2$ and set
\be
\eta\equiv \im (x_1+x_2)/2.
\ee
Then we have
\be
\im x_j+a-\re b/2>\eta,\ \ \im x_j-a+\re b/2<\eta,\ \ \ j=1,2,
\ee
so the contour $\R +i\eta$ stays below/above the upward/downward  $G$-pole sequences~\eqref{Ipo}. Therefore, we can arrive at these $x$-values by the above iterative contour shift procedure without passing any poles. Hence we deduce holomorphy near $(b,x)$, and thus in~$D_2$.
\end{proof}

The analyticity properties obtained thus far already suffice to show that~$J_2$ is a joint eigenfunction of the $N=2$ A$\De$Os with the expected eigenvalues. More precisely, we fix attention on the subset 
\be\label{D2s}
D_{2,s}\equiv \{ (b,x)\in D_2 \mid  (b,(x_1-i\eta_1,x_2-i\eta_2))\in D_2,\forall (\eta_1,\eta_2)\in [0,a_l]^2\},\ \ \ \re b\in (0,a_s),
\ee
for which the argument shifts stay in the holomorphy domain $D_2$. Here we have used the notation~\eqref{asl}, noting that the restriction of $\re b$ to $(0,a_s)$ is necessary for $D_{2,s}$ to be non-empty. (Once we obtain further analyticity properties in $b$ and $x$, the eigenvalue equations can be analytically continued to $b\in S_a$, cf.~below.)

\begin{proposition}
Let $y\in\R^2$. For all $(b,x)\in D_{2,s}$, we have the joint eigenfunction property
\be\label{A2eig}
A^{(2)}_{k,\delta}(x) J_2(x,y)=S^{(2)}_k(e_{\de}(2y_1),e_{\de}(2y_2))J_2(x,y),\ \ k=1,2,\ \ \de=+,-.
\ee
\end{proposition}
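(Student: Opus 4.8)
The plan is to establish the eigenvalue equations \eqref{A2eig} by carefully justifying, for $N=2$, the formal manipulations sketched in Section~2. Concretely, I would work on the subdomain $D_{2,s}$ precisely because there the $x$-shifts appearing in $A^{(2)}_{k,\delta}(x)$ keep the argument inside the holomorphy domain $D_2$ of Proposition~4.1, so that $A^{(2)}_{k,\delta}(x)J_2(x,y)$ is a well-defined holomorphic function of $(b,x)$ there. The first step is to pull the operator $A^{(2)}_{k,\delta}(x)$ inside the $z$-integral in the representation
\be
J_2(b;x,y)=\exp(i\alpha y_2(x_1+x_2))\int_{\R+i\eta}dz\, I_2(b;x,y,z),
\ee
with the contour $\R+i\eta$, $\eta=\im(x_1+x_2)/2$, as in the proof of Proposition~4.1. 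Interchanging the finite sum of shifted-and-multiplied terms with the integral is legitimate once we know the shifted integrands are jointly continuous and the decay bound \eqref{S2b} holds uniformly as the shift parameters vary over the relevant compact set; this is where the defining inequality in $D_{2,s}$ (shifts in $[0,a_l]^2$) is used, guaranteeing no pole sequence \eqref{Ipo} is crossed and the contour can be kept fixed.

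Next I would peel off the plane-wave prefactor. Since $A^{(2)}_{k,\delta}(x)$ contains the shifts $\exp(-ia_{-\delta}\partial_{x_l})$ for $l\in I$, acting on $\exp(i\alpha y_2(x_1+x_2))$ it produces the scalar $\prod_{l\in I}e_{\delta}(2y_2)=e_\delta(2|I|y_2)=e_\delta(2ky_2)$, exactly as in the passage to \eqref{receig}. What remains is $A^{(2)}_{k,\delta}(x)$ acting inside the integral on $\cS^\sharp_2(b;x,z)$, times $J_1(z,y_1-y_2)=\exp(i\alpha z(y_1-y_2))$. Here I invoke the kernel identity \eqref{key} with $N=2$, $y\to z$, which converts $A^{(2)}_{k,\delta}(x)\cS^\sharp_2(x,z)$ into $\bigl(A^{(1)}_{k,\delta}(-z)+A^{(1)}_{k-1,\delta}(-z)\bigr)\cS^\sharp_2(x,z)$, where $A^{(1)}_{0,\delta}={\bf 1}$ and, for $k=1$, $A^{(1)}_{1,\delta}(-z)=\exp(ia_{-\delta}\partial_z)$, while for $k=2$ only the $A^{(1)}_{1,\delta}(-z)$ term survives (with $A^{(1)}_{2,\delta}\equiv 0$). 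Crucially, applying \eqref{key} requires that the $x$-shifts inside $A^{(2)}_{k,\delta}(x)$ be performed on a holomorphy domain of $\cS^\sharp_2$ compatible with the contour; again $D_{2,s}$ is tailored for this, and one should note the $z$-contour can be taken fixed throughout because of \eqref{S2b} and the absence of poles.

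The third step is an integration by parts / contour-shift argument realizing the "formal self-adjointness on $L^2(\R^{N-1},W_{N-1}(z)dz)$" of the $N=1$ operators. Since $W_1\equiv 1$, the adjoint transfer is: move the $z$-shift $\exp(\pm ia_{-\delta}\partial_z)$ off $\cS^\sharp_2(x,z)$ and onto $\exp(i\alpha z(y_1-y_2))$ by shifting the integration contour $\R+i\eta$ by $\mp a_{-\delta}$, using \eqref{S2b} to discard the vanishing contributions at $\re z\to\pm\infty$ and checking that no $G$-pole of $\cS^\sharp_2$ is crossed in the strip swept out — this last point is exactly the content of the $[0,a_l]^2$ condition in $D_{2,s}$, and is the main obstacle, since it is the one place where the geometry of the pole lattice \eqref{Ipo} relative to the contour must be controlled honestly rather than formally. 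Acting on $\exp(i\alpha z(y_1-y_2))=J_1(z,y_1-y_2)$ the shift $\exp(ia_{-\delta}\partial_z)$ produces the scalar $e_\delta(2(y_1-y_2))$ (the $N=1$ eigenvalue from $A^{(1)}_{1,\delta}J_1(\cdot,y_1-y_2)=e_\delta(2(y_1-y_2))J_1(\cdot,y_1-y_2)$), with no sign change because there is no complex conjugation. Assembling the pieces, $A^{(2)}_{k,\delta}(x)J_2(x,y)$ equals $e_\delta(2ky_2)\bigl[S^{(1)}_k(e_\delta(2(y_1-y_2)))+S^{(1)}_{k-1}(e_\delta(2(y_1-y_2)))\bigr]J_2(x,y)$, and the symmetric-function recurrence \eqref{Srec} with $M=2$ collapses the bracketed expression to $S^{(2)}_k(e_\delta(2y_1),e_\delta(2y_2))$, giving \eqref{A2eig}. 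Finally, holomorphy of both sides in $(b,x)\in D_{2,s}$ (and later, once extended, on a larger $b$-range) lets the identity be read off from this computation valid on $D_{2,s}$, as anticipated in the remark following \eqref{D2s}.
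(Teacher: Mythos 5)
Your overall strategy (pull the \adiffop{} under the integral, apply the kernel identity \eqref{key}, transfer the $z$-shift to $J_1$, use the $N=1$ eigenvalue and the recurrence \eqref{Srec}) is the same as the paper's, but there is a genuine gap at precisely the step you flag as ``the main obstacle'': your claim that the $[0,a_l]^2$ condition defining $D_{2,s}$ guarantees that the fixed contour $\R+i\eta$ can be kept throughout, with no $G$-pole of $\cS_2^{\sharp}$ crossed, is false. The condition in \eqref{D2s} only ensures that the shifted arguments $x-i\eta$ stay in the holomorphy domain $D_2$, so that the left-hand side of \eqref{A2eig} is well defined; it says nothing about the position of the pole sequences \eqref{Ipo} relative to a \emph{fixed} $z$-contour. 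Concretely, take $b\in(0,a_s)$ real, $x\in\R^2$ (so $\eta=0$) and $a_{-\de}=a_l$: the lowest upward pole of $\cS_2^{\sharp}(x,\cdot)$ sits at $\im z=a-b/2<a_l$ (since $a_s-b<a_l$). Hence (i) acting with $A^{(2)}_{k,\de}(x)$ under the integral with contour $\R$ moves the upward pole sequence down across the contour (to height $a-b/2-a_l<0$), so the fixed-contour integral at the shifted $x$-arguments no longer represents $J_2$ there (Prop.~4.1 defines it via a \emph{different} contour); and (ii) the change of variables needed to transfer $\exp(ia_{-\de}\partial_z)$ onto $J_1$ moves the contour from $\R$ to $\R+ia_{-\de}$, and shifting back down to $\R$ crosses the same pole, producing residue terms your computation ignores. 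This is not a dominated-convergence issue, as your first paragraph suggests, but a pole-crossing issue, and it already occurs in the most regular situation ($b$, $x$ real).

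The missing ingredients, which the paper supplies, are: first reduce by holomorphy (in $b$ and in $x$) to $b\in(0,a_s)$ and $|\im x_j|<\epsilon/2$ with $\epsilon=a_s-b$, so that the pole-free horizontal strip around the relevant contour has width $a_s+a_l-b=a_l+\epsilon>a_l$; and then, \emph{before} acting with the \adiffop, deform the $z$-contour down to $C_-=\R-ia_l/2$, i.e.\ to the middle of that strip (legitimate by the uniform decay \eqref{S2b}). With the contour at $C_-$, the downward $x$-shifts by at most $a_l$ keep the poles strictly above $C_-$, the change of variables moves the contour up to $C_-+ia_{-\de}$, which stays strictly below the upward poles, and the final shift back to $\R$ sweeps only the pole-free strip. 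Without this preliminary deformation (or some equivalent asymmetric placement of the contour), the identity \eqref{A2eig} cannot be obtained by your fixed-contour argument, so as written the proof does not go through.
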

\begin{proof}
Thanks to analyticity of $J_2$ in $D_2$ and the restriction to~$D_{2,s}$, the shifts of $x_j$ by $- i a_{-\de}$ are well defined. Also, by analyticity in the strip~$\re b\in (0,a_s)$, it suffices to prove~\eqref{A2eig} for $b\in(0,a_s)$. Then the distance~$a_s+a_l-b$ between the upward and downward pole sequences in~\eqref{Ipo} is larger than~$a_l$ by $\epsilon \equiv a_s-b>0$. Moreover, by analyticity in~$x$, we need only show~\eqref{A2eig} for~$|\im x_j|<\epsilon/2$, $j=1,2$ (say).

Letting $x$ vary over the region just detailed, we now proceed in a number of steps to rewrite the lhs of~\eqref{A2eig}. First, we note that for this $x$-region the $z$-contour~$\R$ in~\eqref{J2} is not only permissible, but can be moved down to
\be\label{C-}
C_-\equiv \R-ia_l/2,
\ee
without meeting any poles. Second, after shifting the A$\De$O through the plane wave up front (picking up a factor~$e_{\de}(2ky_2)$), we are allowed to act with it under the integral sign, since the poles stay away from~$C_-$ while they are shifted down by a distance~$a_{-\de}$. Thus the lhs equals
\be\label{lhs1}
e_{\de}(2ky_2)e^{i\alpha y_2(x_1+x_2)}\int_{C_-}dzA^{(2)}_{k,\delta}(x)S^{\sharp}_2(x,z)J_1(z,y_1-y_2).
\ee
 Third, we use the kernel identity~\eqref{key} to trade the action of $ A^{(2)}_{k,\de}(x)$ on~$S_2^{\sharp}(x,z)$ for that of~$\exp(ia_{-\de}d/dz)$ for $k=2$ and of~$\exp(ia_{-\de}d/dz) +{\bf 1}$ for~$k=1$. Then the integral in~\eqref{lhs1} can be written as
\be\label{lhs2}
\int_{C_-}dzS^{\sharp}_2(x,z+ia_{-\de})J_1(z,y_1-y_2)+(2-k)\int_{C_-}dzI_2(x,y,z).
\ee
Fourth, we change variables~$z\to z-ia_{-\de}$ in the first integral of~\eqref{lhs2}, which yields
\be\label{lhs3}
\int_{C_-+ia_{-\de}}dzS^{\sharp}_2(x,z)J_1(z-ia_{-\de},y_1-y_2)+(2-k)\int_{C_-}dzI_2(x,y,z).
\ee
 Fifth, we use the eigenvalue equation (recall~\eqref{J1})
\be
J_1(z-ia_{-\de},y_1-y_2)=e_{\de}(2(y_1-y_2))J_1(z,y_1-y_2),
\ee
and pull out the eigenvalue from the integration. Sixth and last, we shift the contours $C_-+ia_{-\de}$ and $C_-$ (for~$k=1$) back to~$\R$ without meeting poles, obtaining~\eqref{A2eig}. Thus the proof is complete.
\end{proof}

Next, let us take $z\to z+(x_1+x_2)/2$ in the integral~\eqref{J2}. This yields
\be\label{J2J}
J_2(b;(x_1,x_2),(y_1,y_2))=\exp(i\alpha (x_1+x_2)(y_1+y_2)/2)J(b;x_1-x_2,y_1-y_2),
\ee
where (recall the reflection equation~\eqref{refl})
\be\label{J}
J(b;u,v)\equiv \int_{\R}dze^{i\alpha zv}\prod_{\de_1,\de_2=+,-}G(\de_1z+\de_2 u/2-ib/2),\ \ \ \ u,v\in\R,\ \ \ b\in S_a.
\ee
From this it is clear that~$J$ is even in~$u$ and~$v$. Furthermore, it follows that~$J_2$ is invariant under the interchanges~$x_1\leftrightarrow x_2$ and~$y_1\leftrightarrow y_2$. Note that the first symmetry property is already plain from the defining formula~\eqref{J2}, whereas the second one is not immediate from~\eqref{J2}.

Comparing \eqref{J} to Eq.~(3.51) in~\cite{Rui11}, we see that $J$ is related to the relativistic conical function studied in~\cite{Rui11} via
\be\label{JcR}
J(b;x,y)=\sqrt{a_+a_-}G(ia-2ib)\cR(b;x,y)\prod_{\de=+,-}G(\de y-ia+ib).
\ee
From this a rather complete picture of $J$ and $J_2$ can be deduced. In order to achieve analytic control on the second step in the scheme, however, we need a uniform bound on the 2-particle eigenfunction~$J_2$, which exhibits its exponential decay for $|\re(x_1-x_2)|\to\infty$. This bound is not immediate from the results obtained in~\cite{Rui11}. 

\begin{proposition}
Let $y\in\R^2$.  For any $(b,x)\in D_2$ we have
\bea\label{J2b1}
|J_2(b;x,y)| & < & C(b,|\im (x_1-x_2)|)\exp(-\alpha \im (x_1+x_2)(y_1+y_2)/2)
\nonumber \\
& & \times\frac{ \re(x_1-x_2)}{\sinh(\gamma\,\re(x_1-x_2))},\ \ \ \ \ \ \gamma =\alpha\re b\,/2,
\eea
where $C$ is continuous on $S_a\times[0,2a-\re b)$. 
\end{proposition}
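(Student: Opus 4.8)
The plan is to bound the integral \eqref{J2} directly, using the factorized form \eqref{J2J}--\eqref{J} together with the decay estimates on the $G$-ratio from Lemma~B.1. First I would pull out the center-of-mass factor: by \eqref{J2J} it suffices to prove that
\be
|J(b;u,v)| < C(b,|\im u|)\,\frac{\re u}{\sinh(\gamma\,\re u)},\ \ \ \gamma=\alpha\re b/2,
\ee
uniformly for $v\in\R$ and $(b,u)$ with $\re b\in(0,2a)$, $|\im u|<2a-\re b$; the exponential prefactor in \eqref{J2b1} is exactly $|\exp(i\alpha(x_1+x_2)(y_1+y_2)/2)|$ evaluated at real $y$. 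Since $J$ is even in $u$ (as noted after \eqref{J}) and $\re u/\sinh(\gamma\,\re u)$ is even and bounded, I may assume $\re u\ge 0$, and in fact it is enough to treat $\re u$ large, the compact range of $\re u$ being absorbed into the continuous constant~$C$.

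Next I would choose the integration contour adaptively. For fixed $(b,u)\in D_2$, by Proposition~4.1 (or rather the contour-shift freedom granted by \eqref{Ipo} and \eqref{S2b}) I can move the $z$-contour in \eqref{J} to $\R+i\eta$ for any $\eta$ with $\R+i\eta$ separating the upward $G$-pole sequences at $z=\pm u/2-ib/2+ia+z_{kl}$ from the downward ones at $z=\pm u/2+ib/2-ia-z_{kl}$. A symmetric and essentially optimal choice is $\eta=0$ when $\re u$ is moderate, but to extract the $1/\sinh(\gamma\,\re u)$ decay one argues as follows: split the two pairs of $G$-factors according to the sign in $\de_1 z+\de_2 u/2$, and estimate $|G(\de_1 z+\de_2 u/2-ib/2)|$ on the contour via the large-$|\re z|$ asymptotics of Lemma~B.1, which give exponential decay $O(\exp(-\alpha\re b\,|\re z|/2))$ from each of the four factors at rate governed by $\gamma$, but shifted in $\re z$ by $\pm\re u/2$. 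Writing $|I_2|\le \text{const}\cdot\exp(-\gamma(|w+\re u/2|+|w-\re u/2|))\cdot(\text{bounded factor})$ with $w=\re z$, the integral $\int_{\R}\exp(-\gamma(|w+\re u/2|+|w-\re u/2|))\,dw$ evaluates (by splitting into the three intervals $w<-\re u/2$, $-\re u/2<w<\re u/2$, $w>\re u/2$) to $(\re u + 1/\gamma)\exp(-\gamma\,\re u)$ up to constants, which is comparable to $\re u/\sinh(\gamma\,\re u)$ for $\re u$ bounded away from $0$ — and this is exactly the claimed bound. The bounded factor collects the subdominant pieces in Lemma~B.1 together with a factor controlling the finitely many $G$-poles that approach the contour as $|\im u|\uparrow 2a-\re b$; its supremum over $v\in\R$ and over $b$, $\im u$ in compact subsets of $S_a\times(-(2a-\re b),2a-\re b)$ is finite and depends continuously on $(b,|\im u|)$, which furnishes the function $C$.

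The main obstacle I anticipate is making the estimate \emph{uniform} in $\im u$ all the way up to the boundary $|\im u|=2a-\re b$ of $D_2$. Near that boundary two of the four $G$-factors in \eqref{J} develop poles that pinch toward the real contour (or, after the center-of-mass shift, force the contour into a narrow corridor), so the naive bound on the ``bounded factor'' blows up. The remedy is the same adaptive contour idea used in Proposition~4.1: choose $\eta$ depending on $\im u$ so the contour stays a fixed positive distance from the nearest pole sequence, and then check that the extra exponential factor $\exp(\alpha\eta\,v)$ picked up from the plane wave $e^{i\alpha z v}$ is harmless because one may also reflect $v\to -v$ (using evenness of $J$ in $v$) and choose the sign of $\eta$ accordingly, or because the $v$-dependence only improves the decay. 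A clean way to package this is to prove the bound first on the compact $\im u$-range $|\im u|\le 2a-\re b-\epsilon$ with a constant depending on $\epsilon$, then verify that this dependence is continuous and that $\epsilon\downarrow 0$ is accommodated by allowing $C$ to be merely continuous (hence possibly unbounded) on the half-open interval $[0,2a-\re b)$ — which is precisely the form of the assertion. Everything else is bookkeeping with Lemma~B.1 and the elementary integral above.
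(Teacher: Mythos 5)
Your plan is essentially the paper's proof: reduce to $J(b;u,v)$ via \eqref{J2J}, bound the two $G$-ratios on the integration contour by Lemma~B.1 (the plane wave having modulus one for real $v$), pull out the $C_r$-product as a function continuous in $(b,|\im u|)$, and integrate the remaining $\cosh$-factors. Two remarks. First, your ``main obstacle'' is moot: the condition $|\im u|<2a-\re b$ is exactly what keeps the upward/downward $G$-pole sequences of the integrand in \eqref{J} strictly above/below the \emph{fixed} real contour, so no adaptive shift $\eta$ is needed anywhere in $D_2$; the constant is simply allowed to blow up continuously as $|\im u|\uparrow 2a-\re b$, which the half-open interval $[0,2a-\re b)$ in the statement permits (and a shift to $\R+i\eta$ would anyway reintroduce the factor $e^{-\alpha\eta v}$, threatening uniformity in $y\in\R^2$ unless handled with the sign-matching you sketch). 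Second, where you replace $1/\cosh$ by exponentials and argue the result is ``comparable'' to $\re u/\sinh(\gamma\,\re u)$, the paper instead evaluates $\int_{\R}dz\,[\cosh(\gamma(z-\re u/2))\cosh(\gamma(z+\re u/2))]^{-1}$ exactly (by residues, or as the $p\to 0$ limit of Lemma~C.1 with $N=1$), which yields the stated bound directly; your version is cruder but the constants are absorbed into $C(b,|\im u|)$, so it still delivers the claim.
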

\begin{proof}
In view of~\eqref{J2J}, this amounts to a bound on $J(b;u,v)$.
Clearly, the representation~\eqref{J} can be analytically continued to the strip~$|\im u|<2a-\re b$, and for $u$ in this strip we have
\be\label{Jb}
|J(b;u,v)|\le \int_{\R}dz |r(b;z+u/2)r(b;z-u/2)|.
\ee
From Lemma~B.1 we now deduce
\be\label{JbC}
|J(b;u,v)|\le \int_{\R}dz
\prod_{\de=+,-}\frac{C_r(b,z-\de u/2)}{\cosh(\gamma( z-\de\re u/2))},
\ee
where the $C_r$-product  is bounded above by a function $C(b,|\im u|)$ that is continuous on $S_a\times [0,2a-\re b)$.
The remaining integral 
\be\label{rem}
 \int_{\R} \frac{dz}{\cosh(\gamma\, (z-\re u/2))\cosh(\gamma \,(z+\re u/2))},
\ee
can be easily evaluated by a residue calculation. (Alternatively, it follows from the $N=1$ case of Lemma~C.1 by letting $p$ converge to zero.) Hence we arrive at~\eqref{J2b1}.
\end{proof}

Thus far we have assumed that $y$ is real. As a consequence, the bound~\eqref{Jb} is independent of $v=y_1-y_2$. But when we combine~\eqref{J2J}--\eqref{J} with~\eqref{JbC}, it becomes obvious that $J_2$ has an analytic continuation to all $y\in\C^2$ satisfying $|\im (y_1-y_2)|<\re b$. Moreover, we can readily estimate $J_2$ in the larger holomorphy domain
\be
\cD_2\equiv \{ (b,x,y)\in S_a\times\C^2\times \C^2 \mid (b,x)\in D_2, |\im (y_1-y_2)|<\re b \}.
\ee

\begin{proposition}
For any $(b,x,y)\in \cD_2$ with $\im(y_1-y_2)\ne 0$ we have
\bea\label{J2b2}
|J_2(b;x,y)| & <  &C(b,|\im (x_1-x_2)|)\exp(-\alpha \im [(x_1+x_2)(y_1+y_2)]/2)
\nonumber \\
&  & \times \frac{\sinh(\alpha \im(y_1-y_2)\re(x_1-x_2)/2)}{\sin (\pi\im(y_1-y_2)/\re b)\sinh(\gamma\,\re(x_1-x_2))},
\eea
where $\gamma =\alpha\re b\,/2$ and $C$ is continuous on $S_a\times[0,2a-\re b)$. 
\end{proposition}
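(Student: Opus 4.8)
The idea is to run exactly the argument of Proposition~4.4, but now keeping the $v$-dependence of the plane wave in~\eqref{J}. As there, write $u=x_1-x_2$, $v=y_1-y_2$, and use~\eqref{J2J} to reduce everything to a bound on $J(b;u,v)$. Starting from the contour representation~\eqref{J}, the point is that for $\im v\ne 0$ with $|\im v|<\re b$ we may no longer simply take absolute values inside the integral and discard the exponential; instead we must retain $|e^{i\alpha zv}|=e^{-\alpha z\,\im v}$ and estimate
\[
|J(b;u,v)|\le \int_{\R}dz\, e^{-\alpha z\,\im v}\,|r(b;z+u/2)|\,|r(b;z-u/2)|.
\]
Invoking Lemma~B.1 as in the proof of Prop.~4.4, the product of $r$-factors is bounded by $C(b,|\im u|)$ times $\bigl(\cosh(\gamma(z-\re u/2))\cosh(\gamma(z+\re u/2))\bigr)^{-1}$, with $\gamma=\alpha\re b/2$ and $C$ continuous on $S_a\times[0,2a-\re b)$. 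So the whole estimate comes down to evaluating
\[
\int_{\R}\frac{e^{-\alpha z\,\im v}\,dz}{\cosh(\gamma(z-\re u/2))\cosh(\gamma(z+\re u/2))}.
\]

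This last integral can be computed by residues. Since $|\im v|<\re b$ we have $|\alpha\,\im v|<2\gamma$, so $e^{-\alpha z\,\im v}$ grows slower than each $\cosh$ factor and the integrand decays at $\pm\infty$; closing the contour in the appropriate half-plane and summing the geometric series of residues at the poles $z=\pm\re u/2+i a_\de(2m+1)/2$ (for the relevant $\de$) produces precisely the closed form
\[
\frac{4\pi}{\alpha\,\re b}\cdot\frac{\sinh(\alpha\,\im v\,\re u/2)}{\sin(\pi\,\im v/\re b)\,\sinh(\gamma\,\re u)}
\]
up to an explicit constant that can be absorbed into $C$. (As a sanity check, letting $\im v\to0$ recovers the factor $\re u/\sinh(\gamma\,\re u)$ of Prop.~4.4 by l'Hôpital, and the singularity at $\im v\in\re b\,\Z$ is exactly the $\sin(\pi\,\im v/\re b)^{-1}$ factor, consistent with the holomorphy strip $|\im v|<\re b$.) Combining this with the center-of-mass prefactor $\exp(i\alpha(x_1+x_2)(y_1+y_2)/2)$ from~\eqref{J2J}, whose modulus contributes $\exp(-\alpha\,\im[(x_1+x_2)(y_1+y_2)]/2)$, yields~\eqref{J2b2}.

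\textbf{Main obstacle.} There is no deep difficulty here — the structure mirrors Prop.~4.4 — but the one genuinely careful point is the residue computation with the exponential weight: one must check that the contour at infinity contributes nothing (which uses the strict inequality $|\alpha\,\im v|<2\gamma$, equivalently $|\im v|<\re b$), choose the half-plane consistently with the sign of $\im v$, and correctly resum the two interleaved geometric series of residues coming from the two $\cosh$ factors so that the $\re u$-dependence collapses to the stated $\sinh(\alpha\,\im v\,\re u/2)/\sinh(\gamma\,\re u)$. The analytic-continuation claim that $J_2$ is holomorphic on $\cD_2$ is immediate once one observes, as in the text, that the bound~\eqref{JbC} is locally uniform in $v$ on the strip $|\im v|<\re b$, so differentiation under the integral sign in~\eqref{J} is justified.
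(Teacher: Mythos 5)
Your proposal is correct and takes essentially the same route as the paper: reduce via \eqref{J2J} to a bound on $J(b;u,v)$, keep the factor $\exp(-\alpha z\,\im v)$ under the absolute value, invoke Lemma~B.1 to get the bound \eqref{JbvC}, and evaluate the remaining integral in closed form (the paper notes this follows either from a residue calculation or from Lemma~C.1 with $N=1$, and your stated closed form agrees with the latter). The only slip is your stated pole locations $z=\pm\re u/2+ia_\de(2m+1)/2$ — the poles of $\cosh(\gamma(z\mp\re u/2))$ lie at $z=\pm\re u/2+i(2m+1)a_+a_-/2\re b$ — but this is immaterial since your final evaluation is the correct one.
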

\begin{proof}
Once more, this is really a bound on $J(b;u,v)$, cf.~\eqref{J2J}--\eqref{J}. 
In this case we deduce from Lemma~B.1 that we have
\be\label{Jbv}
|J(b;u,v)|\le \int_{\R}dz\exp(-\alpha z \im v)
\prod_{\de=+,-}\frac{C_r(b,z-\de u/2)}{\cosh(\gamma( z-\de\re u/2))},
\ee
so now we obtain
\be\label{JbvC}
|J(b;u,v)|\le C(b,|\im u|) \int_{\R}dz \frac{\exp(-\alpha z\im v)}{\cosh(\gamma\, (z-\re u/2))\cosh(\gamma \,(z+\re u/2))},\ \ |\im v|<\re b,
\ee
where $C$ is continuous on $S_a\times [0,2a-\re b)$. The integral can be evaluated by a residue calculation, or follows directly from Lemma~C.1 with $N=1$. From this result we obtain the bound~\eqref{J2b2}.
\end{proof}

Clearly, we can view Prop.~4.3 as a corollary of Prop.~4.4, obtained by letting $\im (y_1-y_2)$ converge to 0. We have distinguished the two settings for expository reasons. Indeed, we need only invoke Prop.~4.3 in the next section to obtain the $N=3$ counterparts of Propositions~4.1 and 4.2, dealing with the joint eigenfunction properties for real $y\in\R^3$.  Then we obtain the $N=3$ analogs of Propositions~4.3 and 4.4, which suffice for the inductive step taken in Section~6. 

For the cases $N=2$ and $N=3$, we can readily derive further significant information, which we were unable to generalize inductively thus far. To start with, in the present $N=2$ case we can readily obtain further analyticity properties of~$J_2$, and we continue by doing so.

To this end, we fix $y\in\C^2$ with $|\im (y_1-y_2)|<\re b$. A moment's thought reveals that suitable contour deformations imply that $J_2$ has a holomorphic continuation to all~$(b,x)\in S_a\times\C^2$ such that $x_1-x_2+ib$  and $x_2-x_1+ib$ stay away from the `cut' $i[2a,\infty)$. Indeed, 
we need only indent the contour so as to ensure that the upward and downward pole sequences~\eqref{Ipo} remain above and below the contour, resp. Furthermore, contour pinching for $\pm \re(x_1-x_2)=\im b$ can be avoided by requiring $|\im (x_1-x_2)|<2a-\re b$.
We now summarize this analysis.

\begin{proposition}
The function~$J_2(b;x,y)$ is holomorphic
in $(b,x,y)$ on the extended domain
\be\label{cD2ex}
\cD_2^e\equiv \{ ( b,x,y)\in S_a\times\C^2\times \C^2 \mid  \pm (x_1-x_2) +ib \notin i [2a,\infty), |\im (y_1-y_2)|<\re b\}.
\ee
\end{proposition}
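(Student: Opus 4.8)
The plan is to start from the fixed-contour representation~\eqref{J2} with $z$-contour~$\R$, valid for real $x$ and $b\in S_a$, and argue by contour deformation that the integral continues holomorphically to the larger domain~$\cD_2^e$. The integrand~$I_2(b;x,y,z)$ in~\eqref{I2} is, for each fixed $(b,x,y)$ with $|\im(y_1-y_2)|<\re b$, a meromorphic function of $z$ whose only singularities are the two upward $G$-pole sequences and the two downward $G$-pole sequences located at~\eqref{Ipo}; the exponential factor $\exp(i\alpha z(y_1-y_2))$ together with the $G$-ratio decay from Lemma~B.1 guarantees that, along any horizontal contour on which $|\im(y_1-y_2)|<\re b$ is respected, the integrand decays exponentially as $\re z\to\pm\infty$, so that deforming the contour through compact regions in which no poles are crossed does not change the value of the integral. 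The upward sequences emanate from $z=x_j-ib/2+ia+z_{kl}$ and the downward ones from $z=x_j+ib/2-ia-z_{kl}$, $j=1,2$, $k,l\in\N$.

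First I would set up the deformation in the center-of-mass and relative variables, using~\eqref{J2J}--\eqref{J}: it suffices to continue $J(b;u,v)$ in $u$ for $|\im v|<\re b$, and the center-of-mass factor $\exp(i\alpha(x_1+x_2)(y_1+y_2)/2)$ is entire. For $J(b;u,v)$ the integrand has upward poles starting at $z=\pm u/2-ib/2+ia+z_{kl}$ and downward poles starting at $z=\pm u/2+ib/2-ia-z_{kl}$. As $u$ varies in $\C$, I would indent the straight contour $\R$ locally around each of the (finitely many, in any bounded $\re z$-window) lowest upward poles and highest downward poles so that the upward family stays strictly above the contour and the downward family strictly below it. This is possible precisely as long as no upward pole coincides with a downward pole, i.e. as long as the contour is not pinched. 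The lowest upward pole near $z=\pm u/2$ sits at height $\im(\pm u/2)-\re b/2+a$ (plus $\im u/2$ wiggle and the nonnegative imaginary shifts $\im z_{kl}\ge 0$), and the highest downward pole near $z=\mp u/2$ sits at height $\im(\mp u/2)+\re b/2-a$; a pinch between the upward sequence from $+u/2$ and the downward sequence from $-u/2$ can occur only when $\re u$ is close to the value $\im b$ making real parts coincide, and is then avoided exactly by the condition $|\im u|<2a-\re b$. Translating back via $u=x_1-x_2$, this is the condition $|\im(x_1-x_2)|<2a-\re b$, while the requirement that an upward pole from $+u/2$ never actually lands on the contour amounts to $u+ib\notin i[2a,\infty)$, and symmetrically $-u+ib\notin i[2a,\infty)$ for the poles associated with $-u/2$. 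These are exactly the conditions defining $\cD_2^e$ in~\eqref{cD2ex} (the condition $|\im(x_1-x_2)|<2a-\re b$ being implied by $\pm(x_1-x_2)+ib\notin i[2a,\infty)$ together with $\re b>0$; if not, I would simply keep it as an explicit part of the argument, noting it is already built into $D_2$).

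Having fixed, for a given base point $(b_0,x_0,y_0)\in\cD_2^e$, an admissible indented contour $\Gamma$ separating the two pole families, I would observe that $\Gamma$ remains admissible for all $(b,x,y)$ in a small polydisc around $(b_0,x_0,y_0)$, that the integrand depends holomorphically on $(b,x,y)$ for $z\in\Gamma$, and that the exponential decay in $\re z$ is locally uniform there (Lemma~B.1 and $|\im(y_1-y_2)|<\re b$). By Morera's theorem / differentiation under the integral sign, $\int_\Gamma I_2\, dz$ is holomorphic on that polydisc; and by Cauchy's theorem it agrees with the original $\R$-contour integral wherever the latter is also defined, so it provides the holomorphic continuation of $J_2$ near $(b_0,x_0,y_0)$. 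Since $\cD_2^e$ is connected and the base point was arbitrary, $J_2$ is holomorphic on all of $\cD_2^e$, which is~\eqref{cD2ex}.

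\textbf{Main obstacle.} The only genuinely delicate point is the bookkeeping of which pole sequences can pinch the contour and producing, cleanly, a single indented contour that works on a full neighborhood of any point of $\cD_2^e$ — i.e. verifying that the stated inequalities $\pm(x_1-x_2)+ib\notin i[2a,\infty)$ and $|\im(y_1-y_2)|<\re b$ are exactly the obstructions to separating the upward from the downward $G$-pole families (using $\im z_{kl}\ge 0$ and the explicit pole list~\eqref{Gpo}--\eqref{Gze}). Everything else — exponential decay, local uniformity, Morera — is routine. A secondary, purely expository choice is whether to run the argument directly in the $x$-variables or, as I would prefer, in the relative variable $u=x_1-x_2$ via~\eqref{J}, which halves the number of pole sequences to track.
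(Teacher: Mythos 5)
Your overall strategy is the same as the paper's: fix $y$ with $|\im (y_1-y_2)|<\re b$, use Lemma~B.1 to keep exponential decay on the contour tails, and deform/indent the $z$-contour so that the upward $G$-pole sequences in~\eqref{Ipo} stay above it and the downward ones below it, holomorphy then following from local uniformity and Morera. The gap lies in the pinching bookkeeping, which is the entire content of the extension. The genuine obstruction is a collision between the upward sequence issuing from one of $x_1,x_2$ and the downward sequence issuing from the other: by~\eqref{Ipo}, the upward poles from $x_j$ meet the downward poles from $x_k$ ($j\ne k$) iff $x_k-x_j+ib=2ia+z_{lm}+z_{l'm'}$ for some $l,m,l',m'\in\N$, i.e.\ iff $x_k-x_j+ib$ hits (lattice points of) the cut $i[2a,\infty)$; same-center collisions cannot occur since $\re b<2a$. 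Hence the two conditions $\pm(x_1-x_2)+ib\notin i[2a,\infty)$ are \emph{exactly} the no-pinch conditions, and no bound on $\im(x_1-x_2)$ is required. Your account inverts this: you assert that the pinch is ``avoided exactly by $|\im u|<2a-\re b$'' (the exact condition at the dangerous alignment $\re u=\mp\im b$ is the one-sided $\pm\im u>-(2a-\re b)$, and away from that alignment no condition at all), and you attribute $u+ib\notin i[2a,\infty)$ to the harmless, indentable event of a single pole landing on the contour, whereas it is precisely the collision condition for the pair (upward from $-u/2$, downward from $+u/2$). Moreover the parenthetical claim that $|\im(x_1-x_2)|<2a-\re b$ is implied by $\pm(x_1-x_2)+ib\notin i[2a,\infty)$ is false: for $\re(x_1-x_2)\ne\pm\im b$ the cut conditions impose no restriction whatever on $\im(x_1-x_2)$, and such points lie in $\cD_2^e$.

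Consequently your declared fallback, keeping $|\im(x_1-x_2)|<2a-\re b$ as a hypothesis, would prove holomorphy only on (essentially) the domain $\cD_2$ of Prop.~4.4, missing precisely the new region $|\im(x_1-x_2)|\ge 2a-\re b$ off the cuts, which is the point of Prop.~4.5. In that region the lowest upward poles lie \emph{below} the real axis and the highest downward poles \emph{above} it, so ``local indentations of $\R$'' do not suffice: one needs finite detours that dip under the upward family near $\re z=\re x_j+\im b/2$ and climb over the downward family near $\re z=\re x_k-\im b/2$; these exist precisely because the two abscissas differ once the cut conditions hold, and only finitely many poles of each family sit on the wrong side of $\R$, so the detours are compact and the tail estimate (Lemma~B.1 plus $|\im(y_1-y_2)|<\re b$) is unaffected. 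With the corrected identification --- pinching occurs iff $\pm(x_1-x_2)+ib$ meets $i[2a,\infty)$ --- your deformation/Morera argument does go through and coincides with the paper's (very brief) proof; as written, however, the step matching the geometric obstruction to the defining inequalities of~\eqref{cD2ex} is wrong, and the result actually established is strictly weaker than the proposition.
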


As announced, the A$\De$Es~\eqref{A2eig} can now be analytically continued to the subset 
\be\label{cD2s}
\cD_{2,s}^e\equiv \{ (b,x,y)\in\cD_2^e \mid  (b,(x_1-i\eta_1,x_2-i\eta_2),y)\in\cD_2^e,\forall (\eta_1,\eta_2)\in [0,a_l]^2\},
\ee
for which the argument shifts stay in the holomorphy domain $\cD_2^e$. In particular, we have
\be\label{D2eig}
 \{ (b,x,y)\in (0,2a)\times \C^2\times \R^2 \mid \re x_1\ne \re x_2\} \subset \cD_{2,s}^e.
\ee

Shifting the contour beyond the poles of the integrand, it is possible to show global meromorphy in~$x$, but to detail this is beyond the scope of the present paper. We do want to round off this section, however, by deriving a generalization of Prop.~4.4 to the extended holomorphy domain~$\cD_2^e$. By now it will be clear that this amounts to a further bound on~$J(b;u,v)$ on its holomorphy domain
\be
\cD_J\equiv \{ (b,u,v)\in S_a\times\C\times\C\mid \pm u +ib \notin i [2a,\infty), |\im v|<\re b\},
\ee
cf.~\eqref{J2J}--\eqref{J}. 

\begin{proposition}
For all $(b,u,v)\in\cD_J$ we have
\be\label{Jbex}
|J(b;u,v)|< C(b, u,v) \frac{\sinh(\alpha\im v\,\re u/2)}{\sin (\pi \im v/\re b)\sinh(\gamma\re u)}.
\ee
Here,  $C(b, u,v)$ is a continuous function on $\cD_J$, which satisfies
\be\label{C2as}
C(b,u,v)=O(1),\ \ \ |\re u|\to\infty,
\ee
where the implied constant can be chosen uniformly for $(b,\im u,\re v)$ varying over compact subsets of~$S_a\times\R\times \R$.
\end{proposition}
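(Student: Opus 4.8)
The plan is to bound $J(b;u,v)$ on all of $\cD_J$ directly from the integral representation~\eqref{J}, in the spirit of the proof of Prop.~4.4 but along a contour adapted to the larger domain; the asserted bound on $J_2$ then follows via~\eqref{J2J}. By the evenness of $J$ in $u$ and in $v$ noted after~\eqref{J}, we may restrict to $\re u\ge 0$, $\im v\ge 0$, and $J$ is already holomorphic on $\cD_J$ (via~\eqref{J2J} and Prop.~4.5), so only a contour estimate is needed. As a function of $z$ the integrand of~\eqref{J} carries, besides the plane wave, four $G$-pole half-lines: two upward ones running to $+i\infty$ along $\re z=\pm\re u/2+\im b/2$ with lowest poles at heights $\pm\im u/2+a-\re b/2$, and two downward ones running to $-i\infty$ along $\re z=\pm\re u/2-\im b/2$ with highest poles at heights $\pm\im u/2-a+\re b/2$. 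The condition $\pm u+ib\notin i[2a,\infty)$ defining $\cD_J$ is exactly what guarantees that no upward and no downward half-line obstruct a common contour; any admissible contour computes the holomorphic continuation of~\eqref{J}.

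For $(b,u,v)\in\cD_J$ I would pick such an admissible contour $C=C(b,u,v)$, realized as a Lipschitz graph $z=\tilde z+ih(\tilde z)$ over the real line that stays clear of the poles above, with $h$ bounded. When $|\im u|<2a-\re b$ one may take $h$ constant, which is the situation of Prop.~4.4; in general $h$ ramps once, between heights $\approx-\im u/2$ and $\approx\im u/2$, inside the pole-free vertical strip separating the left from the right half-lines, and has small bumps near the (at most four) gates. Along $C$, combining the four $G$-factors by the reflection equation~\eqref{refl} as in~\eqref{Jb} and invoking Lemma~B.1 as in~\eqref{JbC}, now at the complex points $\tilde z+ih(\tilde z)\mp u/2$, one gets the pointwise bound
\be
\big|\,e^{i\alpha zv}\,r(b;z+u/2)\,r(b;z-u/2)\,\big|\,|dz/d\tilde z|\ \le\ \tilde C(b,u,v)\,\frac{e^{-\alpha\tilde z\im v}}{\cosh(\gamma(\tilde z-\re u/2))\cosh(\gamma(\tilde z+\re u/2))},
\ee
where $\tilde C(b,u,v)=\sup_{C}\big[\,|dz/d\tilde z|\,e^{-\alpha\im z\,\re v}\,C_r(b,z+u/2)C_r(b,z-u/2)\,\big]$ is finite and, if $C$ is chosen to vary continuously with $(b,u,v)$, continuous on $\cD_J$. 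Integrating over $\tilde z\in\R$ and using the $N=1$ case of Lemma~C.1, which evaluates the remaining integral to a constant multiple of $\sinh(\alpha\im v\,\re u/2)/[\sin(\pi\im v/\re b)\sinh(\gamma\re u)]$, yields~\eqref{Jbex} with $C(b,u,v)$ proportional to $\tilde C(b,u,v)$. Moreover $\tilde C(b,u,v)=O(1)$ as $|\re u|\to\infty$ uniformly for $(b,\im u,\re v)$ in a compact subset of $S_a\times\R\times\R$: there the four half-lines separate, the contour can be taken with $h$, $h'$ and $\im z$ bounded uniformly, and the supremum is then controlled by the uniform-in-$\re$ bound on $C_r$ provided by Lemma~B.1. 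This is~\eqref{C2as}.

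The main obstacle is the contour construction: one must produce $C(b,u,v)$ that is admissible throughout $\cD_J$ (this is precisely where $\pm u+ib\notin i[2a,\infty)$ enters), is a graph over $\R$ staying away from the poles, depends continuously on $(b,u,v)$, and — for the uniform statement — has $h$, $h'$ and $\im z$ bounded uniformly as $|\re u|\to\infty$ on compacts of $(b,\im u,\re v)$. The bookkeeping of the four pole half-lines and their bumps is elementary but fiddly, and some care is needed where the admissible contour type changes — across $|\im u|=2a-\re b$ and for $\re u$ small — to ensure that the single prefactor $\tilde C(b,u,v)$ is genuinely continuous on $\cD_J$ rather than merely locally bounded.
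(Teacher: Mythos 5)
Your analytic core is the same as the paper's: deform the contour, bound the two $r$-factors via Lemma~B.1 at the complex contour points, and evaluate the remaining integral by the $N=1$ case of Lemma~C.1, which produces exactly the comparison function in~\eqref{Jbex}. The difference is organizational, and it matters for completeness. You aim at a single admissible contour family $C(b,u,v)$ varying continuously over \emph{all} of $\cD_J$, and you correctly flag as ``the main obstacle'' the construction of such a family together with the continuity of the resulting constant $\tilde C$ across regime changes (flat versus ramped contours, narrow gates near the excluded rays, small $\re u$). That part is left as a sketch, so as written the proof is not complete. The paper avoids this difficulty altogether by splitting into two regimes: for $|\re u|\le R$ the bound~\eqref{Jbex} with a continuous $C$ is immediate from holomorphy (hence continuity) of $J$ and positivity of the comparison function — no contour and no decay statement is needed there — while the quantitative content of the proposition lives only in the regime $|\re u|>R=|\im b|+2a+K$, $|\im u|\le K$, where an explicit piecewise-linear contour (flat at height $0$ outside $[0,\re u+R]$, flat at height $\im u$ near $\re u$, with slopes bounded by $1$) is trivially admissible; Lemma~B.1 then bounds $C_r(b,z)C_r(b,z-u)$ by a constant depending only on $b,K$, and Lemma~C.1 gives~\eqref{Jbex} and the uniformity~\eqref{C2as}. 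So your flagged obstacle is self-inflicted: since \eqref{C2as} only concerns $|\re u|\to\infty$ with $(b,\im u,\re v)$ in compacts, a continuously varying global contour is never needed, and adopting the two-regime split both closes the gap in your argument and shortens it. Your description of the four pole half-lines and of why the defining condition $\pm u+ib\notin i[2a,\infty)$ of $\cD_J$ is precisely the non-obstruction condition for a graph contour is correct and is a nice supplement to the paper's terser treatment, but it is not needed for the stated bound.
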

\begin{proof}
From holomorphy of $J$ in~$\cD_J$ it is plain that a bound of the form~\eqref{Jbex} is valid for~$|\re u|\le R$, with $R>0$ arbitrary and $C(b,u,v)$ continuous on~$\cD_J$. Fixing $K>0$ and choosing
\be
R\equiv |\im b|+2a+K,
\ee
(say), it remains to obtain an estimate~\eqref{Jbex} for~$b\in S_a$ and~$u\in\C$ satisfying~$|\re u|> R$ and~$|\im u|\le K$,
with $C(b,u,v)$ replaced by a function that only depends continuously on $b,K$ and $\re v$.

By evenness, we need only consider the case
$\re u> R$. Then we can represent $J$ as
\be\label{Jc}
J(b;u,v)=\int_{\cC}dz\exp(i\alpha zv)r(b;z)r(b;z-u),
\ee
where $\cC$ is a piecewise linear contour $\{ s+if_K(s)\mid s\in\R\}$, defined by setting
\be
f_K(s)=0, \ \ s\in (-\infty, 0]\cup [\re u+R,\infty),
\ee
\be
f_K(s)=\im u,\ \ s\in [\re u-|\im b|,\re u+|\im b|],
\ee
and letting $f_K(s)$ be a linear function so that $\cC$ connects the points~$0$, $u-|\im b|$ and~$u+|\im b|$, $\re u+R$ in the $z$-plane. Since we require $|\im u|\le K$, we have $|f_K'(s)|\le 1$ on~$\cC$. On the slanting parts of the contour, the increase in length for a given increase in the curve parameter~$s$ is therefore bounded above by $\sqrt{2}$ (compared to the flat pieces).
As a consequence, we deduce a majorization
\be
|J(b;u,v)|    \le  \sqrt{2} \exp(\alpha K|\re v|)
 \int_{\R}ds\exp(-\alpha s\im v) |r(b;s+if_K(s))r(b;s+if_K(s)-u)|.
\ee
 
 Next,
  invoking the $r$-bound~\eqref{rbc}--\eqref{Cras}, we readily infer that the product
  \be
C_r(b,z)C_r(b,z-u),\ \ \ z\in \cC,
\ee
can be bounded above by a constant depending only on $b$ and $K$. Thus the integral reduces to the one in~\eqref{JbvC}, and so the assertions follow.
\end{proof}


\section{The step from $N=2$ to $N=3$}

In this section we focus on the analytic aspects arising in the step $N=2\to N=3$, following the flow chart of Section~4 as far as possible. We recall the pertinent integrand is given by (cf.~\eqref{JN}) 
\be\label{I3}
I_3(b;x,y,z)=W_2(b;z)\cS^{\sharp}_3(b;x,z)J_2(b;z,(y_1-y_3,y_2-y_3)).
\ee
From Lemma~B.2 we can now obtain bounds on the 2-particle weight function defined by~\eqref{cz}--\eqref{wW}.
Likewise, Lemma~B.1 supplies bounds on the $G$-ratios in~$\cS_3^{\sharp}$, and Prop.~4.3 on~$J_2$.

More specifically, letting $y\in\R^3$ and $\im (z_1-z_2)=0$ until further notice, we obtain from~\eqref{rbc}
\be\label{Sb}
|\cS^{\sharp}_3(b;x,z)|<\prod_{j=1}^3\prod_{k=1}^2\frac{C_r(b,z_k-x_j)}{\cosh(\gamma \re(z_k-x_j))},\ \ \ \ \gamma=\alpha\re b/2,
\ee
provided $(b,z_k-x_j)\in \cD_r$ for $j=1,2,3$, and $k=1,2$; from~\eqref{wbr} we have
\be\label{W2be}
|W_2(b;z)|<C_w(b)\sinh^2(\gamma \, (z_1-z_2)),\ \ \ \im z_1=\im z_2,
\ee
whereas \eqref{J2b1} yields
\bea\label{J2be}
|J_2(b;z,(y_1-y_3,y_2-y_3))| & < & C(b)\exp(-\alpha \im (z_1+z_2)(y_1+y_2-2y_3)/2)
\nonumber  \\
& &  \times \frac{z_1-z_2}{\sinh(\gamma(z_1-z_2))},\ \ \  \ \ \ \ \im z_1=\im z_2.
\eea
It is clear from these bounds that~$I_3$ has exponential decay for $|\re z_1|\to \infty$ and for $|\re z_2|\to \infty$, uniformly for~$\im z_j$ varying over bounded intervals.

Requiring at first $x\in\R^3$, the integration contour $\R$ in the $z_k$-plane stays away from the $G$-poles at (cf.~\eqref{Gpo}--\eqref{Gze})
\be\label{I3po}
z_k=x_j-ib/2+ia+z_{lm},\ \ z_k=x_j+ib/2-ia-z_{lm},\ \ k=1,2,\ \  j=1,2,3,\ \ \ l,m\in\N,
\ee 
so that
\be\label{J3}
J_3(b;x,y)\equiv \frac{e^{i\alpha y_3(x_1+x_2+x_3)}}{2}\int_{\R^2}dzI_3(b;x,y,z),\ \  (b,x,y)\in S_a\times\R^3\times\R^3,  
\ee
is well defined.

It follows from the above that $J_3$ (as initially defined by~\eqref{J3}) extends to a function that is holomorphic in $b$ and $x_j$ for $b\in S_a$ and $|\im x_j|<a-\re b/2$, $j=1,2,3$. Furthermore, thanks to the above bounds, we can shift the two contours~$\R$ simultaneously up and down as long as we do not meet the poles~\eqref{I3po}. Using the same arguments as for the case~$N=2$, we can now extend the holomorphy domain step by step. In order to detail the result of this iterative procedure, we introduce
\be\label{D3}
D_3\equiv \{ (b,x)\in S_a\times \C^3 \mid \max_{1\le j<k\le 3}|\im (x_j-x_k)|<2a-\re b\}.
\ee
Also, given $x\in\C^3$, we set
\be\label{phix}
\phi(x)\equiv \im (x_{j_1}+x_{j_3})/2,
\ee
where the indices $j_1,j_3$ are such that
\be\label{imord}
\im x_{j_3}\le \im x_{j_2} \le \im x_{j_1},\ \ \ \{ j_1,j_2,j_3\}=\{ 1,2,3\}.
\ee
Then we have the following counterpart of Prop.~4.1.

\begin{proposition}
Let $y\in\R^3$. Then the function $J_3(b;x,y)$ is holomorphic in~$D_3$. 
\end{proposition}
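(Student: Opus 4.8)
The plan is to mimic the argument for Proposition~4.1, with the extra bookkeeping forced by having two integration variables $z_1,z_2$ instead of one. First I would recall that, as noted after \eqref{J3}, the bounds \eqref{Sb}--\eqref{J2be} show that the integrand $I_3(b;x,y,z)$ decays exponentially in $|\re z_1|$ and $|\re z_2|$, uniformly for $(b,x)$ and $\im z_1=\im z_2$ in compact sets, and that these bounds are valid as long as the shifted contours avoid the $G$-pole sequences \eqref{I3po}. Hence the starting observation is that, for $x\in\R^3$, the product contour $\R^2$ (or any common horizontal shift $\R^2+i\eta(1,1)$ with $\eta$ small) lies strictly between the upward and downward pole sequences, so $J_3$ is well defined and, by differentiating under the integral sign, holomorphic in $(b,x)$ whenever $|\im x_j|<a-\re b/2$ for $j=1,2,3$.

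Next I would fix $(b,x)\in D_3$ and, exactly as in the proof of Prop.~4.1, choose the common imaginary shift so that the two contours in the $z_k$-planes pass between the pole sequences attached to \emph{all three} of $x_1,x_2,x_3$. The natural choice is to shift both contours to $\R+i\phi(x)$, where $\phi(x)$ is defined in \eqref{phix}--\eqref{imord}: with the indices ordered so that $\im x_{j_3}\le\im x_{j_2}\le\im x_{j_1}$, the upward poles sit at $\im z_k\ge \im x_j+a-\re b/2$ and the downward poles at $\im z_k\le \im x_j-a+\re b/2$. To keep the contour $\R+i\phi(x)$ below every upward sequence and above every downward sequence, one needs
\be
\im x_{j_1}-a+\re b/2 < \phi(x) < \im x_{j_3}+a-\re b/2,
\ee
and since $\phi(x)=\im(x_{j_1}+x_{j_3})/2$ both inequalities reduce to $\im(x_{j_1}-x_{j_3})<2a-\re b$, which is exactly the defining condition of $D_3$ (the pair $(j_1,j_3)$ realizes the maximum of $|\im(x_j-x_k)|$). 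One must also check that the \emph{intermediate} levels $x_{j_2}$ cause no obstruction, but this is automatic: $\im x_{j_3}\le\im x_{j_2}\le\im x_{j_1}$ forces the $x_{j_2}$-sequences to be separated from $\R+i\phi(x)$ whenever the $x_{j_1}$- and $x_{j_3}$-sequences are. Thus the common shift from $\R^2$ to $(\R+i\phi(x))^2$ can be carried out continuously without the contour ever touching a pole, and along the way the uniform decay bounds \eqref{Sb}--\eqref{J2be} keep every integral absolutely convergent and locally uniformly bounded. Morera's theorem (in each variable $b,x_1,x_2,x_3$ separately) then gives holomorphy of $J_3$ near $(b,x)$, and since $(b,x)\in D_3$ was arbitrary, $J_3$ is holomorphic on $D_3$.

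The one genuine subtlety — and the step I expect to be the main obstacle — is the legitimacy of moving the \emph{two-dimensional} contour $\R^2$ to $(\R+i\phi(x))^2$ as a single continuous deformation: one has to be sure that along the whole homotopy the integrand has no poles \emph{in $z_1$} when $z_2$ ranges over its contour and vice versa, i.e. that no pole of $I_3$ in the variable $z_1$ ever crosses $\R+i(\text{current level})$ while $z_2$ is displaced, and symmetrically. For the $G$-ratio factors in $\cS_3^\sharp$ this is clear because their $z_k$-poles depend only on $x_j$ and not on the other $z$-variable. The factors that do couple $z_1$ and $z_2$ are the weight $W_2(b;z)$ and the two-particle eigenfunction $J_2(b;z,\cdot)$: $W_2$ contributes $z_1$-poles at $z_1=z_2\pm ib,\ z_1=z_2\pm i(2a-b)$ (from $w(z_1-z_2)$, cf.~\eqref{cz}--\eqref{wW}), and $J_2(b;z,\cdot)$ has, by Prop.~4.5, $z_1$-poles only where $\pm(z_1-z_2)+ib\in i[2a,\infty)$. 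But when both contours are shifted by the \emph{same} real-translation-plus-common-imaginary-part amount, the difference $z_1-z_2$ stays real throughout the homotopy, so none of these coupled poles — all of which require $\im(z_1-z_2)\ne 0$ — is ever encountered. Hence the coupled factors contribute no moving singularities, the deformation is justified by the standard argument (shift in small steps, using the uniform decay in $|\re z_k|$ to discard the contributions at $\pm\infty$), and the proof goes through verbatim as in the $N=2$ case. I would remark in passing that for non-real shifts of the individual $x_j$ one is precisely relying on the ordering \eqref{imord} to guarantee that the \emph{common} shift $\phi(x)$ threads all three pole ladders at once; this is the role of the auxiliary quantity $\phi(x)$ and is what makes $D_3$ the correct holomorphy domain obtainable by this elementary method.
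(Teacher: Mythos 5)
Your proposal is correct and follows essentially the same route as the paper: fix $(b,x)\in D_3$, shift both contours simultaneously to $\R+i\phi(x)$, verify via the ordering \eqref{imord} that this level threads between all upward and downward pole sequences \eqref{I3po} precisely under the defining condition of $D_3$, and read off holomorphy from the resulting representation. Your additional observation that the coupled singularities of $W_2$ and $J_2$ are never met because $z_1-z_2$ stays real under the common shift is a correct filling-in of a detail the paper leaves implicit, not a different argument.
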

\begin{proof}
Fixing $(b,x)\in D_3$, we clearly have
\be
\im x_j+a-\re b/2>\phi(x),\ \ \im x_j-a+\re b/2<\phi(x),\ \ \ j=1,2,3.
\ee
Therefore, the contour $\R +i\phi(x)$ stays below/above the upward/downward  $G$-pole sequences. Hence we can continue $J_3$ to any such $x$-value by simultaneous contour shifts, without passing any of the poles~\eqref{I3po} in the iteration. More specifically, we arrive at the associated representation
\be\label{J3eta} 
J_3(b;x,y)= \frac{e^{i\alpha y_3(x_1+x_2+x_3)}}{2}\int_{(\R+i\phi(x))^2}dzI_3(b;x,y,z),\ \  (b,x)\in D_3,\ \ y\in\R^3,  
\ee
from which holomorphy in~$D_3$ can be read off.
\end{proof}

Just as for the first step, these analyticity features are sufficient to show that $J_3$ is a joint eigenfunction of the six A$\De$Os at hand. Accordingly, we introduce the subdomain of $D_3$ given by
\be\label{D3s}
D_{3,s}\equiv  \{ (b,x)\in D_3\mid  (b,x-i\eta)\in D_3,\forall \eta\in [0,a_l]^3,\ \re b\in (0,a_s)\}.
\ee

On the other hand, we now have more than one contour, and in the proof of the following counterpart of Prop.~4.2 we have occasion to shift one of them. Thus, to show that this causes no problem with the contour tails, we can no longer appeal to the bounds~\eqref{W2be}--\eqref{J2be}, which require $\im z_1=\im z_2$. 

However, this snag is readily obviated, as follows. First, we note that the bounds~\eqref{wbc}--\eqref{Cwas} imply that the rate of divergence of $W_2(b;z)$ as $|\re(z_1-z_2)|\to\infty$ does not depend on~$\im(z_1-z_2)$. Second, we may invoke Prop.~4.3 as long as~$|\im(z_1-z_2)|<2a-\re b$, and on closed subintervals the decay rate in the bound on the $J_2$-factor in the integrand~$I_3$~\eqref{I3} as $|\re(z_1-z_2)|\to\infty$ again does not depend on~$\im(z_1-z_2)$. In view of the bound~\eqref{Sb} on the kernel function, this suffices for the contour shifts to be legitimate.

\begin{proposition}
Fixing $y\in\R^3$ and letting $(b,x)\in D_{3,s}$, we have the eigenvalue equations
\be\label{A3eig}
A^{(3)}_{k,\delta}(x) J_3(x,y)=S^{(3)}_k(e_{\de}(2y_1),e_{\de}(2y_2),e_{\de}(2y_3))J_3(x,y), \ \ k=1,2,3,\ \ \de=+,-.
\ee
\end{proposition}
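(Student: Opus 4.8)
The plan is to mimic the proof of Proposition 4.2 step by step, now with two integration variables $z_1,z_2$ instead of one. By holomorphy of $J_3$ on $D_3$ (Prop.~5.1) and the restriction to $D_{3,s}$, the shifts $x_j\to x_j-ia_{-\delta}$ are well defined, and by analyticity in $b$ on the strip $\re b\in(0,a_s)$ it suffices to treat $b\in(0,a_s)$; by analyticity in $x$ we may further assume $|\im x_j|$ small, so that the pole sequences \eqref{I3po} stay a distance exceeding $a_l$ apart. First I would move both contours $\R$ down to $C_-=\R-ia_l/2$ (legitimate by the uniform decay noted after \eqref{J2be}, together with the remarks preceding this proposition that free us from the constraint $\im z_1=\im z_2$), so that after the shift $x_j\to x_j-ia_{-\delta}$ the poles remain off $C_-$. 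Then, pulling the A$\Delta$O $A^{(3)}_{k,\delta}(x)$ through the plane wave $e^{i\alpha y_3(x_1+x_2+x_3)}$ produces the scalar factor $e_\delta(2ky_3)$, and the shift operators may be taken under the integral sign.

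Next I would invoke the key kernel identity \eqref{key} with $N=3$, $y\to z$, to replace the action of $A^{(3)}_{k,\delta}(x)$ on $\cS^{\sharp}_3(x,z)$ by the action of $A^{(2)}_{k,\delta}(-z)+A^{(2)}_{k-1,\delta}(-z)$ on $\cS^{\sharp}_3(x,z)$, with the conventions $A^{(2)}_{3,\delta}\equiv 0$, $A^{(2)}_{0,\delta}\equiv{\bf 1}$. Now comes the step that replaces the simple variable change in the $N=2$ proof: I need to transfer the action of these $z$-difference operators from the kernel factor to the factor $W_2(b;z)J_2(b;z,(y_1-y_3,y_2-y_3))$ in the integrand $I_3$. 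The clean way is to use the formal self-adjointness of $A^{(2)}_{k,\delta}$ and $A^{(2)}_{k-1,\delta}$ on $L^2(\R^2,W_2(z)\,dz)$, exactly as in the formal derivation of \eqref{receig} in Section~2: writing the integrand as $W_2(z)\,\cS^{\sharp}_3(x,z)\,J_2(z,\cdot)$ and integrating by parts (i.e. shifting the $z$-contours by the relevant $\pm ia_{-\delta}$ and redistributing the rational coefficient functions of $A^{(2)}$), the action lands on $J_2(z,(y_1-y_3,y_2-y_3))$. For this manoeuvre to be rigorous I must check that every contour shift involved stays clear of the pole sequences \eqref{I3po} and of the poles of the coefficient functions of $A^{(2)}_{k,\delta}$ and of $W_2$, which is where the choice $C_-=\R-ia_l/2$ and the $\epsilon$-margin $a_s-b>0$ are used; the tail decay is again supplied by \eqref{Sb} together with the $\im(z_1-z_2)$-independence of the divergence rate of $W_2$ and of the decay rate of $J_2$ noted before the statement.

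Once the action of $A^{(2)}_{k,\delta}(-z)+A^{(2)}_{k-1,\delta}(-z)$ has been moved onto $J_2(z,(y_1-y_3,y_2-y_3))$, I apply Proposition~4.2 (the joint eigenfunction property for $N=2$), which gives the eigenvalue $S^{(2)}_k\big(e_\delta(2(y_1-y_3)),e_\delta(2(y_2-y_3))\big)+S^{(2)}_{k-1}\big(e_\delta(2(y_1-y_3)),e_\delta(2(y_2-y_3))\big)$ — here one must note that the sign flip $z\to -z$ in the kernel identity is compensated because there is no complex conjugation in \eqref{J3}, exactly as in the Section~2 discussion. Pulling this constant out of the integral and restoring the contours to $\R$ (again without meeting poles), and then combining with the prefactor $e_\delta(2ky_3)$, I obtain
\be
A^{(3)}_{k,\delta}(x)J_3(x,y)=e_\delta(2ky_3)\big[S^{(2)}_k(e_\delta(2(y_1-y_3)),e_\delta(2(y_2-y_3)))+S^{(2)}_{k-1}(\cdots)\big]J_3(x,y).
\ee
Finally, the elementary symmetric function recurrence \eqref{Srec} with $M=3$, $a_j=e_\delta(2y_j)$, $a_M=e_\delta(2y_3)$ (using $e_\delta(2y_j)/e_\delta(2y_3)=e_\delta(2(y_j-y_3))$) collapses the bracket to $S^{(3)}_k(e_\delta(2y_1),e_\delta(2y_2),e_\delta(2y_3))$, yielding \eqref{A3eig}. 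I expect the main obstacle to be the bookkeeping in the self-adjointness/integration-by-parts step: verifying that the family of contour shifts needed to redistribute the coefficient functions of the two $z$-operators never crosses the poles \eqref{I3po} (nor creates a pinch between the two shifted contours), and that the resulting boundary integrals vanish by the established tail decay — all of which must be done carefully but uses only ingredients already in place, namely \eqref{Sb}, the bounds \eqref{W2be}--\eqref{J2be} together with their $\im(z_1-z_2)$-independent extension, and the $\epsilon$-margin afforded by $D_{3,s}$.
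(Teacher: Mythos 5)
Your overall route is the same as the paper's: move both contours to $C_-=\R-ia_l/2$, pull the A$\Delta$O through the plane wave to pick up $e_\delta(2ky_3)$, act under the integral, use the kernel identity \eqref{key}, transfer the action of $A^{(2)}_{l,\delta}(-z)$ ($l=k,k-1$) onto the factor $W_2\,J_2$ by contour shifts (the rigorous form of formal self-adjointness on $L^2(\R^2,W_2\,dz)$, which in the paper is implemented via the $W$-A$\Delta$Es \eqref{Wade}), then apply Prop.~4.2 and the recurrence \eqref{Srec}. So the architecture is right.

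However, the plan's treatment of the decisive step rests on a premise that fails. You propose to "check that every contour shift involved stays clear of the pole sequences \eqref{I3po} and of the poles of the coefficient functions of $A^{(2)}_{k,\delta}$ and of $W_2$". For $l=1$ this is impossible: in each of the two terms only one variable is shifted (to $C_-+ia_{-\delta}$), and to apply the $N=2$ eigenvalue equation one must then raise the remaining contour by $a_{-\delta}$ so that both contours coincide. During that shift the difference $z_1-z_2$ sweeps imaginary parts in $[0,a_{-\delta}]$, and its real part vanishes somewhere on the contours, so one unavoidably meets the poles of $1/s_\delta(z_1-z_2)$ at $z_1-z_2=0$ (and at $\pm ia_s$ when $a_{-\delta}=a_l$) and the poles of $W_2$ coming from $1/G(\pm(z_1-z_2)+ia-ib)$ at $\pm(z_1-z_2)=ib$ (and $ib+ika_s$). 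The paper's proof does not avoid these points; it shows the full integrand stays regular because each such pole is cancelled by a matching zero: the zeros of $s_\delta(\pm(z_1-z_2)-ib)$ in the coefficient obtained after applying \eqref{Wade} cancel the $W_2$-poles, and the zeros of $G(\pm(z_1-z_2)+ia)$ in $W_2$ cancel the poles of $1/s_\delta(z_1-z_2)$ — checked separately for $a_{-\delta}=a_s$ and $a_{-\delta}=a_l$. Without this pole--zero cancellation argument (and the prior use of \eqref{Wade} to put the coefficient into the form $s_\delta(z_1-z_2-ib)/s_\delta(z_1-z_2)$ multiplying the untouched $W_2(z)$), the step you flag as "bookkeeping" cannot be completed as you describe it; the $\epsilon$-margin $a_s-b$ and the choice of $C_-$ only protect you from the kernel-function poles \eqref{I3po}, not from these coefficient and weight-function singularities.
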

\begin{proof}
We follow the same line of reasoning as in the proof of Prop.~4.2. Thus we infer again that we need only prove~\eqref{A3eig} for $b\in(0,a_s)$ and $|\im x_j|<\epsilon/2$, $j=1,2,3$, with $\epsilon=a_s-b$.

Letting $x$ vary over this region, we now adapt the six steps in the proof of Prop.~4.2. First, the $z_k$-contours~$\R$ in~\eqref{J3} are permissible and can be moved down to~$C_-$~\eqref{C-}
without meeting poles. Second, shifting the A$\De$O through the plane wave up front (yielding a factor~$e_{\de}(2ky_3)$), we are allowed to act with it under the integral sign, so that the lhs of~\eqref{A3eig} becomes
\be\label{3lhs1}
e_{\de}(2ky_3)\frac{e^{i\alpha y_3(x_1+x_2+x_3)}}{2}\int_{C_-}dz_1\int_{C_-}dz_2W_2(z)A^{(3)}_{k,\delta}(x)\cS^{\sharp}_3(x,z)J_2(z,(y_1-y_3,y_2-y_3)).
\ee
Third, we use~\eqref{key} to replace $A^{(3)}_{k,\de}(x)$ by the sum of $ A^{(2)}_{k,\de}(-z)$ and $ A^{(2)}_{k-1,\de}(-z)$. Then it remains to show that the integrals
\be\label{3lhs3}
\int_{C_-}dz_1\int_{C_-}dz_2W_2(z)J_2(z,(y_1-y_3,y_2-y_3))A^{(2)}_{l,\de}(-z)\cS^{\sharp}_3(x,z),
\ee
are equal to
\be\label{eigl}
S^{(2)}_l\big(e_{\de}(2(y_1-y_3)),e_{\de}(2(y_2-y_3))\big)\ 2\exp(-i\alpha y_3(x_1+x_2+x_3))J_3(x,y),
\ee
for $l=1$ and $l=2$.

Fourth, we change variables depending on the two cases at hand. For $l=2$ the A$\De$O-action yields~$\cS^{\sharp}_3(x,(z_1+ia_{-\de},z_2+ia_{-\de}))$ and we change variables~$z_k\to z_k-ia_{-\de}$, $k=1,2$, yielding the integral
\be
\int_{C_-+ia_{-\de}}dz_1\int_{C_-+ia_{-\de}}dz_2W_2(z)J_2((z_1-ia_{-\de},z_2-ia_{-\de}),(y_1-y_3,y_2-y_3))\cS^{\sharp}_3(x,z).
\ee
By virtue of Prop.~4.2 (with $k=2$), this is equal to
\be
e_{\de}(2(y_1-y_3)+2(y_2-y_3))\int_{C_-+ia_{-\de}}dz_1\int_{C_-+ia_{-\de}}dz_2W_2(z)J_2((z_1,z_2),(y_1-y_3,y_2-y_3))\cS^{\sharp}_3(x,z),
\ee
which amounts to the fifth step for this case.
Shifting the contours back to $\R$ (the sixth step), we see this equals~\eqref{eigl} for the case $l=2$.

It remains to show equality of~\eqref{3lhs3} and~\eqref{eigl} for $l=1$. In this case the integral~\eqref{3lhs3} is given by 
\bea\label{3lhs31}
&  & \int_{C_-}dz_1\int_{C_-}dz_2W_2(z)J_2(z,(y_1-y_3,y_2-y_3))
\nonumber \\
&  & \times\Big(\frac{s_{\de}(z_1-z_2+ib)}{s_{\de}(z_1-z_2)}\exp(ia_{-\de}\partial_{z_1})+\frac{s_{\de}(z_2-z_1+ib)}{s_{\de}(z_2-z_1)}\exp(ia_{-\de}\partial_{z_2})\Big)\cS^{\sharp}_3(x,z),
\eea
cf.~\eqref{Aks}. Note that the pole at $z_1=z_2$ due to the denominator $s_{\de}$ is matched by a double zero due to~$W_2(z)=w(z_1-z_2)$, cf.~\eqref{wW}.

Changing variables, we now rewrite~\eqref{3lhs31} as
\begin{multline}\label{rewr1}
\int_{C_-+ia_{-\de}}dz_1\int_{C_-}dz_2
\frac{s_{\de}(z_1-ia_{-\de}-z_2+ib)}{s_{\de}(z_1-ia_{-\de}-z_2)}W_2(z_1-ia_{-\de},z_2)
\\
\times \cS_3^{\sharp}(x,z)J_2((z_1-ia_{-\de},z_2),(y_1-y_3,y_2-y_3))
\\
+\int_{C_-}dz_1\int_{C_-+ia_{-\de}}dz_2
\frac{s_{\de}(z_2-ia_{-\de}-z_1+ib)}{s_{\de}(z_2-ia_{-\de}-z_1)}W_2(z_1,z_2-ia_{-\de})
\\
\times \cS_3^{\sharp}(x,z)J_2((z_1,z_2-ia_{-\de}),(y_1-y_3,y_2-y_3)).
\end{multline}
From the A$\De$Es~\eqref{Wade} satisfied by the $W$-function we deduce that this equals
\begin{multline}\label{rewr2}
\int_{C_-+ia_{-\de}}dz_1\int_{C_-}dz_2
\frac{s_{\de}(z_1-z_2-ib)}{s_{\de}(z_1-z_2)}W_2(z)
\\
\times \cS_3^{\sharp}(x,z)J_2((z_1-ia_{-\de},z_2),(y_1-y_3,y_2-y_3))
\\
+\int_{C_-}dz_1\int_{C_-+ia_{-\de}}dz_2
\frac{s_{\de}(z_2-z_1-ib)}{s_{\de}(z_2-z_1)}W_2(z)
\\
 \times \cS_3^{\sharp}(x,z)J_2((z_1,z_2-ia_{-\de}),(y_1-y_3,y_2-y_3)).
\end{multline}

We claim that when we shift the contour $C_-$ in these two integrals up by $a_{-\de}$, then no poles are met. (In view of the paragraph preceding~\eqref{A3eig}, we retain exponential decay on the contour tails, so the shift causes no problems at the tail ends.) Taking this claim for granted, the integrands are regular for $z_1,z_2\in C_-+ia_{-\de}$, and since the double contour in the first integral is now equal to the one in the second integral, we are entitled to use the eigenvalue equation~\eqref{A2eig} with $k=1$ (the fifth step). The sixth step is the same as for the case $l=2$ and now yields~\eqref{eigl} for $l=1$.

To complete the proof, it remains to verify the claim. Clearly, we stay in the holomorphy domains of $\cS_3^{\sharp}$ and $J_2$ while shifting the relevant contour. Setting~$z\equiv z_1-z_2$ and letting first $a_{-\de}=a_s$, we meet the simple poles of the factors $1/G(\pm z+ia-ib)$ at $\pm z=ib$, but they are matched by  zeros of $s_{\de}(\pm z-ib)$. Likewise, the simple poles of $s_{\de}(z)$ at $z=0$ and at $\pm z=ia_s$ (the latter being met when $a_s=a_l$) are matched by zeros of $G(\pm z+ia)$. Thus the first integrand is regular for $z_1\in C_-+ia_s$ and $\im (z_1-z_2)\in [0, a_s]$ and the second one for $z_2\in C_-+ia_s$ and $\im (z_2-z_1)\in [0, a_s]$.

Finally, consider the case $a_{-\de}=a_l$.  Then we meet simple poles of $1/G(\pm z+ia-ib)$ at $\pm z=ib+ika_s$ for $ka_s\le a_l-b$, but they are again matched by zeros of $s_{\de}(\pm z-ib)$. The poles of $s_{\de}(z)$ at $\pm z=ika_s$ for $ka_s\le a_l$ are matched by zeros of $G(\pm z+ia)$. Hence the first integrand is regular for $z_1\in C_-+ia_l$ and $\im (z_1-z_2)\in [0, a_l]$ and the second one for $z_2\in C_-+ia_l$ and $\im (z_2-z_1)\in [0, a_l]$. Thus our claim is proved.
\end{proof}

We proceed to obtain the analog of Prop.~4.3. Fixing $(b,x)\in D_3$, we recall the definition~\eqref{phix} of~$\phi(x)$, and define in addition (cf.~\eqref{imord})
\be\label{dx}
d(x)\equiv \im (x_{j_1}-x_{j_3}).
\ee

\begin{proposition}
Let $y\in\R^3$.  For any $(b,x)\in D_3$ we have
\bea\label{J3b1}
|J_3(b;x,y)| & < & C(b,d(x))\exp(-\alpha [\phi(x)(y_1+y_2)+y_3\im x_{j_2}])
\nonumber \\
&  &  \times \prod_{1\le m<n\le 3}\frac{ \re(x_m-x_n)}{\sinh(\gamma\,\re(x_m-x_n))},
\eea
where  $C$ is continuous on $S_a\times[0,2a-\re b)$. 
\end{proposition}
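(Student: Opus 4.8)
The plan is to estimate the integral representation~\eqref{J3eta} directly, substituting the bounds on the three ingredients of the integrand~$I_3$~\eqref{I3}. Concretely, I would start from
\be
|J_3(b;x,y)| \le \frac{1}{2}\exp(-\alpha y_3\im(x_1+x_2+x_3))\int_{(\R+i\phi(x))^2}|dz|\,|W_2(b;z)|\,|\cS^{\sharp}_3(b;x,z)|\,|J_2(b;z,(y_1-y_3,y_2-y_3))|,
\ee
and on this contour we have $\im z_1=\im z_2=\phi(x)$, so that the bounds~\eqref{Sb}, \eqref{W2be} and~\eqref{J2be} all apply. Writing $z_k = t_k + i\phi(x)$ with $t_k\in\R$, the $W_2$-factor contributes $C_w(b)\sinh^2(\gamma(t_1-t_2))$ and the $J_2$-factor contributes a constant times $\exp(-\alpha\phi(x)(y_1+y_2-2y_3))$ times $(t_1-t_2)/\sinh(\gamma(t_1-t_2))$; the two $\sinh$-factors in the $J_2$-bound cancel one power of $\sinh(\gamma(t_1-t_2))$ from $W_2$, leaving a factor $(t_1-t_2)\sinh(\gamma(t_1-t_2))$. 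Collecting the exponential prefactors, the power of $e_\delta$ coming out front combines with the $\exp(-\alpha\phi(x)(y_1+y_2-2y_3))$ and the $\exp(-\alpha y_3\im(x_1+x_2+x_3))$ to give exactly $\exp(-\alpha[\phi(x)(y_1+y_2)+y_3\im x_{j_2}])$, using $\im x_{j_1}+\im x_{j_3}=2\phi(x)$ and $\im(x_1+x_2+x_3)=\im x_{j_2}+2\phi(x)$.

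It then remains to bound the double real integral
\be
\int_{\R^2}dt_1\,dt_2\;(t_1-t_2)\sinh(\gamma(t_1-t_2))\prod_{j=1}^3\prod_{k=1}^2\frac{C_r(b,t_k+i\phi(x)-x_j)}{\cosh(\gamma\,\re(t_k-x_j))},
\ee
and to extract from it the product $\prod_{1\le m<n\le 3}\re(x_m-x_n)/\sinh(\gamma\re(x_m-x_n))$. Here I would first absorb the $C_r$-factors: by Lemma~B.1 and its corollary~\eqref{Cras}, the product $\prod_{j,k}C_r(b,t_k+i\phi(x)-x_j)$ is bounded above by a constant depending continuously on $b$ and on $\im(t_k+i\phi(x)-x_j)=\phi(x)-\im x_j$, hence (since $|\phi(x)-\im x_j|\le d(x)$ for all $j$) by a single constant $C(b,d(x))$ continuous on $S_a\times[0,2a-\re b)$; pulling it out, the remaining integrand involves only $\cosh$'s and the polynomial-times-$\sinh$ factor. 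The factor $(t_1-t_2)\sinh(\gamma(t_1-t_2))$ is then dominated, up to a $b$-dependent constant, by $\cosh(\gamma(t_1-t_2))$ times a slowly growing factor, and more to the point it can be split so that $\cosh(\gamma(t_1-t_2))\le\cosh(\gamma(t_1-\re x_j))\cosh(\gamma(t_2-\re x_j))$-type inequalities let it cancel against denominators; alternatively, and more cleanly, I would recognize the resulting integral as (a limiting/degenerate case of) the integral evaluated in Lemma~C.1 with $N=2$, exactly as Props.~4.3 and~4.4 reduced their one-dimensional integrals to the $N=1$ case of that lemma. Either way one is left with a product over the three pairs $(m,n)$ of the elementary factor $\re(x_m-x_n)/\sinh(\gamma\re(x_m-x_n))$, coming from the $z$-integration producing the Cauchy-determinant-type structure of Lemma~C.1 evaluated at coincident spectral parameters.

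The main obstacle I anticipate is the bookkeeping in this last step: controlling the interplay between the extra polynomial-times-hyperbolic factor $(t_1-t_2)\sinh(\gamma(t_1-t_2))$ produced by the $W_2\cdot J_2$ combination and the six Cauchy-type denominators $\cosh(\gamma\re(t_k-x_j))$, and showing that after the $t$-integration one obtains precisely the product over the $\binom{3}{2}=3$ differences $\re(x_m-x_n)$, with all dependence on $x$ through $\re(x_m-x_n)$ and on $\im x$ only through $d(x)$ (absorbed into $C$) and the explicit exponential. A secondary subtlety is the legitimacy of using the bounds~\eqref{Sb}, \eqref{W2be}, \eqref{J2be} on the shifted contour $\R+i\phi(x)$: this requires $|\phi(x)-\im x_j|<2a-\re b$ for $j=1,2,3$ and $|\im(z_1-z_2)|=0<2a-\re b$, both of which are guaranteed by $(b,x)\in D_3$ (cf.~\eqref{D3}), and it also requires $(b,z_k-x_j)\in\cD_r$, which holds on the same domain; I would state this verification explicitly but expect it to be routine given Prop.~5.1. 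Finally, the continuity and domain of the constant $C(b,d(x))$ follows, as in Prop.~4.3, from continuity of $C_r$ and $C_w$ together with the fact that the evaluated $t$-integral depends continuously on $(b,\re x)$.
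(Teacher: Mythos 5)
Your proposal follows the paper's proof essentially verbatim: majorize \eqref{J3eta} on the contour $\R+i\phi(x)$, insert \eqref{Sb}, \eqref{W2be} and the $J_2$-bound of Prop.~4.3 so that the $\sinh^2$ from $W_2$ and the $1/\sinh$ from $J_2$ combine to $(s_1-s_2)\sinh(\gamma(s_1-s_2))$, absorb the $C_r$-product into a constant $C(b,d(x))$ (here use the sharper inequality $|\phi(x)-\im x_j|\le d(x)/2<a-\re b/2$ on $D_3$, rather than your $\le d(x)$, since this is what keeps the arguments $s_k+i\phi(x)-x_j$ uniformly inside $\cD_r$), and collect the exponential prefactors exactly as you do. The ``main obstacle'' you flag is already settled in the paper: the remaining integral is precisely the $N=2$ case of Lemma~C.2 (equivalently the $B_2$-computation following Lemma~C.1, i.e.\ your ``degenerate case of Lemma~C.1'' made rigorous), giving the product $\prod_{1\le m<n\le 3}\re(x_m-x_n)/\sinh(\gamma\re(x_m-x_n))$ up to a constant, whereas your first alternative via crude $\cosh$-domination should be discarded, as it would not reproduce this product decay.
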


\begin{proof}
From \eqref{J3eta} we obtain a majorization
\be\label{J3maj}
|J_3(x,y)|\le \frac{\exp(-\alpha y_3\im(x_1+x_2+x_3))}{2}\int_{(\R+i\phi(x))^2}dz|W_2(z)\cS^{\sharp}_3(x,z)J_2(z,(y_1-y_3,y_2-y_3))|.
\ee
Combining Prop.~4.3 with the bounds~\eqref{Sb} and~\eqref{W2be}, we infer
\begin{multline}\label{J3bo}
|J_3(x,y)|  \le  C(b) \exp(-\alpha [y_3\im x_{j_2}+(y_1+y_2)\phi(x)])
 \\
\times \int_{\R^2}ds (s_1-s_2)\sinh(\gamma(s_1-s_2))
\prod_{j=1}^3\prod_{k=1}^2\frac{C_r(b,s_k+i\phi(x)- x_j)}{\cosh(\gamma (s_k-\re x_j))},
\end{multline}
with $C(b)$ continuous on $S_a$. Now we have (cf.~\eqref{phix}, \eqref{dx} and~\eqref{D3})
\be\label{phid}
 |\im(i\phi(x)-x_j)|\leq d(x)/2<a-\re b/2, \ \ \ j=1,2,3, \ \ x\in D_3.
 \ee
 Recalling~Lemma~B.1, it follows that
 the $C_r$-product  is bounded above by a function $C(b,d(x))$ that is continuous on $S_a\times [0,2a-\re b)$.
The remaining integral 
\be\label{rem2}
 \int_{\R^2}ds \frac{(s_1-s_2)\sinh(\gamma(s_1-s_2))}{\prod_{j=1}^3\prod_{k=1}^2\cosh(\gamma (s_k-\re x_j))},
\ee
 follows from the $N=2$ case of Lemma~C.2. Hence we arrive at~\eqref{J3b1}.
\end{proof}

Thus far we have assumed $y$ is real. When we fix $x\in\R^3$, however, it is easy to see from the defining formula~\eqref{J3} and Prop.~4.4, combined with the bounds~\eqref{Sb}--\eqref{W2be}, that $J_3$ is well defined for $y\in\C^3$ whose imaginary parts are sufficiently close. Indeed, we easily obtain an estimate
\begin{multline}
|J_3(x,y)|  <   \frac{C(b)}{\sin(\pi \im(y_1-y_2)/\re b)}\int_{\R^2}dz\exp(-\alpha (z_1+z_2)\im (y_1+y_2-2y_3)/2)
\\
  \times \frac{\sinh(\alpha(z_1-z_2))\sinh(\gamma \im (y_1-y_2)(z_1-z_2)/2)}{\prod_{j=1}^3\prod_{k=1}^2\cosh(\gamma (z_k- x_j))},
\end{multline}
and the integral occurring here converges when $|\im (y_1-y_2)|$ and~$|\im (y_1+y_2-2 y_3)|$ are smaller than $\re b$. This also entails that $J_3$ is holomorphic in~$y$ in the latter domain.

In fact, it turns out that a more refined bound gives rise to a larger $y$-domain, allowing also a subdomain of $D_3$ for the dependence on $(b,x)$. This hinges on Lemma~C.3, which enables us to calculate the integral occurring in a suitable majorization of $J_3(b;x,y)$. To arrive at the latter, we need to start from a quite different representation for $J_3$, which is only well defined on a subset of $D_3$. The details now follow. 

To begin with, we introduce the notation
\be\label{not3}
X_3\equiv \frac13\sum_{j=1}^3x_j,\ \ Y_3\equiv \frac13\sum_{j=1}^3y_j,\ \ \tilde{x}_j\equiv  x_j-X_3,\ \ \ 
\tilde{y}_j\equiv  y_j-Y_3,\ \ \ j=1,2,3,
\ee
and recall that the representation~\eqref{J3eta} is well defined for any~$(b,x)\in D_3$. By contrast, we need to restrict $(b,x)$ to
\be\label{D3r}
D_3^r\equiv \{ (b,x)\in S_a\times \C^3 \mid |\im \tilde{x}_j|<a-\re b/2,\ \ j=1,2,3 \}\subset D_3,
\ee
for the formula
\be\label{J3X} 
J_3(b;x,y)= \frac{e^{i\alpha y_3(x_1+x_2+x_3)}}{2}\int_{(\R+i\im X_3)^2}dzI_3(b;x,y,z),\ \ \ \ y\in\R^3,  
\ee
to be well defined and valid. This representation can be rewritten as
\be\label{J3s}
J_3(b;x,y)=\exp(3i\alpha X_3Y_3)J_3^r(b;x;y),
\ee
\be\label{J3r}
J_3^r(b;x,y)\equiv\frac{1}{2}\int_{\R^2}dsW_2(b;s)\cS^{\sharp}_3(b;\tilde{x},s)J_2(b;s,(y_1-y_3,y_2-y_3)),\ \ (b,x)\in D_3^r,\ \ y\in \R^3,
\ee
where we used~\eqref{J2J}. It is to be noted that $J_3^r$ depends on $x$ and $y$ only via the differences $x_1-x_2,x_2-x_3$ and $y_1-y_2,y_2-y_3$.

Next, we introduce the domain
\be\label{cD3}
\cD_3\equiv \{ (b,x,y)\in D_3^r\times \C^3 \mid  |\im (y_j-y_k)|<\re b,\ j,k=1,2,3 \},
\ee
and the maximum function
\be
\mu(x)\equiv \max_{j=1,2,3}|\im \tilde{x}_j|.
\ee
Now we are prepared for the analog of~Prop.~4.4.

\begin{proposition}
The function $J_3(b;x,y)$ is holomorphic in~$(b,x,y)$ on~$\cD_3$.
For any $(b,x,y)\in \cD_3$ with 
\be\label{xyres}
\re(x_j-x_k)\ne 0,\ \ \ \im(y_j-y_k)\ne 0, \ \ \ 1\le j<k\le 3,
\ee
 we have
\begin{multline}\label{J3by}
|J_3(b;x,y) | <  C(b,\mu(x))\exp(-3\alpha \im (X_3Y_3))
 \\
 \times \frac{\sum_{\tau\in S_3}(-)^{\tau}\exp\Big(-\alpha\sum_{j=1}^3\re \big(\tilde{x}_{\tau(j)}\big)\im \tilde{y}_j\Big)}{\prod_{1\le m<n\le 3}\sin (\pi\im(y_n-y_m)/\re b)\sinh(\gamma\,\re(x_m-x_n))},
\end{multline}
where  $C$ is continuous on $S_a\times[0,a-\re b/2)$. 
\end{proposition}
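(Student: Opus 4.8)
The plan is to follow the template of Prop.~4.4, but now starting from the center-of-mass-reduced representation \eqref{J3s}--\eqref{J3r} (valid on $D_3^r\times\R^3$) rather than from \eqref{J3eta}, and to evaluate the resulting majorant integral by Lemma~C.3 instead of Lemma~C.2. Fix $(b,x)\in D_3^r$ and view $J_3^r(b;x,y)$ as a function of $y$. In the integrand of \eqref{J3r} I would estimate the three factors separately. Lemma~B.1 bounds the $G$-ratios in $\cS^{\sharp}_3(b;\tilde{x},s)$ as in \eqref{Sb}, the $C_r$-product being majorized by a function $C(b,\mu(x))$ that is continuous on $S_a\times[0,a-\re b/2)$; here one uses $|\im(s_k-\tilde{x}_j)|=|\im\tilde{x}_j|\le\mu(x)<a-\re b/2$ on $D_3^r$. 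Lemma~B.2 bounds $|W_2(b;s)|$ by a multiple of $\sinh^2(\gamma(s_1-s_2))$, which is legitimate because $s\in\R^2$. Finally Prop.~4.4, applied with first argument $s\in\R^2$ (so that $(b,s)\in D_2$ trivially) and second argument $(y_1-y_3,y_2-y_3)$, bounds $|J_2|$ through \eqref{J2b2}; this is available precisely when $|\im(y_1-y_2)|<\re b$, and it contributes a factor $1/\sinh(\gamma(s_1-s_2))$ together with $\sinh(\alpha\im(y_1-y_2)(s_1-s_2)/2)$ and the exponential $\exp(-\alpha(s_1+s_2)\im(y_1+y_2-2y_3)/2)$. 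One copy of $\sinh(\gamma(s_1-s_2))$ from $W_2$ cancels the $J_2$-denominator, and one pulls $C(b,\mu(x))$ and $1/\sin(\pi\im(y_1-y_2)/\re b)$ out front; what remains is an integral over $\R^2$ of a ratio whose numerator is $\sinh(\gamma(s_1-s_2))\sinh(\alpha\im(y_1-y_2)(s_1-s_2)/2)\exp(-\alpha(s_1+s_2)\im(y_1+y_2-2y_3)/2)$ and whose denominator is $\prod_{j=1}^3\prod_{k=1}^2\cosh(\gamma(s_k-\re\tilde{x}_j))$.

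The key observation is that this majorant integral, together with the applicability of \eqref{J2b2}, forces exactly the domain $\cD_3$. Indeed, \eqref{J2b2} requires $|\im(y_1-y_2)|<\re b$, while along $s_1\to\pm\infty$ with $s_2$ bounded (and, by the manifest $s_1\leftrightarrow s_2$ symmetry of the integrand, the reflected direction) the $\cosh$-product decays with exponent $3\gamma$, which must dominate the numerator's growth exponent $\gamma+|\alpha\im(y_1-y_2)/2|+|\alpha\im(y_1+y_2-2y_3)/2|$; writing $p=\im(y_1-y_3)$, $q=\im(y_2-y_3)$ this reads $|p-q|+|p+q|=2\max(|p|,|q|)<2\re b$, i.e.\ $|\im(y_1-y_3)|<\re b$ and $|\im(y_2-y_3)|<\re b$. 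The three constraints $|\im(y_1-y_2)|,|\im(y_1-y_3)|,|\im(y_2-y_3)|<\re b$ are precisely the defining conditions of $\cD_3$ (the remaining integration directions $s_1-s_2\to\pm\infty$ with $s_1+s_2$ bounded, and $s_1+s_2\to\pm\infty$ with $s_1-s_2$ bounded, impose strictly weaker requirements). Since the majorization is then locally uniform in $(b,x,y)$ over $\cD_3$, and since the integrand of \eqref{J3r} is holomorphic in $(b,x,y)$ on $\cD_3$ by holomorphy of $J_2$ there (Prop.~4.4) and of the $G$-ratios and $w$ in the pertinent strips, a standard Fubini/Morera argument shows that $J_3^r$, hence $J_3=\exp(3i\alpha X_3Y_3)J_3^r$, extends holomorphically to $\cD_3$; near the loci $\im(y_j-y_k)=0$, where \eqref{J2b2} degenerates, one invokes instead \eqref{J2b1} of Prop.~4.3 (or a removable-singularity argument). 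This proves the first assertion.

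For the quantitative bound I would recognize the remaining $s$-integral as the $M=2$ instance of Lemma~C.3, with external parameters $\re\tilde{x}_1,\re\tilde{x}_2,\re\tilde{x}_3$ (which sum to zero, as $\tilde{X}_3=0$) and with the exponential and $\sinh$-data encoding $\im\tilde{y}_1,\im\tilde{y}_2,\im\tilde{y}_3$ (likewise summing to zero). Lemma~C.3 then returns a constant multiple of $\big(\sum_{\tau\in S_3}(-)^{\tau}\exp(-\alpha\sum_{j=1}^3\re(\tilde{x}_{\tau(j)})\im\tilde{y}_j)\big)$ divided by $\prod_{1\le m<n\le3}\sin(\pi\im(\tilde{y}_n-\tilde{y}_m)/\re b)\sinh(\gamma\re(\tilde{x}_m-\tilde{x}_n))$; in particular it supplies the two $\sin$-factors for the pairs not already extracted from \eqref{J2b2}. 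Using $\re(\tilde{x}_m-\tilde{x}_n)=\re(x_m-x_n)$ and $\im(\tilde{y}_n-\tilde{y}_m)=\im(y_n-y_m)$, absorbing all constants into $C(b,\mu(x))$, and restoring $|\exp(3i\alpha X_3Y_3)|=\exp(-3\alpha\im(X_3Y_3))$, this is exactly \eqref{J3by}. The conditions \eqref{xyres} are needed only because $\re(x_m-x_n)$ and $\im(y_n-y_m)$ occur in denominators of the right-hand side; on those loci $J_3$ stays holomorphic but the asserted inequality is vacuous.

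I expect the main obstacle to be the bookkeeping that identifies the majorant $s$-integral with the precise input of Lemma~C.3 and that tracks the convergence thresholds: it is the identity $|p-q|+|p+q|=2\max(|p|,|q|)$ which makes the three a priori independent constraints on the $\im(y_j-y_k)$ collapse onto the clean domain $\cD_3$, and getting this matching right — rather than the estimation itself, which is routine once the pieces are assembled — is the delicate point. A secondary technical nuisance is that the natural bound \eqref{J2b2} breaks down where $\im(y_j-y_k)=0$, so that a separate argument, supplied by Prop.~4.3, is required there to obtain holomorphy on all of $\cD_3$.
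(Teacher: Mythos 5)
Your proposal follows essentially the same route as the paper: reduce to $J_3^r$ via \eqref{J3s}--\eqref{J3r}, majorize the integrand using Lemma~B.1, Lemma~B.2 and Prop.~4.4, and identify the resulting $s$-integral with the $N=2$ instance of Lemma~C.3 to obtain both the holomorphy (via the quotient function $Q_2$, for which your Prop.~4.3/removable-singularity fallback near $\im(y_j-y_k)=0$ is an equivalent substitute) and the explicit bound \eqref{J3by} after the index rearrangement \eqref{sums3}. The argument is correct and matches the paper's proof in all essentials.
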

\begin{proof}
In view of~\eqref{J3s}, we need only estimate~$J_3^r$. 
From~Prop.~4.4 and the bounds~\eqref{Sb}--\eqref{W2be} we readily deduce
\begin{multline}\label{J3rb}
|J_3^r(x,y)|  <   \frac{C(b,\mu(x))}{\sin(\pi \im(y_1-y_2)/\re b)}\int_{\R^2}ds\exp(-\alpha (s_1+s_2)\im (y_1+y_2-2y_3)/2)
\\
  \times \frac{\sinh(\alpha(s_1-s_2))\sinh(\gamma \im (y_1-y_2)(s_1-s_2)/2)}{\prod_{j=1}^3\prod_{k=1}^2\cosh(\gamma (s_k- \re \tilde{x}_j))}. 
\end{multline}
We proceed to rewrite this in such a way that Lemma~C.3 may be invoked. To this end we set
 \be
S_2\equiv (s_1+s_2)/2, \ \ \tilde{s}_j\equiv s_j-S_2,\ \ j=1,2.
\ee
Consider now the auxiliary integral
\begin{multline}
A_2(t,u)\equiv \int_{\R^2}ds \frac{\sinh(\gamma(s_1-s_2))}{\prod_{j=1}^3\prod_{k=1}^2\cosh(\gamma s_k-t_j)}\exp\Big(\gamma S_2\sum_{j=1}^2 (u_3-u_j)\Big)
\\
\times\sum_{\tau \in S_2}(-)^{\tau}\exp\Big(\gamma\tilde{s}_{\tau(1)} (u_2-u_1)\Big).
\end{multline}
Clearly, $A_2(\gamma\re \tilde{x},2\im y/\re b)$ is equal to the integral in the bound~\eqref{J3rb} times a factor $-2$. The integrand is symmetric under the interchange $s_1\leftrightarrow s_2$, and the sum is antisymmetric. Thus we may replace the sum by twice the summand with~$\tau ={\rm id}$. Taking next $s\to z/\gamma$, this yields 
\begin{multline}
A_2(t,u)=\frac{2}{\gamma^2} 
\int_{\R^2}dz \frac{\sinh(z_1-z_2)}{\prod_{j=1}^3\prod_{k=1}^2\cosh (z_k- t_j)}
\\
\times \exp\Big(Z_2\sum_{j=1}^2(u_3-u_j)\Big)
\exp\Big((z_1-Z_2) (u_2-u_1)\Big).
\end{multline}
Comparing to~$C_2(u,t)$~\eqref{Cintegral}, we deduce
\be
A_2(t,u)=\frac{2}{\gamma^2}C_2(u,t).
\ee
Therefore, the bound~\eqref{J3rb} can be rewritten as
\be\label{J3C2}
|J_3^r(x,y)|  <   \frac{C(b,\mu(x))}{\gamma^2\sin(\pi \im(y_2-y_1)/\re b)}C_2(2\im y/\re b,\gamma\re \tilde{x}).
\ee

We can now appeal to Lemma~C.3 to deduce the holomorphy assertion. Also, assuming~\eqref{xyres}, we  can combine~\eqref{CN} with  
\bea\label{sums3}
\sum_{j=1}^2\re\big(\tilde{x}_{\tau(j)}\big)\im (y_3-y_j) & = &\sum_{j=1}^3\re\big(\tilde{x}_{\tau(j)}\big)\im (y_3-y_j)
\nonumber \\
=-\sum_{j=1}^3\re\big(\tilde{x}_{\tau(j)}\big)\im y_j & = &-\sum_{j=1}^3\re\big(\tilde{x}_{\tau(j)}\big)\im \tilde{y}_j,
\eea
to obtain~\eqref{J3by}.
\end{proof}

Thus far we have obtained holomorphy properties of $J_3$ by employing two contours equal to the same horizontal line. We conclude this section by allowing more general contours, so as to obtain a larger holomorphy domain. However, we still keep both contours equal and shall also restrict attention to real $y$. Indeed, as will transpire shortly, even with these restrictions there are novel complications arising for $N=3$ (as compared to $N=2$), and we shall not strive for optimal results. In fact, we do not try and obtain a counterpart of Prop.~4.6, since even with such a result available, it would be a major undertaking to extend the $N=4$ case beyond Props.~6.1--6.4. 

Just as for~$J_2$, we can deform the contours to show that $J_3$ remains holomorphic for $(b,x)$ in a larger domain than $D_3$. We restrict attention to deformations of the contours for which both contours remain equal to the same contour~$\cC$, defined by functions~$f(s)$ from the space $\cL_w$ delineated above~\eqref{freq}. The choice of contour is such that it  separates the upward and downward pole sequences~\eqref{I3po}.
Also, to stay clear of the $W_2$-poles for~$|\im b|>0$, we need~$f$ to be such that when~$|s|$ increases by~$|\im b|$, then~$|f(s)|$ increases by less than~$|\re b|$, cf.~\eqref{freq}. This can be achieved in two distinct ways. We can either require
\be
|f(s)|<|\re b|/2,\ \ \ \forall s\in\R,
\ee
 or we can restrict the slope of $\cC$:
\be
|f'(s)|<|\re b|/|\im b|,\ \ \ \forall s\in\R. 
\ee

Next, the contour $\cC$ should be further restricted to ensure that the $J_2$-factor in the integrand~\eqref{I3} remains holomorphic when $z_1$ and $z_2$ vary over $\cC$. For $b\in (0,2a)$ this can be readily achieved, since in that case $J_2$ is holomorphic for $\re z_1 \ne \re z_2$. For~$\im b\ne 0$, however, we need $|\im (z_1-z_2)|<2a-\re b$ whenever $\re (z_1-z_2)$ equals $\pm \im  b$. Again, we can ensure this  by requiring either 
\be
|f(s)|<|2a-\re b|/2,\ \ \ \forall s\in\R,
\ee
 or 
\be
|f'(s)|<|2a-\re b|/|\im b|,\ \ \ \forall s\in\R. 
\ee

Accordingly, in the following analog of~Prop.~4.5 we obtain an extended holomorphy domain for~$J_3$ by allowing only contours for which either
\be\label{fres}
|f(s)|<m(\re b)/2,\ \ \ \forall s\in\R,
\ee
or
\be\label{fpres}
|f'(s)|<m(\re b)/|\im b|,\ \ \ \forall s\in\R,
\ee
where $m(d)$ is given by
\be\label{md}
m(d)\equiv \min (2a-d,d),\ \ d\in (0,2a).
\ee

Specifically, we define two domains
\begin{multline}\label{D3Im}
D_3^I\equiv \{ ( b,x)\in S_a\times\C^3 \mid  |\re (x_m- x_n)|\ne |\im b|, 
\\
\max_{1\le j<k\le 3}|\im (x_j- x_k)|<2a-\re b +m(\re b)/2\},
\end{multline}
\begin{multline}\label{D3Re}
D_3^R    \equiv   \{ ( b,x)\in S_a\times\C^3 \mid  |\re (x_m-x_n)|>|\im b|>0, 
 \\
 |\im(x_m-x_n)\,\im b|<(|\re (x_m-x_n)|-|\im b|)m(\re b)\},
\end{multline}
where $1\le m<n\le 3$ and $k=1,2,3$. 

\begin{proposition}
Let $y\in\R^3$. Then the function~$J_3(b;x,y)$ is holomorphic in  $(b,x)$ on the domain
\be
D_3^e \equiv  D_3\cup D_3^I\cup D_3^R.
\ee
\end{proposition}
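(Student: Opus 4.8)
The plan is to prove that $J_3(b;x,y)$ extends holomorphically to $D_3^e = D_3\cup D_3^I\cup D_3^R$ by contour deformation, starting from the fixed-contour representation available on $D_3$ and on suitable sub-pieces, and using the discussion preceding the proposition (which already singled out the admissible classes of contours). Since holomorphy on $D_3$ is Prop.~5.1, it suffices to extend to $D_3^I$ and to $D_3^R$, and in each case the key point is to exhibit, for every point $(b,x)$ in the target domain, a single contour $\cC=\{s+if(s)\mid s\in\R\}$ with $f\in\cL_w$ such that: (i) $\cC$ separates the upward $G$-pole sequences $z_k=x_j-ib/2+ia+z_{lm}$ from the downward ones $z_k=x_j+ib/2-ia-z_{lm}$ for $j=1,2,3$; (ii) the $W_2$-factor in $I_3$~\eqref{I3} stays regular, i.e.\ the constraint coming from its poles (when $|s|$ increases by $|\im b|$, $|f(s)|$ increases by less than $|\re b|$) is met; (iii) the $J_2$-factor in $I_3$ stays in its holomorphy domain $\cD_2^e$, i.e.\ $|\im(z_1-z_2)|<2a-\re b$ whenever $\re(z_1-z_2)=\pm\im b$ (automatic when $b$ is real since then $J_2$ is holomorphic for $\re z_1\ne\re z_2$); and (iv) the integral converges, which follows from the exponential-decay bounds~\eqref{Sb}, \eqref{W2be} together with Prop.~4.3 for the $J_2$-factor, exactly as in the paragraph before Prop.~5.2 (here one uses that the decay rates do not depend on $\im(z_1-z_2)$ on compact subintervals).

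Concretely I would argue as follows. Fix $(b,x)$ in $D_3^I$ (resp.\ $D_3^R$). By Prop.~5.1 and the already-established ability to shift both contours simultaneously without crossing poles, it is enough to connect $(b,x)$ to a point of $D_3$ by a path along which an admissible separating contour varies continuously, picking up no poles. Requirement (i) forces, for each $j$, that the contour pass below $x_j - ib/2+ia$ and above $x_j+ib/2-ia$; translating into the function $f$ this is a family of linear constraints of the form $f(\re x_j) < \im x_j + a - \re b/2$ and $f(\re x_j) > \im x_j - a + \re b/2$ at the three abscissae $\re x_1,\re x_2,\re x_3$. The obstruction to interpolating these by an $f\in\cL_w$ is precisely contour pinching, which occurs only when two of the points $\re x_j$ coincide and the vertical gap between the required upper and lower values is exhausted — this is the role of the hypothesis $\max_{j<k}|\im(x_j-x_k)|<2a-\re b+m(\re b)/2$ in $D_3^I$ and of the slope/size conditions in $D_3^R$: they guarantee enough room, either by allowing $|f|<m(\re b)/2$ (absorbing the extra $m(\re b)/2$ of imaginary-part spread) or by allowing a controlled slope $|f'|<m(\re b)/|\im b|$ (so that between abscissae a distance $|\im b|$ apart, $|f|$ can change by almost $m(\re b)$, covering the pinch region near $\re(x_m-x_n)=\pm\im b$, which is exactly where the excluded loci $|\re(x_m-x_n)|=|\im b|$ and the $W_2$/$J_2$ poles sit). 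One checks that with $m(\re b)=\min(2a-\re b,\re b)$ both the $W_2$-regularity bound $|\re b|$ and the $J_2$-regularity bound $2a-\re b$ are respected by~\eqref{fres}--\eqref{fpres}, since $m(\re b)\le\re b$ and $m(\re b)\le 2a-\re b$. Having produced such a contour for the target point and a continuous family of them along a connecting path, Morera/Cauchy plus the uniform convergence from (iv) gives holomorphy at $(b,x)$, and hence on all of $D_3^e$.

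I expect the main obstacle to be the bookkeeping in step (i)--(ii)--(iii) showing that the three linear interpolation constraints at $\re x_1,\re x_2,\re x_3$ can genuinely be met by a function in $\cL_w$ obeying \eqref{fres} or \eqref{fpres}, \emph{and} that one can move continuously from any point of $D_3^I\cup D_3^R$ into $D_3$ without ever losing feasibility — i.e.\ that $D_3^e$ is the honest domain of this construction and that no pinch is encountered en route. The two alternative contour classes \eqref{fres}/\eqref{fpres} have to be used in tandem: near configurations with $|\re(x_m-x_n)|$ small one uses the bounded-$|f|$ class, and near $|\re(x_m-x_n)|\approx|\im b|$ one uses the bounded-slope class, and one must verify these two regimes overlap so that a single contour (hence a single representation) is available throughout each connected piece. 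Everything else — the decay estimates for convergence, the non-crossing of $G$-poles, the matching of the $s_\delta$-poles of the $W_2$-coefficients by zeros of $G$ — is routine given Lemmas~B.1--B.2 and the analysis already carried out for Prop.~4.5 and in the proof of Prop.~5.2.
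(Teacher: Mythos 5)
Your strategy is the same as the paper's (deform the two equal contours within the admissible classes \eqref{fres}/\eqref{fpres} set up before the proposition), but your text stops exactly where the proof has to begin: the whole content of the argument is the explicit verification that, under the hypotheses defining $D_3^I$ and $D_3^R$, an admissible separating contour really exists, and you defer precisely that step ("the main obstacle\ldots bookkeeping") without carrying it out. Concretely, the paper treats $D_3^I$ by starting from the representation \eqref{J3eta} on $\R+i\phi(x)$ and observing that the first poles of the sequences \eqref{I3po} overshoot that line by less than $m(\re b)/2$, so downward/upward \emph{triangular indentations} of height $<m(\re b)/2$ suffice; the excluded locus $|\re(x_m-x_n)|=|\im b|$ is exactly what makes these indentations disjoint. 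For $D_3^R$ it writes down a specific piecewise linear contour: horizontal segments joining $P_m^{\pm}=x_m\pm|\im b|/2$ (these windows are needed because the upward/downward $G$-pole sequences attached to $x_m$ sit at the abscissae $\re x_m\pm\im b/2$, not at $\re x_m$), slanted connectors from $P_3^{+}$ to $P_2^{-}$ and from $P_2^{+}$ to $P_1^{-}$ whose slopes obey \eqref{fpres} precisely because of the inequality $|\im(x_m-x_n)\,\im b|<(|\re(x_m-x_n)|-|\im b|)\,m(\re b)$, and outer segments to $(\pm R,0)$ with $R$ large. None of this construction appears in your proposal, so the assertion that the interpolation constraints "can be met" remains unproved; that is a genuine gap, not bookkeeping.

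Two further inaccuracies would need repair even in a fleshed-out version. First, you place the separation constraints at the abscissae $\re x_j$, whereas they live at $\re x_j+\im b/2$ (below the upward sequence) and $\re x_j-\im b/2$ (above the downward one); this is why the $D_3^R$ contour must run horizontally through a window of width $|\im b|$ around each $x_m$. Second, pinching is not caused by coinciding $\re x_j$ but by $|\re(x_m-x_n)|=|\im b|$, where an upward sequence of one $x_j$ aligns with a downward sequence of another and where the $W_2$- and $J_2$-regularity conditions at $\re(z_1-z_2)=\pm\im b$ bite (you partly correct this later, but the two statements conflict). Finally, the path-continuation into $D_3$ that you propose is unnecessary: holomorphy is local, so one admissible contour valid in a neighbourhood of each point, together with Cauchy deformation to identify the representations on overlaps, is all that is needed — which is how the paper proceeds. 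Your convergence discussion (Lemmas B.1--B.2, Prop.~4.3, and the argument before Prop.~5.2) is fine and matches the paper.
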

\begin{proof}
 First assume $(b,x)\in D_3^I$. We start from the representation~\eqref{J3eta}. Then the first pole in the upward/downward sequences~\eqref{I3po} has an imaginary part that is below/above the contour~$\R+i\phi(x)$ by a distance less than $m(\re b)/2$, so we need only use downward/upward triangular indentations  whose heights are smaller than~$m(\re b)/2$. (Note that we can make disjoint indentations, since $|\re (x_m- x_n)|\ne |\im b|$.) This ensures that~$z_1-z_2$ stays away from the $W_2$-poles and the $J_2$-argument stays in~$D_2$. Hence holomorphy in~$D_3^I$ results.

Next, let $(b,x)\in D_3^R$. The restriction on the real parts implies that the numbers $\re x_m$, $m=1,2,3$, are distinct. By permutation symmetry, we need only define the piecewise linear contour $\cC$ for the case~$\re x_3<\re x_2<\re x_1$. Then we have
\be
\re x_3+|\im b|/2<\re x_2-|\im b|/2,\ \ \re x_2+|\im b|/2<\re x_1-|\im b|/2.
\ee
First, we connect the pairs of points 
\be
P_m^{\pm}\equiv x_m\pm |\im b|/2, \ \ m=1,2,3,
\ee
with (horizontal) line segments. Then we connect~$P_3^{+}$, $P_2^{-}$ and $P_2^{+}$, $P_1^{-}$ with line segments. The restrictions on~$D_3^R$ imply that the slopes of the latter satisfy~\eqref{fpres}. Finally, we connect $P_3^{-}$ to~$(-R,0)$ and $P_1^{+}$ to~$(R,0)$, where $R>0$ is chosen sufficiently large so that the slopes~$s_{\pm}$ of the two line segments also satisfy~$|s_{\pm}|<m(\re b)/|\im b|$. Completing $\cC$ with the intervals~$(-\infty,-R)$ and~$(R,\infty)$, we stay in the region where the integrand is holomorphic. Hence holomorphy of~$J_3$ in~$D_3^{R}$ follows as well.
\end{proof}


\section{The inductive step}

In this section we use the validity of Props.~5.1--5.4 as the starting point for the inductive step $N-1\to N$. More specifically,  Props.~6.1--6.4 below reduce to the former for $N=3$, and our induction assumption is that they hold true if we replace~$N$ by~$N-1$. This has the advantage of expository conciseness, but the minor drawback that we need to invoke Prop.~6.3 with $N\to N-1$ to obtain Prop.~6.1. As it turns out, our account in Section~5 can be followed to a large extent. Accordingly, we sometimes skip details when their general-$N$ version will be clear from Section~5.  

The integrand arising when constructing $J_N$ from $J_{N-1}$ is given by (cf.~\eqref{JN}) 
\be\label{IN1}
I_{N}(b;x,y,z)=W_{N-1}(b;z)\cS^{\sharp}_N(b;x,z)J_{N-1}(b;z,(y_1-y_N,\ldots,y_{N-1}-y_N)).
\ee
From \eqref{rbc} we obtain the bound
\be\label{SNb}
|\cS^{\sharp}_N(b;x,z)|<\prod_{j=1}^N\prod_{k=1}^{N-1}\frac{C_r(b,z_k-x_j)}{\cosh(\gamma\re(z_k-x_j))},\ \ \  \gamma=\alpha\re b/2,
\ee
as long as $(b,z_k-x_j)\in\cD_r$ for $j=1,\ldots,N$ and $k=1,\ldots,N-1$. Also, from~\eqref{wbr} we have
\be\label{WN-1be}
|W_{N-1}(b;z)|<\prod_{1\leq j<k\leq N-1}C_w(b)\sinh^2(\gamma(z_j-z_k)),\ \ \ \im z_1=\cdots=\im z_{N-1}.
\ee

In order to estimate the factor $J_{N-1}$ in~\eqref{IN1}, we assume that the imaginary parts of~$z_1,\ldots,z_{N-1}$ are equal to some $c\in\R$, and replace~$N$ by~$N-1$ in Prop.~6.3 below. Then the difference function $d_{N-1}(z)$ (given by~\eqref{d}  and~\eqref{imordN}) vanishes and $\phi_{N-1}(z)$~\eqref{phi} reduces to $c$. Hence we obtain the bound
\begin{multline}\label{JN-1bea}
|J_{N-1}(b;z,(y_1-y_N,\ldots,y_{N-1}-y_N))|\\ < C(b)\exp\left(-\alpha c\sum_{j=1}^{N-1}(y_j-y_N)\right)\prod_{1\le m<n\le N-1}\frac{z_m-z_n}{\sinh(\gamma(z_m-z_n))},
\end{multline}
where $C$ is continuous on $S_a$. Together with the bounds \eqref{SNb} and \eqref{WN-1be}, this implies that $I_N$ has exponential decay when one or more of the quantities $|\re z_1|,\ldots,|\re z_{N-1}|$ diverge, uniformly for $c$ varying over bounded intervals.

We note that the poles of the kernel function $\cS^{\sharp}_N(b;x,z)$ are located at (cf.~\eqref{Gpo}--\eqref{Gze})
\be\label{INpo}
z_k=x_j\pm(ib/2-ia-z_{lm}),\ \ k=1,\ldots,N-1,\ \ j=1,\ldots,N,\ \ \ l,m\in\N.
\ee
In addition, by the induction assumption, we may make use of Prop.~6.1 after taking $N\to N-1$, and hence conclude that $J_{N-1}(b;z,(y_1-y_N,\ldots,y_{N-1}-y_N))$ is holomorphic in
$D_{N-1}$.
Requiring at first $x\in\R^N$, it now follows that the integrand $I_N$ is regular for $z\in\R^{N-1}$, so that
\be\label{JNdef}
J_N(b;x,y)\equiv \frac{e^{i\alpha y_N(x_1+\cdots+x_N)}}{(N-1)!}\int_{\R^{N-1}}dzI_N(b;x,y,z),\ \ (b,x,y)\in S_a\times\R^N\times\R^N,
\ee
is well defined.

It is clear that $J_N(b;x,y)$ (as initially defined by \eqref{JNdef}) extends to a function that is holomorphic in $b$ and $x_j$ for $b\in S_a$ and $|\im x_j|<a-\re b/2$, $j=1,\ldots,N$. Also, the above bounds imply that we are allowed to shift all contours $\R$ up and down by the same amount, provided none of the poles \eqref{INpo} are met. In this way we can recursively extend the holomorphy domain. In order to make this precise, we set 
\be\label{phi}
\phi_N(x)\equiv\im(x_{j_1}+x_{j_N})/2,\ \ \ \ x\in\C^N,
\ee
where the indices $j_1$ and $j_N$ are such that
\be\label{imordN}
\im x_{j_{N}}\le \im x_{j_{N-1}}\le\cdots\le \im x_{j_2}\le \im x_{j_1},\ \ \ \{j_{1},\ldots,j_N\}=\{1,\ldots,N\}.
\ee
Then we have the following general-$N$ version of Prop.~5.1.

\begin{proposition}
Let $y\in\R^N$. Then the function $J_N(b;x,y)$ is holomorphic in
\be\label{DN}
D_N\equiv \{(b,x)\in S_a\times\mathbb{C}^N \mid \max_{1\leq j<k\leq N}|\im (x_j-x_k)|<2a-\re b\}.
\ee
\end{proposition}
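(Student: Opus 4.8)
The plan is to follow the template of the proof of Prop.~5.1, the only genuinely new inputs being the induction hypotheses: holomorphy of $J_{N-1}$ on $D_{N-1}$ (Prop.~6.1 with $N\to N-1$) and the decay bound \eqref{JN-1bea} already extracted from Prop.~6.3 with $N\to N-1$. For a fixed real number $\eta$, I would first show that the formula
\[
J_N^{(\eta)}(b;x,y)\equiv\frac{e^{i\alpha y_N(x_1+\cdots+x_N)}}{(N-1)!}\int_{(\R+i\eta)^{N-1}}dz\,I_N(b;x,y,z)
\]
defines a holomorphic function of $(b,x)$ on the open set $U_\eta$ consisting of those $(b,x)\in S_a\times\C^N$ for which the line $\R+i\eta$ lies strictly above every downward and strictly below every upward $G$-pole sequence of $\cS_N^{\sharp}(b;x,\cdot)$, cf.~\eqref{INpo}; concretely, $U_\eta=\{(b,x)\in S_a\times\C^N\mid |\im x_j-\eta|<a-\re b/2,\ j=1,\ldots,N\}$. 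On $U_\eta$ the integrand is jointly holomorphic in $(b,x,z)$ with $z$ on the contour (no pole being met), the differences $z_j-z_k$ are real so that $W_{N-1}(z)$ and $J_{N-1}(z,\cdot)$ are regular, and by \eqref{SNb}, \eqref{WN-1be}, \eqref{JN-1bea} the integrand decays exponentially as one or more $|\re z_k|\to\infty$, uniformly on compacta; hence differentiation under the integral sign (or Morera together with Fubini) yields holomorphy of $J_N^{(\eta)}$ on $U_\eta$.

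Next I would observe that $\bigcup_{\eta\in\R}U_\eta=D_N$: given $(b,x)\in D_N$ one has $\max_j\im x_j-\min_j\im x_j=\max_{j<k}|\im(x_j-x_k)|<2a-\re b$, so the choice $\eta=\phi_N(x)=(\max_j\im x_j+\min_j\im x_j)/2$ (cf.~\eqref{phi}, \eqref{imordN}) gives $|\im x_j-\eta|\le(\max_j\im x_j-\min_j\im x_j)/2<a-\re b/2$ for every $j$, i.e.\ $(b,x)\in U_{\phi_N(x)}$. In particular $U_0$ is exactly the strip $\{|\im x_j|<a-\re b/2\}$ on which $J_N$ was initially defined by \eqref{JNdef}, and there $J_N^{(0)}=J_N$.

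Then I would check the consistency of the family $\{J_N^{(\eta)}\}$: on $U_{\eta_1}\cap U_{\eta_2}$ one has $J_N^{(\eta_1)}=J_N^{(\eta_2)}$, because for such $(b,x)$ every horizontal line between heights $\eta_1$ and $\eta_2$ also separates the two pole sequences, so Cauchy's theorem combined with the decay at the contour tails lets one slide the contour from $\R+i\eta_1$ to $\R+i\eta_2$. Consequently the $J_N^{(\eta)}$ glue to a single holomorphic function on $D_N=\bigcup_\eta U_\eta$ that restricts to $J_N$ on $U_0$; this is the asserted holomorphic extension, and one records along the way the working representation $J_N(b;x,y)=J_N^{(\phi_N(x))}(b;x,y)$ valid throughout $D_N$ (the analog of \eqref{J3eta}).

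I expect the only slightly delicate point to be the verification, underlying both the description of $U_\eta$ and the sliding in the consistency step, that a common imaginary translation of the $N-1$ integration variables never meets a singularity of $W_{N-1}$ or of $J_{N-1}$; this works precisely because those two factors depend on $z$ only through the differences $z_j-z_k$, which a common shift leaves invariant, so that only the $\cS_N^{\sharp}$-poles \eqref{INpo} constrain the shift — and these are threaded by the line $\R+i\phi_N(x)$. Everything else is bookkeeping, once the bounds \eqref{SNb}--\eqref{JN-1bea} and the level-$(N-1)$ Propositions~6.1 and~6.3 are in hand.
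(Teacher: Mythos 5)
Your proposal is correct and follows essentially the same route as the paper: the paper also fixes $(b,x)\in D_N$, notes that the horizontal contour at height $\phi_N(x)$ separates the upward and downward pole sequences \eqref{INpo}, and continues $J_N$ by simultaneous contour shifts justified by the tail decay coming from \eqref{SNb}, \eqref{WN-1be} and the level-$(N-1)$ induction hypotheses, arriving at the representation \eqref{JNeta}. Your explicit bookkeeping with the sets $U_\eta$, the identity $\bigcup_\eta U_\eta=D_N$, and the gluing/consistency step is just a more formal write-up of the paper's iterative shift argument, with the same key observation that $W_{N-1}$ and $J_{N-1}$ depend on $z$ only through the differences $z_j-z_k$.
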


\begin{proof}
Let us fix $(b,x)\in D_N$. Then we have
\be
\im x_j+a-\re b/2>\phi_N(x),\ \ \im x_j-a+\re b/2<\phi_N(x),\ \ \ j=1,\ldots,N.
\ee
It follows that the contour $\R+i\phi_N(x)$ stays below/above the upward/downward $G$-pole sequences \eqref{INpo}. Thus we can continue $J_N$ to the $x$-value in question by simultaneously shifting all contours, without passing any of these poles. In this way we obtain the representation
\be\label{JNeta}
J_N(b;x,y)=\frac{e^{i\alpha y_N(x_1+\cdots+x_N)}}{(N-1)!}\int_{(\R+i\phi_N(x))^{N-1}}dzI_N(b;x,y,z),\ \ (b,x)\in D_N,\ \ y\in\R^N,
\ee
whence holomorphy in $D_N$ is plain.
\end{proof}

In our next proposition we prove that $J_N$ is a joint eigenfunction of the $2N$ A$\De$Os~\eqref{Aks}.   
 As in Prop.~5.2, we need to shift some of the~$N-1$ contours in the proof. To show that this does not cause problems with the contour tails, we can no longer rely on the bounds \eqref{WN-1be} and \eqref{JN-1bea}, since they only hold for $\im z_1=\cdots=\im z_{N-1}$.

However, it follows from the bounds \eqref{wbc}--\eqref{Cwas} that the rate at which $W_{N-1}(z)$ diverges as $|\re(z_m-z_n)|\to\infty$, $1\leq m<n\leq N-1$, is independent of $\im(z_m-z_n)$. Moreover, by the induction assumption, we may invoke Prop.~6.3 for $N\to N-1$. As long as $|\im(z_m-z_n)|<2a-\re b$, this entails that on closed subintervals the decay rate as $|\re(z_m-z_n)|\to\infty$ in the bound on the $J_{N-1}$-factor in the integrand $I_N$ \eqref{IN1} does not depend on $\im(z_m-z_n)$ either. Combined with the bound \eqref{SNb} on the kernel function~$\cS_N^{\sharp}$, these arguments show that the contour shifts are legitimate.

\begin{proposition}
Fixing $y\in\R^N$ and letting $(b,x)$ vary over
\be
D_{N,s}\equiv\{(b,x)\in D_N \mid (b,x-i\eta)\in D_N, \forall\eta\in [0,a_l]^N,\ \re b\in (0,a_s)\}, 
\ee
  we have the eigenvalue equations
\be\label{ANeig}
A^{(N)}_{k,\delta}(x) J_N(x,y)=S^{(N)}_k(e_{\de}(2y_1),\ldots,e_{\de}(2y_N))J_N(x,y),
\ee
where $k=1,\ldots,N$, and $\de=+,-$.
\end{proposition}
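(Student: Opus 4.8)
The plan is to run, with $N-1$ integration contours, the six-step argument used in the proofs of Prop.~4.2 and Prop.~5.2. Since $J_N$ is holomorphic in $D_N$ (Prop.~6.1) and we restrict to $D_{N,s}$, the shifts $x_l\to x_l-ia_{-\de}$ are well defined; by holomorphy in $b$ on the strip $\re b\in(0,a_s)$ and in $x$, it suffices to prove~\eqref{ANeig} for $b\in(0,a_s)$ and $|\im x_j|<\epsilon/2$ with $\epsilon\equiv a_s-b>0$, so that the distance $a_s+a_l-b$ between the upward and downward pole sequences~\eqref{INpo} exceeds $a_l$ by $\epsilon$. For such $x$, all contours $\R$ in~\eqref{JNdef} may be moved down simultaneously to $C_-\equiv\R-ia_l/2$ without crossing poles; shifting $A^{(N)}_{k,\de}(x)$ through the plane wave produces the factor $e_{\de}(2ky_N)$, and its remaining action on $\cS^{\sharp}_N(x,z)$ may be taken under the integral sign, since the poles~\eqref{INpo} stay clear of $C_-$ while being displaced by at most $a_l$. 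Invoking the kernel identity~\eqref{key}, the left-hand side of~\eqref{ANeig} reduces to a sum over $l\in\{k-1,k\}$, and it remains to prove that for each such $l$ the integral
\[
\int_{C_-^{N-1}}dz\,W_{N-1}(z)\,J_{N-1}(z,y')\,A^{(N-1)}_{l,\de}(-z)\cS^{\sharp}_N(x,z),\qquad y'\equiv(y_1-y_N,\ldots,y_{N-1}-y_N),
\]
equals $(N-1)!\,\exp(-i\alpha y_N(x_1+\cdots+x_N))\,S^{(N-1)}_l\big(e_{\de}(2y'_1),\ldots,e_{\de}(2y'_{N-1})\big)\,J_N(x,y)$; combining the contributions for $l=k-1$ and $l=k$ via the symmetric-function recurrence~\eqref{Srec} then yields~\eqref{ANeig}.

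To prove this per-$l$ identity I would expand $A^{(N-1)}_{l,\de}(-z)\cS^{\sharp}_N(x,z)$, using~\eqref{Aks}, as a sum over subsets $I\subset\{1,\ldots,N-1\}$ with $|I|=l$, the $I$-term carrying the coefficient $\prod_{m\in I,\,n\notin I}s_{\de}(z_n-z_m-ib)/s_{\de}(z_n-z_m)$ and the shifts $\prod_{m\in I}\exp(ia_{-\de}\partial_{z_m})$. For fixed $I$, change variables $z_m\to z_m-ia_{-\de}$ for $m\in I$, and use the $W$-A$\De$Es~\eqref{Wade} (with index set $I$) to absorb the ensuing shift of $W_{N-1}$; the computation carried out in~\eqref{rewr1}--\eqref{rewr2} for $N=3$ goes through with the pair $(1,2)$ replaced by the product over all pairs $(m,n)$ with $m\in I$, $n\notin I$, turning the product of the shifted $s_{\de}$-coefficient with the $W$-factor back into $\prod_{m\in I,\,n\notin I}s_{\de}(z_m-z_n-ib)/s_{\de}(z_m-z_n)$, while $\cS^{\sharp}_N(x,z)$ returns to its original form and $J_{N-1}$ carries the argument shifted down by $ia_{-\de}$ in the entries indexed by $I$. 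After this step the $z_m$-contours ($m\in I$) sit at $C_-+ia_{-\de}$ and the $z_n$-contours ($n\notin I$) at $C_-$; raising the latter by $a_{-\de}$ so that all $N-1$ contours coincide at $C_-+ia_{-\de}$, the sum over $I$ reassembles into $W_{N-1}(z)\cS^{\sharp}_N(x,z)\,A^{(N-1)}_{l,\de}(z)J_{N-1}(z,y')$ integrated over $(C_-+ia_{-\de})^{N-1}$. The induction hypothesis (Prop.~6.2 with $N\to N-1$) now applies — for $z$ on these contours the imaginary-part spread is at most $a_{-\de}\le a_l<2a-\re b$, so the relevant points lie in $D_{N-1,s}$ — and turns $A^{(N-1)}_{l,\de}(z)J_{N-1}(z,y')$ into $S^{(N-1)}_l\big(e_{\de}(2y'_1),\ldots,e_{\de}(2y'_{N-1})\big)J_{N-1}(z,y')$. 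Pulling out this eigenvalue and moving all contours back down to $C_-$ and then up to $\R$ recovers $(N-1)!\,\exp(-i\alpha y_N(x_1+\cdots+x_N))\,J_N(x,y)$, completing the argument.

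Two points require attention, both handled as in Section~5. First, because several contours are moved independently, the bounds~\eqref{WN-1be} and~\eqref{JN-1bea} — valid only for equal imaginary parts — no longer control the contour tails directly; but, as recorded in the paragraph preceding the proposition, the bounds~\eqref{wbc}--\eqref{Cwas} on $W_{N-1}$ and Prop.~6.3 (with $N\to N-1$) on $J_{N-1}$ show that the pertinent decay rates as $|\re(z_m-z_n)|\to\infty$ are uniform in $\im(z_m-z_n)$ on closed subintervals of $\big(-(2a-\re b),2a-\re b\big)$, which is enough. Second — and this is the main obstacle — one must check that raising the $z_n$-contours ($n\notin I$) by $a_{-\de}$ crosses no poles. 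This is precisely the matched-pole-and-zero bookkeeping performed at the end of the proof of Prop.~5.2, now for general $N$. Since every coefficient and every argument of $G$ in the integrand depends on the integration variables only through the pairwise differences $z_m-z_n$, the verification decouples into independent two-variable checks of exactly that type: for each pair with $m\in I$, $n\notin I$, the poles of $1/G\big(\pm(z_m-z_n)+ia-ib\big)$ contributed by $W_{N-1}$ that lie in the strip being swept are cancelled by zeros of the numerator $s_{\de}\big(\pm(z_m-z_n)-ib\big)$, and the poles of $1/s_{\de}(z_m-z_n)$ there by zeros of $G\big(\pm(z_m-z_n)+ia\big)$, again contributed by $W_{N-1}$; as in Section~5 the cases $a_{-\de}=a_s$ and $a_{-\de}=a_l$ are treated separately, and the degenerate subcase $a_s=a_l$ needs the usual extra care. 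Because these cancellations are local in each difference variable, and pairs of variables that are shifted together have unchanged difference, no genuinely new phenomenon arises for $N>3$, and the required contour shifts are legitimate; this completes the proof.
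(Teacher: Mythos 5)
Your proposal is correct and follows essentially the same route as the paper's proof: the same six-step adaptation of the $N=3$ argument, the subset-$I$ expansion via \eqref{Aks} with the change of variables $z_m\to z_m-ia_{-\de}$ for $m\in I$, absorption of the shifted weight factor through \eqref{Wade}, the raising of the $z_n$-contours ($n\notin I$) with the same matched pole/zero bookkeeping for the pairs $(m,n)$, $m\in I$, $n\notin I$ (cases $a_{-\de}=a_s$ and $a_{-\de}=a_l$ separately), application of the induction hypothesis on the common contour $C_-+ia_{-\de}$, and tail control via \eqref{wbc}--\eqref{Cwas} and Prop.~6.3 with $N\to N-1$. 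The only (immaterial) deviations are that you treat only $l\in\{k-1,k\}$ rather than all $l=1,\ldots,N-1$, and your phrase that the integrand depends on $z$ only through differences overlooks the $\cS_N^{\sharp}$-factor, whose arguments $z_k-x_j$ nonetheless stay in the holomorphy domain during the shifts, exactly as in the paper.
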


\begin{proof}
The six steps in the proof of Prop.~5.2 can be readily adapted, so we only indicate the changes. The first two steps yield
\begin{multline}
e_{\de}(2ky_N)\frac{e^{i\alpha y_N(x_1+\cdots+x_N)}}{(N-1)!}\\ \times\int_{C_-}dz_1\cdots\int_{C_-}dz_{N-1}W_{N-1}(z)A^{(N)}_{k,\delta}(x)\cS^{\sharp}_N(x,z)J_{N-1}(z,(y_1-y_N,\ldots,y_{N-1}-y_N)),
\end{multline}
as the generalization of~\eqref{3lhs1}.

For the third step, we invoke the kernel identity \eqref{key} and the symmetric function recurrence \eqref{Srec} to deduce that we need only show that the integrals
\be\label{Nlhs}
\int_{C_-}dz_1\cdots\int_{C_-}dz_{N-1}W_{N-1}(z)J_{N-1}(z,(y_1-y_N,\ldots,y_{N-1}-y_N))A^{(N-1)}_{l,\delta}(-z)\cS^{\sharp}_N(x,z),
\ee
are equal to
\begin{multline}\label{eigN-1l}
S^{(N-1)}_l\big(e_{\de}(2(y_1-y_N)),\ldots,e_{\de}(2(y_{N-1}-y_N))\big)\\ \times(N-1)!\exp(-i\alpha y_N(x_1+\cdots+x_N))J_N(x,y),
\end{multline}
for $l=1,\ldots,N-1$.

Fourth, we fix $l$ and use \eqref{Aks} to write \eqref{Nlhs} as 
\begin{multline}\label{Nlhs1}
\int_{C_-}dz_1\cdots\int_{C_-}dz_{N-1}W_{N-1}(z)J_{N-1}(z,(y_1-y_N,\ldots,y_{N-1}-y_N))\\ \times \sum_{\substack{I\subset\lbrace 1,\ldots,N-1\rbrace\\ |I|=l}}\prod_{\substack{m\in I\\ n\notin I}}\frac{s_{\delta}(z_m-z_n+ib)}{s_{\delta}(z_m-z_n)}\prod_{m\in I}\exp(ia_{-\delta}\partial_{z_m})\cS^{\sharp}_N(x,z).
\end{multline}
Changing variables $z_m\to z_m-ia_{-\de}$, $m\in I$, we rewrite \eqref{Nlhs1} as
\begin{multline}
\sum_{\substack{I\subset\{ 1,\ldots,N-1\}\\ |I|=l}}\left(\prod_{m\in I}\int_{C_-+ia_{-\de}}dz_m\right)\left(\prod_{n\notin I}\int_{C_-}dz_n\right) \cS^{\sharp}_N(x,z)\prod_{\substack{m\in I\\ n\notin I}}\frac{s_{\delta}(z_m-ia_{-\de}-z_n+ib)}{s_{\delta}(z_m-ia_{-\de}-z_n)}\\ \times\prod_{m\in I}\exp(-ia_{-\delta}\partial_{z_m})W_{N-1}(z)J_{N-1}(z,(y_1-y_N,\ldots,y_{N-1}-y_N)),
\end{multline}
and then use \eqref{Wade}  to see this equals
\begin{multline}
\sum_{\substack{I\subset\{ 1,\ldots,N-1\}\\ |I|=l}}\left(\prod_{m\in I}\int_{C_-+ia_{-\de}}dz_m\right)\left(\prod_{n\notin I}\int_{C_-}dz_n\right)W_{N-1}(z)\cS^{\sharp}_N(x,z)\\ \times \prod_{\substack{m\in I\\ n\notin I}}\frac{s_{\de}(z_m-z_n-ib)}{s_{\de}(z_m-z_n)}\prod_{m\in I}\exp(-ia_{-\de}\partial_{z_m})J_{N-1}(z,(y_1-y_N,\ldots,y_{N-1}-y_N)).
\end{multline}

We claim that when we simultaneously shift the contours $C_-$ up by $a_{-\de}$ for all variables $z_n$, $n\notin I$,   we do not meet poles. (The paragraph preceding the proposition implies that these shifts do not create problems at the tail ends.) Assuming this claim is valid, we note that the integrands are regular for $z_1,\ldots,z_{N-1}\in C_-+ia_{-\de}$, and since the shifts ensure that the $N-1$ contours are equal in all integrals, we may use the eigenvalue equation \eqref{ANeig} with $N\to N-1$ and $k\to l$ (the fifth step). In the sixth step, we shift the contours back to $\R$, hence arriving at \eqref{eigN-1l}.

It remains to prove the claim. Obviously, we stay in the holomorphy domain of~$\cS^{\sharp}_N$ while shifting the contours. Using Prop.~6.1 with $N\to N-1$, we see this is also true for $J_{N-1}$. Now we fix $I\subset\{1,\ldots,N-1\}$ such that $|I|=l$ and first choose $a_{-\de}=a_s$. When shifting the contours for $z_n$, $n\notin I$, we only meet the simple poles of the factors $1/G(z_m-z_n+ia-ib)$ with $m\in I$ and $n\notin I$ at $z_m-z_n=ib$, but they are matched by zeros of the factors $s_{\de}(z_m-z_n-ib)$. Moreover, the simple poles arising from $s_{\de}(z_m-z_n)$ at $z_m-z_n=0$ and $z_m-z_n=ia_l$ (encountered when $a_s=a_l$) are matched by zeros of the factors $G(z_m-z_n+ia)$ in $W_{N-1}(z)$. It follows that the integrands are regular for $z_m\in C_-+ia_s$, $m\in I$, and $z_n$ in the strip $\im z_n\in [-a_l/2,-a_l/2+a_s]$, $n\notin I$, as claimed.

Choosing next $a_{-\de}=a_l$, the simple poles of $1/G(z_m-z_n+ia-ib)$, $m\in I,n\notin I$, located at $z_m-z_n=ib+ika_s$ for $ka_s\leq a_l-b$, are matched by zeros of $s_{\de}(z_m-z_n-ib)$, and the simple poles resulting from $s_{\de}(z_m-z_n)$ at $z_m-z_n=ika_s$ for $ka_s\leq a_l$ are matched by zeros of $G(z_m-z_n+ia)$. Hence the integrands are also regular for $z_m\in C_-+ia_l$, $m\in I$, and $\im z_n\in [-a_l/2,a_l/2]$, $n\notin I$, as claimed.
\end{proof}

The generalization of Prop.~5.3 involves the function~$\phi_N(x)$~\eqref{phi} and the distance function
\be\label{d}
d_N(x)\equiv \im(x_{j_1}-x_{j_N}),\ \ \ x\in\C^N,
\ee
where $j_1$ and $j_N$ are given by \eqref{imordN}.

\begin{proposition}
Let $y\in\mathbb{R}^N$. For any $(b,x)\in D_N$ we have
\bea\label{JNb1}
|J_N(b;x,y)| & < & C(b,d_N(x))\exp\left(-\alpha \left[\phi_N(x)\sum_{j=1}^Ny_j+y_N\sum_{k=2}^{N-1}\big(\im~x_{j_k}-\phi_N(x)\big)\right]\right)
\nonumber \\
&  &  \times \prod_{1\le m<n\le N}\frac{ \re(x_m-x_n)}{\sinh(\gamma\,\re(x_m-x_n))},
\eea
where $C$ is continuous on $S_a\times [0,2a-\re b)$.
\end{proposition}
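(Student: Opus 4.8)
The plan is to mimic the proof of Prop.~5.3 exactly, now relying on the inductive hypothesis for $J_{N-1}$ rather than on the $N=2$ bound. Starting from the representation~\eqref{JNeta}, one majorizes by passing the modulus inside the integral:
\be
|J_N(x,y)|\le \frac{\exp\big(-\alpha y_N\sum_{j=1}^N\im x_j\big)}{(N-1)!}\int_{(\R+i\phi_N(x))^{N-1}}dz\,\big|W_{N-1}(z)\cS^{\sharp}_N(x,z)J_{N-1}(z,(y_1-y_N,\ldots,y_{N-1}-y_N))\big|.
\ee
On this contour the imaginary parts of $z_1,\ldots,z_{N-1}$ are all equal to $\phi_N(x)$, so $d_{N-1}(z)=0$, $\phi_{N-1}(z)=\phi_N(x)$, and the induction hypothesis (Prop.~6.3 with $N\to N-1$) applies in its simplest form~\eqref{JN-1bea}; likewise the bounds~\eqref{SNb} and~\eqref{WN-1be} for $\cS^{\sharp}_N$ and $W_{N-1}$ apply. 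Combining the three bounds, the $W_{N-1}$ and $J_{N-1}$ factors contribute $\prod_{1\le m<n\le N-1}(z_m-z_n)\sinh(\gamma(z_m-z_n))$ up to a constant, the $J_{N-1}$ factor contributes the exponential prefactor $\exp\big(-\alpha\phi_N(x)\sum_{j=1}^{N-1}(y_j-y_N)\big)$, and the kernel contributes $\prod_{j=1}^N\prod_{k=1}^{N-1}C_r(b,z_k+i\phi_N(x)-x_j)/\cosh(\gamma(z_k-\re x_j))$.

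Next one checks that the $C_r$-product is bounded by a constant $C(b,d_N(x))$ continuous on $S_a\times[0,2a-\re b)$. This is the direct analog of~\eqref{phid}: since $(b,x)\in D_N$ and $\phi_N(x)$ is the average of the largest and smallest imaginary parts of the $x_j$, one has $|\im(i\phi_N(x)-x_j)|\le d_N(x)/2<a-\re b/2$ for every $j$, so Lemma~B.1 gives the desired uniform control on each $C_r$ factor. After shifting the integration variable by $-i\phi_N(x)$ (equivalently, absorbing it), the remaining integral is
\be
\int_{\R^{N-1}}ds\,\frac{\prod_{1\le m<n\le N-1}(s_m-s_n)\sinh(\gamma(s_m-s_n))}{\prod_{j=1}^N\prod_{k=1}^{N-1}\cosh(\gamma(s_k-\re x_j))},
\ee
which is precisely the integral evaluated in Lemma~C.2 (with its $N$ replaced by $N-1$, and $t_j=\gamma\re x_j$). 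That lemma produces the product $\prod_{1\le m<n\le N}\re(x_m-x_n)/\sinh(\gamma\re(x_m-x_n))$ together with a constant, which assembles into the claimed bound~\eqref{JNb1}.

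The last point to verify is the exact form of the exponential prefactor. One has the overall factor $\exp\big(-\alpha y_N\sum_{j=1}^N\im x_j\big)$ from the plane wave in~\eqref{JNeta} times $\exp\big(-\alpha\phi_N(x)\sum_{j=1}^{N-1}(y_j-y_N)\big)$ from the $J_{N-1}$ bound; writing $\sum_{j=1}^N\im x_j=\sum_{k=1}^N\im x_{j_k}$ and $\im x_{j_1}+\im x_{j_N}=2\phi_N(x)$, a short rearrangement gives
\be
-\alpha\Big[\phi_N(x)\sum_{j=1}^N y_j+y_N\sum_{k=2}^{N-1}\big(\im x_{j_k}-\phi_N(x)\big)\Big],
\ee
matching~\eqref{JNb1}. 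I do not expect any genuine obstacle here: unlike the eigenvalue statements, this proposition is a pure estimate and every ingredient is already in place (the induction hypothesis, the $G$-ratio bounds of Appendix~B, and the explicit integral of Lemma~C.2). The only mild care needed is the combinatorial bookkeeping of the exponential prefactor and confirming that Lemma~C.2 is stated in exactly the generality required; both are routine.
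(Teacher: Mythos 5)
Your proposal is correct and follows essentially the same route as the paper's own proof: majorize the representation \eqref{JNeta} on the contour $(\R+i\phi_N(x))^{N-1}$, apply the induction hypothesis (with $d_{N-1}(z)=0$, $\phi_{N-1}(z)=\phi_N(x)$) together with the bounds \eqref{SNb} and \eqref{WN-1be}, control the $C_r$-product exactly as in \eqref{phid}, and evaluate the remaining integral via Lemma~C.2. Your bookkeeping of the exponential prefactor also agrees with the paper's stated result.
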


\begin{proof}
From the representation \eqref{JNeta} we have the majorization
\begin{multline}
|J_N(x,y)| \leq \frac{\exp\left(-\alpha y_N\sum_{j=1}^N\im x_j\right)}{(N-1)!}\\ \times \int_{(\mathbb{R}+i\phi_N(x))^{N-1}}dz|W_{N-1}(z)\cS^{\sharp}_N(x,z)J_{N-1}(z,(y_1-y_N,\ldots,y_{N-1}-y_N))|.
\end{multline}
By the induction assumption, \eqref{JNb1} holds true with $N$ replaced by $N-1$. Applying this to the factor~$J_{N-1}$ of the integrand, we observe that~$d_{N-1}(z)=0$ and~$\phi_{N-1}(z)=\phi_N(x)$. Using also the bounds \eqref{SNb} and \eqref{WN-1be}, we now infer
\begin{multline}
	|J_N(b;x,y)|\leq C(b)\exp\left(-\alpha \left[\phi_N(x)\sum_{j=1}^Ny_j+y_N\sum_{k=2}^{N-1}\big(\im~x_{j_k}-\phi_N(x)\big)\right]\right)\\ \times\int_{\mathbb{R}^{N-1}}ds\prod_{1\leq m<n\leq N-1}(s_m-s_n)\sinh(\gamma(s_m-s_n))\prod_{j=1}^N\prod_{k=1}^{N-1}\frac{C_r(b;s_k+i\phi_N(x)-x_j)}{\cosh(\gamma(s_k-\re x_j))}.
\end{multline}
Just as in the proof of Prop.~5.3 (cf.~\eqref{phid}), we see that the $C_r$-product is bounded above by a function $C(b,d_N(x))$ that is continuous on $S_a\times [0,2a-\re b)$. Using Lemma~C.2 to compute the remaining integral, \eqref{JNb1} results.
\end{proof}

  As in the $N=3$ case, we can allow complex~$y$, provided we restrict attention to a subdomain of $D_N$ for the dependence on $(b,x)$. First, we introduce general-$N$ notation that deviates slightly from the notation~\eqref{not3} for the case $N=3$: For a given vector $v\in\C^M$, $M>1$, we define $V_M\in\C$ and $v^{(M)}\in\C^M$ by
  \be\label{notM}
V_M\equiv \frac1M\sum_{j=1}^Mv_j,\ \ \ v^{(M)}_j\equiv v_j-V_M,\ \ j=1,\ldots,M.
\ee
 Then we need to work with
\be
D_N^r\equiv\{(b,x)\in S_a\times\C^N \mid \big|\im x_j^{(N)}\big|<a-\re b/2,\ j=1,\ldots,N\}\subset D_N,
\ee
 for the representation
\be\label{JNX}
J_N(x,y)=\frac{e^{i\alpha y_N(x_1+\cdots+x_N)}}{(N-1)!}\int_{(\R+i\im X_N)^{N-1}}dz I_N(x,y,z),\ \ \ \ y\in\R^N,
\ee
to hold true.

We now use \eqref{JNX} as a starting point for concluding that~$J_N(x,y)$ equals the product of a center-of-mass factor~$\exp(Ni\alpha X_NY_N)$ and a function $J_N^r(x,y)$ depending only on the differences~$x_j-x_{j+1}, y_j-y_{j+1}$, $j=1,\ldots,N-1$. The resulting representation is important in its own right, and it will enable us to control the $y$-dependence. Specifically,  we claim that we have
\be\label{JNs}
J_N(x,y)=\exp(Ni\alpha X_NY_N)J_N^r(x,y),
\ee
\be\label{JNr}
J_N^r(x,y)\equiv \frac{1}{(N-1)!}\int_{\R^{N-1}}ds W_{N-1}(s)\cS^{\sharp}_N(x^{(N)},s)J_{N-1}(s,(y_1-y_N,\ldots,y_{N-1}-y_N)),
\ee
where $(b,x)\in D_N^r$ and $y\in \R^N$. 

As before, our proof of this claim proceeds by induction on~$N$, noting~\eqref{JNs}--\eqref{JNr} reduce to~\eqref{J3s}--\eqref{J3r} for~$N=3$. First, we observe that the integral on the rhs of \eqref{JNX} can be rewritten as
\be
\int_{\R^{N-1}}dsW_{N-1}(s)\cS^{\sharp}_N(x^{(N)},s)J_{N-1}((s_1+X_N,\ldots,s_{N-1}+X_N),(y_1-y_N,\ldots,y_{N-1}-y_N)).
\ee
Next, we take $N\to N-1$ in \eqref{JNs}--\eqref{JNr}, to deduce from the induction assumption
\begin{multline}
J_{N-1}((s_1+X_N,\ldots,s_{N-1}+X_N),(y_1-y_N,\ldots,y_{N-1}-y_N))\\ = \exp\left(i\alpha X_N\sum_{j=1}^{N-1}(y_j-y_N)\right) J_{N-1}(s,(y_1-y_N,\ldots,y_{N-1}-y_N)).
\end{multline}
Using the identity
\be
y_N\sum_{j=1}^Nx_j+X_N\sum_{j=1}^{N-1}(y_j-y_N)=NX_NY_N,
\ee
 the claim now readily follows.

Setting
\be
\cD_N\equiv \{(b,x,y)\in D_N^r\times\C^N \mid |\im(y_j-y_k)|<\re b,\ j,k=1,\ldots,N\},
\ee
\be
\mu_N(x)\equiv\max_{j=1,\ldots,N}|\im x^{(N)}_j|,
\ee
we are in the position to state and prove the arbitrary-$N$ version of Prop.~5.4.

\begin{proposition}
The function $J_N(b;x,y)$ is holomorphic in $(b,x,y)$ on $\cD_N$. For any $(b,x,y)\in\cD_N$ with
\be\label{xyres2}
\re(x_j-x_k)\ne 0,\ \ \ \im(y_j-y_k)\ne 0,\ \ \ 1\leq j<k\leq N,
\ee
we have
\begin{multline}\label{JNby}
|J_N(b;x,y)| < C(b,\mu_N(x))\exp(-N\alpha\im(X_NY_N))\\ \times\frac{\sum_{\sigma\in S_N}(-)^{\sigma}\exp\left(-\alpha\re\big(x^{(N)}\big)\cdot\sigma\big(\im y^{(N)}\big)\right)}{\prod_{1\leq m<n\leq N}\sin(\pi\im(y_n-y_m)/\re b)\sinh(\gamma\re(x_m-x_n))},
\end{multline}
where $C$ is continuous on $S_a\times [0,a-\re b/2)$.
\end{proposition}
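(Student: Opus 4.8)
The plan is to transcribe the proof of Prop.~5.4 (the $N=3$ case) to general $N$; the extra particles affect only the amount of combinatorial bookkeeping. By the factorization~\eqref{JNs} it suffices to treat the reduced function $J_N^r(b;x,y)$ of~\eqref{JNr}, since the center-of-mass factor contributes precisely the prefactor $\exp(-N\alpha\im(X_NY_N))$ in~\eqref{JNby}.

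First I would estimate the integrand of~\eqref{JNr} for $s\in\R^{N-1}$ and $(b,x,y)\in\cD_N$. The kernel factor $\cS^{\sharp}_N(b;x^{(N)},s)$ is controlled by~\eqref{SNb}; since $|\im(s_k-x^{(N)}_j)|=|\im x^{(N)}_j|\le\mu_N(x)<a-\re b/2$ on $D_N^r$, the ensuing $C_r$-product is bounded above, as in the proofs of Props.~5.3 and~6.3, by a function $C(b,\mu_N(x))$ continuous on $S_a\times[0,a-\re b/2)$. The weight factor is controlled by~\eqref{WN-1be}. For $J_{N-1}(b;s,(y_1-y_N,\ldots,y_{N-1}-y_N))$ I invoke the induction hypothesis, i.e.\ the bound~\eqref{JNby} with $N\to N-1$, which applies on $\cD_N$ since all $|\im(y_j-y_k)|<\re b$ there; moreover $s$ being real forces $\mu_{N-1}(s)=d_{N-1}(s)=\phi_{N-1}(s)=0$, so the associated $J_{N-1}$-prefactor collapses to a constant depending only on $b$. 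Collecting these bounds exactly as in the passage leading to~\eqref{J3rb}, I obtain a majorization of $|J_N^r(b;x,y)|$ by $C(b,\mu_N(x))$, times the reciprocal of $\prod_{1\le m<n\le N-1}\sin(\pi\im(y_n-y_m)/\re b)$, times an explicit $(N-1)$-fold integral over $\R^{N-1}$ whose integrand is a ratio of hyperbolic functions of $s$ weighted by the exponentials produced by the center-of-mass part of the $J_{N-1}$-bound and by the antisymmetric $S_{N-1}$-sum in its numerator.

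Next I would recast this integral in the form of Lemma~C.3. In analogy with the auxiliary integral $A_2$ in the proof of Prop.~5.4, I introduce $A_{N-1}(t,u)$ with $t,u\in\R^N$, built from an $S_{N-1}$-symmetric product of $\sinh$'s over $\cosh$'s paired against the antisymmetric sum $\sum_{\tau\in S_{N-1}}(-)^{\tau}\exp(\cdots)$, normalized so that $A_{N-1}(\gamma\re x^{(N)},2\im y/\re b)$ reproduces the above integral up to an explicit constant. Exploiting the symmetry of the kernel against the antisymmetry of the sum, the $\tau$-sum is replaced by $(N-1)!$ copies of the $\tau={\rm id}$ term; the substitution $s\to z/\gamma$ then identifies the outcome with $C_{N-1}(2\im y/\re b,\gamma\re x^{(N)})$ in the notation of~\eqref{Cintegral}, so that
\be
|J_N^r(b;x,y)| < \frac{C(b,\mu_N(x))}{\gamma^{N-1}\prod_{1\le m<n\le N-1}\sin(\pi\im(y_n-y_m)/\re b)}\,C_{N-1}\big(2\im y/\re b,\,\gamma\re x^{(N)}\big).
\ee
Finally I would invoke Lemma~C.3: its holomorphy assertion yields, via~\eqref{JNs}, that $J_N$ is holomorphic on $\cD_N$, while under the genericity conditions~\eqref{xyres2} its explicit evaluation~\eqref{CN} supplies both the antisymmetric numerator $\sum_{\sigma\in S_N}(-)^{\sigma}\exp(\cdots)$ and the denominator $\prod_{1\le m<n\le N}\sin(\cdots)\sinh(\cdots)$ of~\eqref{JNby}. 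The exponents are matched by the telescoping identity generalizing~\eqref{sums3}: since $\sum_{j=1}^N\re\big(x^{(N)}_{\sigma(j)}\big)=0$, one adjoins the vanishing $j=N$ term to $\sum_{j=1}^{N-1}\re\big(x^{(N)}_{\sigma(j)}\big)\im(y_N-y_j)$ and then replaces $\im(y_N-y_j)$ by $-\im y^{(N)}_j$. Reinstating the center-of-mass factor gives~\eqref{JNby}.

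The main obstacle is not any single estimate---the bounds of Lemmas~B.1 and~B.2 together with Lemma~C.3 do all the analytic work---but rather the bookkeeping in the last two steps: keeping track of which centered and shifted versions of $x$ and $y$ enter the $J_{N-1}$-bound (with its arguments $y_j-y_N$ and the degeneracies $d_{N-1}(s)=\phi_{N-1}(s)=0$), aligning the resulting exponential weights with the variables $(u,t)=(2\im y/\re b,\gamma\re x^{(N)})$ of $C_{N-1}$, and checking that the conditions under which~\eqref{CN} holds translate exactly into~\eqref{xyres2}. Everything else is a verbatim transcription of the $N=3$ argument.
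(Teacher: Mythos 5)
Your proposal is correct and follows essentially the same route as the paper's proof: reduce to $J_N^r$ via \eqref{JNs}, majorize the integrand using \eqref{SNb}, \eqref{WN-1be} and the induction hypothesis (with $\mu_{N-1}(s)=0$ for real $s$), convert the resulting integral to $C_{N-1}(2\im y/\re b,\gamma\re x^{(N)})$ through the auxiliary $A_{N-1}$ and the symmetry/antisymmetry reduction to the $\tau={\rm id}$ term, and then invoke Lemma~C.3 together with the general-$N$ version of \eqref{sums3} for both the holomorphy and the bound \eqref{JNby}. This matches the paper's argument step for step.
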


\begin{proof}
It is clear from \eqref{JNs} that we only have to estimate $J_N^r$. The induction assumption amounts to \eqref{JNby} holding true when we replace $N$ by~$N-1$. Combining the resulting bound on $J_{N-1}$ with \eqref{SNb} and \eqref{WN-1be}, we obtain via a straightforward computation (using the general-$N$ counterpart of~\eqref{sums3}) 
\begin{multline}\label{JNrb}
|J_N^r(x,y)| < \frac{C(b,\mu_N(x))}{(N-1)!\prod_{1\leq m<n\leq N-1}\sin(\pi\im(y_n-y_m)/\re b)}\\ \int_{\R^{N-1}}ds\frac{\prod_{1\leq j<k\leq N-1}\sinh(\gamma(s_j-s_k))}{\prod_{j=1}^N\prod_{k=1}^{N-1}\cosh(\gamma(s_k-\re  x^{(N)}_j))} \exp\left(\alpha S_{N-1}\sum_{j=1}^{N-1}\im(y_N-y_j)\right)\\ \times \sum_{\tau\in \cS_{N-1}}(-)^{\tau}\exp\left(\alpha\sum_{j=1}^{N-2}s^{(N-1)}_{\tau(j)}\im(y_{N-1}-y_j)\right).
\end{multline}

Next, we introduce
\begin{multline}\label{AN-1}
A_{N-1}(t,u)\equiv \int_{\R^{N-1}}ds\frac{\prod_{1\leq j<k\leq N-1}\sinh(\gamma(s_j-s_k))}{\prod_{j=1}^N\prod_{k=1}^{N-1}\cosh(\gamma s_k-t_j)}\exp\left(\gamma S_{N-1}\sum_{j=1}^{N-1}(u_N-u_j)\right)\\ \times\sum_{\tau\in S_{N-1}}(-)^{\tau}\exp\left(\gamma\sum_{j=1}^{N-2}s^{(N-1)}_{\tau(j)}(u_{N-1}-u_j)\right).
\end{multline}
It is plain that $A_{N-1}(\gamma\re x^{(N)},2\im y/\re b)$ equals the integral in the bound \eqref{JNrb}. Now we fix $\tau\in S_{N-1}$ and consider the corresponding summand on the rhs of \eqref{AN-1}. Changing variables $s_j\to s_{\tau^{-1}(j)}$ and using antisymmetry of the $\sinh$-product, we obtain the summand with $\tau={\rm id}$. Hence we may replace the sum by $(N-1)!$ times this summand. We simplify the resulting product of exponential functions by using the identity
\be
S_{N-1}\sum_{j=1}^{N-1}(u_N-u_j) + \sum_{j=1}^{N-2}s^{(N-1)}_j(u_{N-1}-u_j) = \sum_{j=1}^{N-1}s_j(u_N-u_j),
\ee
which is readily verified. Taking $s\to z/\gamma$ and comparing the result to~\eqref{Cintegral}, we deduce
\be
A_{N-1}(t,u)=\frac{(N-1)!}{\gamma^{N-1}}C_{N-1}(u,t).
\ee
The upshot is that the bound \eqref{JNrb} becomes
\begin{equation}\label{JNC2}
|J_N^r(x,y)| < \frac{C(b,\mu_N(x))}{\gamma^{N-1}\prod_{1\leq m<n\leq N-1}\sin(\pi\im(y_n-y_m)/\re b)}C_{N-1}(2\im y/\re b,\gamma\re x^{(N)}).
\end{equation}
Using again the general-$N$ version of~\eqref{sums3},
 the assertions now follow from~Lemma~C.3.
\end{proof}

We point out that the bound~\eqref{JNby} is symmetric in~$y$, a feature we also expect for~$J_N(x,y)$, cf. the next section. Another point of interest is its specialization to real~$y$. Using~\eqref{FN}--\eqref{FNpow} we see that this is of the form
\be\label{JNbnew}
|J_N(b;x,y)| < C(b,\mu_N(x))\exp(-N\alpha Y_N\im X_N)\prod_{1\le m<n\le N}\frac{ \re(x_m-x_n)}{\sinh(\gamma\,\re(x_m-x_n))},\ \ \ y\in\R^N.
\ee
It should be noted that for $N>2$ this does not coincide with the bound~\eqref{JNb1} obtained in~Prop.~6.3. Moreover, the latter holds true on~$D_N$, whereas we need~$(b,x)\in D_N^r$ to arrive at~\eqref{JNbnew}.

We conclude this section by pointing out that since~$|J_N(b;x,y)| $ is manifestly non-negative, it follows from~\eqref{JNby} that the function on the rhs is positive. This corollary of~Prop.~6.4 is by no means obvious, even for~$N=3$. Rephrased in terms of the function~$Q_{N-1}(u,t)$ given by~\eqref{QN}, it implies
\be\label{QNpos}
Q_{N-1}(u,t)>0,\ \ \ |u_j-u_k|<2,\ \ 1\le j<k\le N,\ \ \ u,t\in\R^N.
\ee


\section{Outlook}

In this paper we have focused on the first steps in the recursive scheme for constructing joint eigenfunctions~$J_N$  of the A$\De$Os $A_{k,\de}$ (given by~\eqref{JN} and \eqref{Aks}, resp.). More specifically, we have concentrated on establishing holomorphy domains and uniform decay bounds that were sufficient for proving that the scheme does provide well-defined functions~$J_N$ that satisfy the expected joint eigenvalue equations. There are numerous aspects  that remain to be investigated. Below we briefly discuss a few that we plan to study in future articles, associated with conjectured features of the joint eigenfunctions. In the free cases $b=a_{\pm}$ these features all follow from the results in Section~3, whereas for $N=2$   they can be gleaned from \cite{Rui11}. 

$\mathbf{S_N\times S_N}$ {\bf invariance:} We expect that $J_N(x,y)$ is invariant under permutations of both the variables $x\equiv(x_1,\ldots,x_N)$ and the variables $y\equiv(y_1,\ldots,y_N)$. We add that the former symmetry property is immediate from the defining formula \eqref{JN}, whereas the latter is far from clear. We also recall that the bound~\eqref{JNby} exhibits the expected symmetry.

{\bf Self-duality:} In the free cases $b=a_{\de}$, $\de=\pm$, it is plain from Theorem 3.1 that upon multiplication of $J_N(a_{\de};x,y)$ by the factor $\prod_{1\leq j<k\leq N}s_{\de}(y_j-y_k)/s_{-\de}(y_j-y_k)$ we obtain a function that is symmetric under the interchange of the variables $x$ and $y$. More generally, for the function
\be
\cJ_N(b;x,y)\equiv J_N(b;x,y)\prod_{1\leq j<k\leq N}\, \prod_{\de=+,-}G(\de(y_j-y_k)+ia-ib) ,
\ee
we conjecture the self-duality property
\be
\cJ_N(b;x,y)=\cJ_N(b;y,x).
\ee

{\bf Parameter symmetries:} Since the hyperbolic gamma function $G(a_+,a_-;z)$ is invariant under the interchange $a_+\leftrightarrow a_-$, it is clear from the defining formula \eqref{JN} that $J_N(a_+,a_-,b;x,y)$ has this parameter symmetry, too. Moreover, as we noted in the introduction, the Hamiltonians $H_{k,\de}$ are invariant under the substitution $b\to 2a-b$. We conjecture that their joint eigenfunctions
\be
\cF_N(b;x,y)\equiv G(ib-ia)^{N(N-1)/2}W_N(b;x)^{1/2}\cJ_N(b;x,y)W_N(b;y)^{1/2},
\ee
are also invariant under this substitution:
\be
\cF_N(2a-b;x,y) = \cF_N(b;x,y).
\ee

{\bf Global meromorphy:} Although we have only considered holomorphy properties in this paper, we do expect that the recursive scheme yields globally meromorphic joint eigenfunctions. Moreover, the free cases and the $N=2$ case suggest both the location of the poles and upper bounds on their orders. Specifically, we expect that the product function
\be
\cP_N(b;x,y)\equiv \cJ_N(b;x,y)\prod_{1\leq j<k\leq N}\prod_{\de=+,-}E(\de(x_j-x_k)+ib-ia)E(\de(y_j-y_k)+ib-ia),
\ee
has a jointly analytic extension to all $(b,x,y)\in S_a\times\C^N\times\C^N$. (The $E$-function is an entire function related to the hyperbolic gamma function by $G(z)=E(z)/E(-z)$, from which the location and order of the $E$-zeros can be read off, cf.~\cite{Rui99}.)

{\bf Soliton scattering:} It is a long-standing conjecture that the particles in the relativistic Calogero-Moser systems of hyperbolic type exhibit soliton scattering (conservation of momenta and factorization). In more detail, provided $b\in(0,2a)$, we expect 
\bea
\cF_N(b;x,y) & \sim & C_N\sum_{\sigma\in S_N}\prod_{\substack{1\leq j<k\leq N\\ \sigma^{-1}(j)<\sigma^{-1}(k)}}(-u(y_j-y_k))^{1/2} \nonumber \\ & & \times \prod_{\substack{1\leq j<k\leq N\\ \sigma^{-1}(j)>\sigma^{-1}(k)}}(-u(y_j-y_k))^{-1/2}\exp(i\alpha x\cdot\sigma(y)),
\eea
for $x_j-x_{j+1}\to\infty$, $j=1,\ldots,N-1$, with $C_N$ a constant and the scattering function $u(z)$ given by \eqref{uU}.

{\bf Orthogonality and completeness:} We conjecture that for all $b\in[0,2a]$, the function $\cN_N\cF_N(b;x,y)$ (with~$\cN_N$ a normalization constant) yields the kernel of a unitary operator on $L^2(F_N,dx)$, where~$F_N$ is the configuration space
\be
F_N\equiv\{-\infty<x_N<\cdots<x_1<\infty\}.
\ee
In the free cases this unitary operator amounts to the antisymmetric version of Fourier transformation, cf.~Section~3.


\renewcommand{\thesection}{A}
\setcounter{equation}{0}

\addcontentsline{toc}{section}{Appendix A. The hyperbolic gamma function}
																																												
\section*{Appendix A. The hyperbolic gamma function}
In this appendix we collect properties of the hyperbolic gamma function $G(a_+,a_-;z)$ we have occasion to use. Their proofs can be found in~Subsection~III~A of~\cite{Rui97}, but for the $G$-asymptotics we need the sharper bounds obtained in~Theorem~A.1 of~\cite{Rui99}.  We fix
\be
a_+,a_- >0,
\ee
and suppress the dependence of $G$ on $a_+,a_-$ whenever this will not cause ambiguities. 

 The function $G(z)$ can be defined as the unique minimal solution of one of the two analytic difference equations
\be\label{Gades}
	\frac{G(z+ia_\delta/2)}{G(z-ia_\delta/2)} = 2c_{-\delta}(z),\ \ \ \  \delta=+,-,
\ee
that has modulus 1 for real $z$ and satisfies $G(0)=1$ (cf.~\eqref{sce} for the notation used here); it is not obvious, but true that this entails that the other one is then satisfied as well.
It is meromorphic in~$z$, 
and for $z$ in the strip 
\be\label{strip}
S\equiv \{z\in\C \mid |\im (z)|<a\},\ \ \ a=(a_++a_-)/2,
\ee
no poles and zeros occur. Hence we have
\be\label{Gg}
G(z)=\exp(ig(z)),\ \ \ \ z\in S,
\ee
with $g(z)$ holomorphic in $S$. 
 Explicitly, $g(z)$ has the integral representation
\be\label{ghyp}
g(a_{+},a_{-};z) =\int_0^\infty\frac{dy}{y}\left(\frac{\sin 2yz}{2\sinh(a_{+}y)\sinh(a_{-}y)} - \frac{z}{a_{+}a_{-} y}\right),\ \ \ \ z\in S.
\ee
From this, the following properties of the hyperbolic gamma function are immediate:
\be\label{refl}	
G(-z) = 1/G(z),\ \ \ ({\rm reflection\ equation}),
\ee
\be\label{modinv}
G(a_-,a_+;z) = G(a_+,a_-;z),\ \ \  ({\rm modular\ invariance}),
\ee
\be\label{sc}
	G(\lambda a_+,\lambda a_-;\lambda z) = G(a_+,a_-;z),\quad \lambda\in(0,\infty),\ \ \ ({\rm scale\ invariance}),
\ee
\be\label{Gcon}
\overline{G(a_+,a_-;z)}=G(a_+,a_-;-\overline{z}).
\ee

Defining
\be\label{zkl}
z_{kl}\equiv ika_{+}+ila_{-},\ \ \ \ k,l\in\N\equiv \{ 0,1,2,\ldots\},
\ee
the hyperbolic gamma function has its poles at 
\be\label{Gpo}
z=z_{kl}^-,\ \ \ z_{kl}^-\equiv -ia -z_{kl},\ \ \ \ k,l\in\N,\ \ \ (G{\rm -poles}),
\ee
and its zeros at
\be\label{Gze}
z=z_{kl}^+,\ \ \ z_{kl}^+\equiv ia +z_{kl},\ \ \ \ k,l\in\N,\ \ \ \ (G{\rm -zeros}).
\ee
The pole at $-ia$ is simple and has residue
\be\label{Gres}
\lim_{z\to -ia}(z+ia)G(z)=\frac{i}{2\pi}(a_{+}a_{-})^{1/2}.
\ee
More generally, for a given $(k_0,l_0)\in\N^2$, the multiplicities of the pole~$z_{k_0l_0}^{-}$ and zero~$z_{k_0l_0}^{+}$ are equal to the number of distinct pairs~$(k,l)\in\N^2$ such that ~$z_{kl}^{+}=z_{k_0l_0}^{+}$. In particular, for~$a_+/a_-\notin\Q$ all poles and zeros are simple. 

Finally, we specify the asymptotic behavior of $G(z)$ for $\re (z)\to\pm \infty$. To this end we introduce a function $f(a_+,a_-;z)$ by setting
\be\label{Gas}
	G(a_+,a_-;z) = \exp \big(\mp i\left(\chi+\alpha z^2/4+f(a_{+},a_{-};z)\right)\big),\ \ \ \pm \re z >a_l,
\ee
where
\be\label{chi}
 \chi \equiv \frac{\pi}{24}\left(\frac{a_+}{a_-} + \frac{a_-}{a_+}\right).
\ee
It then follows from Theorem~A.1 in~\cite{Rui99} that we have
\be\label{fas}
|f(a_+,a_-;z)|<C(a_+,a_-,\im z)\exp(- \alpha a_s |\re z|/2), \ \ \
|\re z|>a_l,
\ee
where $C$ is continuous on $(0,\infty)^2\times\R$. Here and in the main text, we find it
convenient to formulate bounds that are uniform on compact subsets of a given set $S$ in terms
of positive continuous functions on $S$, generically denoted by $C$.  

It follows from the above that we have
\be
|G(z)|<C_G(z)\exp(\alpha \im z |\re z|/2),\ \ \forall z\in \cD_G,
\ee
where $C_G$ is continuous on the domain
\be\label{cDG}
\cD_G\equiv \{ z\in \C \mid z\notin -i[a,\infty)\},
\ee
and satisfies
\be
C_G(z)=1+O(\exp(- \alpha a_s |\re z|/2)),\ \ \ |\re z|\to\infty.
\ee
Here, the implied constant can be chosen uniformly for $\im z$ varying over a bounded interval.

\renewcommand{\thesection}{B}
\setcounter{equation}{0}
\setcounter{theorem}{0}

\addcontentsline{toc}{section}{Appendix B. Bounds on  $G(z-ib/2)/G(z+ib/2)$ and~$w(b;z)$}
																																												
\section*{Appendix B. Bounds on $G(z-ib/2)/G(z+ib/2)$ and~$w(b;z)$}

In the main text we have occasion to use a complex parameter~$b$ whose real part is restricted to $(0,2a)$. This entails that the downward pole sequence of the function $G(z-ib/2)$ is located in the lower half $z$-plane and the upward pole sequence of~$1/G(z+ib/2)$ in the upper half $z$-plane. 
With $b$ varying over the strip
\be
S_a\equiv \{ z\in\C \mid \re z\in (0,2a)\},
\ee
the ratio function
\be\label{rdef}
r(b;z)\equiv G(z-ib/2)/G(z+ib/2),
\ee
 plays a crucial role. It is expedient to estimate it in two different ways, depending on reality properties. Below and in Sections~4--6, we often use a new parameter
 \be\label{gam}
 \gamma\equiv \alpha \re b/2=\frac{\pi\re b}{a_{+}a_-}.
\ee
\begin{lemma}
Letting $b\in (0,2a)$, we have
\be\label{rbr}
|r(b;x)|\le  C_r(b)/\cosh(\alpha b x/2),\ \ \ \forall x\in\R,
\ee
with $C_r(b)$ a continuous function on $(0,2a)$. Moreover, defining
\be\label{cDr}
\cD_r\equiv \{ (b,z)\in S_a\times\C\mid \pm z-ib/2\notin -i[a,\infty)\},
\ee
we have
\be\label{rbc}
|r(b;z)|<C_r(b,z)/\cosh (\gamma\, \re z),\ \ \ \forall (b,z)\in\cD_r,
\ee
where $C_r(b,z)$ is a continuous function on $\cD_r$ satisfying
\be\label{Cras}
C_r(b,z)=1/2+O(\exp(- \alpha a_s |\re z|/2)),\ \ \ |\re z|\to\infty,
\ee
with the implied constant uniform for $(b,\im z)$ varying over compact subsets of~$S_a\times\R$.
\end{lemma}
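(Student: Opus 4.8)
The plan is to obtain both bounds from the asymptotic expansion of the hyperbolic gamma function recorded in Appendix~A, exploiting the fact that $r(b;z)$ is a ratio of two $G$-values whose arguments differ by $ib$, so that the quadratic terms $\alpha z^2/4$ in the asymptotics~\eqref{Gas} largely cancel and leave a linear exponential — which is exactly the $\cosh$ in the denominators of~\eqref{rbr} and~\eqref{rbc}. Concretely, I would first treat the real-variable bound~\eqref{rbr}. For $b\in(0,2a)$ and $x\in\R$, the points $x\mp ib/2$ both lie in the open strip $S$ of~\eqref{strip} (since $b/2<a$), so $G(x\mp ib/2)$ has modulus~$1$ there only on the real axis; off it we use~\eqref{Gg}--\eqref{ghyp}, writing $|r(b;x)|=\exp(\im[g(x+ib/2)-g(x-ib/2)])$ and estimating the difference of $g$-values by the integral representation~\eqref{ghyp}. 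Alternatively, and more cleanly, one observes that $r$ satisfies first-order difference equations inherited from~\eqref{Gades}, and that a direct residue/contour computation (or simply matching the known pole structure~\eqref{Gpo} of $G(z-ib/2)$ and zero structure~\eqref{Gze} of $1/G(z+ib/2)$ against a $\cosh$) pins down $|r(b;x)|$ up to a bounded factor; continuity of $C_r(b)$ on $(0,2a)$ then follows from continuity of the $g$-integral in~$b$ on compacta.

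For the complex bound~\eqref{rbc}, the key point is to feed the sharp asymptotics~\eqref{Gas}--\eqref{fas} directly into $r(b;z)=G(z-ib/2)/G(z+ib/2)$. On the region $\re z > a_l + |\im b|/2$ (say), both $z-ib/2$ and $z+ib/2$ have real part exceeding $a_l$, so~\eqref{Gas} applies to numerator and denominator with the same sign choice. The phases $\chi$ cancel, and the quadratic terms give
\[
-\tfrac{i\alpha}{4}\big[(z-ib/2)^2-(z+ib/2)^2\big] = -\tfrac{i\alpha}{4}(-2ibz)= -\tfrac{\alpha b}{2}\,z,
\]
whose exponential has modulus $\exp(-\gamma\re z)$ with $\gamma=\alpha\re b/2$ as in~\eqref{gam}; for $\re z\to-\infty$ one gets the mirror-image $\exp(\gamma\re z)$, so together these produce the $1/\cosh(\gamma\re z)$ decay, and the constant $1/2$ in~\eqref{Cras} arises from $\cosh(\gamma\re z)\sim \tfrac12 e^{\gamma|\re z|}$. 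The remaining factor $\exp(-i[f(z-ib/2)-f(z+ib/2)])$ is bounded using~\eqref{fas}: it is $1+O(\exp(-\alpha a_s|\re z|/2))$, uniformly for $(b,\im z)$ in compacta of $S_a\times\R$, which gives the error estimate in~\eqref{Cras}. On the complementary bounded-$\re z$ region, $r(b;z)$ is holomorphic on $\cD_r$ (the excised half-lines $\pm z-ib/2\in -i[a,\infty)$ are precisely where numerator poles or denominator zeros would sit), hence continuous, so $|r(b;z)|\cosh(\gamma\re z)$ is a continuous function there; patching the two regions with a partition of unity (or simply taking $C_r(b,z)$ to be the continuous function $|r(b;z)|\cosh(\gamma\re z)$ itself and checking its asymptotics) yields a single continuous $C_r(b,z)$ on $\cD_r$ with the stated behaviour.

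The main obstacle is bookkeeping rather than conceptual: one must verify that the sign condition in~\eqref{Gas} ($\pm\re z>a_l$) is met simultaneously by both shifted arguments $z\mp ib/2$ on the region where the asymptotics is invoked — this forces the threshold to depend on $|\im b|$, which is harmless but must be tracked — and one must confirm that the leftover half-lines removed in the definition~\eqref{cDr} of $\cD_r$ are exactly the loci of the poles of $G(z-ib/2)$ (at $z-ib/2\in -i[a,\infty)$, from~\eqref{Gpo}) together with the zeros of $G(z+ib/2)$, i.e.\ poles of $1/G(z+ib/2)$ (at $-z-ib/2\in -i[a,\infty)$, from~\eqref{Gze} and the reflection equation~\eqref{refl}). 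Once these two compatibility checks are in place, both inequalities drop out of~\eqref{Gas}--\eqref{fas} together with the elementary estimate $\cosh(\gamma\re z)\asymp e^{\gamma|\re z|}$, and the continuity and uniformity claims follow from the corresponding properties of the function $C$ in~\eqref{fas}.
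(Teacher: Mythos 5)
Your proposal is correct and follows essentially the same route as the paper: both bounds are extracted from the sharp asymptotics \eqref{Gas}--\eqref{fas}, where the $\chi$ and quadratic terms cancel in the ratio to leave $\exp(\mp\alpha bz/2)$ (hence the $1/\cosh(\gamma\,\re z)$ decay with the constant $1/2$), the $f$-difference supplies the $O(\exp(-\alpha a_s|\re z|/2))$ correction uniformly for $(b,\im z)$ in compacts, and holomorphy of $r$ on $\cD_r$ (whose excluded half-lines are exactly the numerator poles and denominator zeros, as you check) handles the bounded-$\re z$ region and yields continuity of $C_r$. The alternative sketches you offer for the real-$x$ bound (the $g$-integral or residue matching) are unnecessary, since your asymptotic argument already gives \eqref{rbr} upon restricting to real $b$ and $z$, which is precisely how the paper proceeds.
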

\begin{proof}
From~\eqref{Gas} we obtain
\be
r(b;z)=\exp\big(\mp[ \alpha b z/2 +if(z-ib/2)-if(z+ib/2)]\big),\ \ \ \pm \re z>|\im b|+a_l.
\ee
In view of \eqref{fas} this entails
\be\label{ras}
 2\cosh(\gamma\, \re z)r(b;z)=1+O(\exp(- \alpha a_s |\re z|/2)),\ \ \ |\re z|\to\infty,
\ee
 uniformly for $(b,\im z)$ in compacts of~$S_a\times\R$. Hence, fixing $b\in (0,2a)$, we readily obtain~\eqref{rbr}.
  Since $r(b;z)$ is analytic in $\cD_r$, the second assertion easily follows from~\eqref{ras}, too.
\end{proof}

Next, we consider the weight function
\be
w(b;z)\equiv \prod_{\de=+,-}G(\de z+ia)/G(\de z+ia-ib).
\ee
By \eqref{Gcon} it is real for $b$ and $z$ real, and positive for $z\in\R^*$. Furthermore, it is even and holomorphic in
\be\label{cDw}
\cD_w\equiv \{ (b,z)\in S_a\times\C\mid \pm z+ib\notin -i[0,\infty), \pm z\notin -i[2a,\infty)\},
\ee
and on account of~\eqref{refl} and~\eqref{Gres} it satisfies
\be\label{wzero}
w(b;z)=\frac{4\pi^2}{a_+a_-}G(ib-ia)^2z^2+O(z^4),\ \ \ b\in S_a,\ \ \ z\to 0.
\ee
Just as for $r(b;z)$, we obtain two different bounds on the weight function, cf.~Lemma~B.1.

\begin{lemma}
Letting $b\in S_a$, we have
\be\label{wbr}
|w(b;x)|<C_w(b)\sinh^2(\gamma\,x),\ \ \ \forall x\in\R,
\ee
with $C_w(b)$ continuous on $S_a$.
Moreover, we have
\be\label{wbc}
|w(b;z)|<C_w(b,z)\cosh^2 (\gamma\, \re z),\ \ \ \forall (b,z)\in\cD_w,
\ee
where $C_w(b,z)$ is a continuous function on $\cD_w$~\eqref{cDw}, which satisfies
\be\label{Cwas}
C_w(b,z)=4+O(\exp(- \alpha a_s |\re z|/2)),\ \ \ |\re z|\to\infty,
\ee
 uniformly for $(b,\im z)$ varying over compacts of~$S_a\times\R$.
\end{lemma}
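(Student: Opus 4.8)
The argument will follow the proof of Lemma~B.1 almost verbatim, with the ratio $r(b;z)$ replaced throughout by the weight function $w(b;z)=\big(G(z+ia)/G(z+ia-ib)\big)\big(G(-z+ia)/G(-z+ia-ib)\big)$. The first step is to read off the large-$|\re z|$ behaviour of $w$ from the hyperbolic-gamma asymptotics~\eqref{Gas}--\eqref{fas}. For $\re z>|\im b|+a_l$ the arguments $\pm z+ia$ and $\pm z+ia-ib$ of all four $G$-factors lie in the region where~\eqref{Gas} applies, with the lower sign for the two arguments built from $+z$ and the upper sign for the two built from $-z$. Substituting, the constants $\chi$ cancel in pairs, and the quadratic exponents combine via $(z+ia)^2-(z+ia-ib)^2=2ibz-2ab+b^2$ and its $z\to-z$ counterpart in such a way that the purely-$b$ terms cancel between the two $G$-ratios, leaving
\be
w(b;z)=\exp(\alpha bz)\exp\!\big(-i[f(z+ia)-f(z+ia-ib)-f(-z+ia)+f(-z+ia-ib)]\big).
\ee
This holds for $\re z>|\im b|+a_l$; by~\eqref{fas} the second exponential equals $1+O(\exp(-\alpha a_s|\re z|/2))$, uniformly for $(b,\im z)$ in compacts of $S_a\times\R$, and by evenness of $w$ the same formula with $\exp(-\alpha bz)$ holds for $\re z<-|\im b|-a_l$. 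Comparing moduli with $\cosh^2(\gamma\,\re z)=\frac14\exp(\alpha\,\re b\,|\re z|)\big(1+O(\exp(-\alpha\,\re b\,|\re z|))\big)$ then identifies the asymptotic constant in~\eqref{Cwas}.

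For the real bound~\eqref{wbr} I would argue as follows. For $b\in S_a$ the function $w(b;x)$ is holomorphic and even on $\R$ and, by~\eqref{wzero}, vanishes to second order at $x=0$; hence $x\mapsto|w(b;x)|/\sinh^2(\gamma x)$ extends continuously to all of $\R$, and by the previous step it has finite limits as $x\to\pm\infty$. It is therefore bounded on $\R$, and the locally uniform estimates above show that this bound may be taken to depend continuously on $b$; enlarging it slightly (for instance adding~$1$) yields the strict inequality~\eqref{wbr} with a continuous $C_w(b)$.

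For the complex bound~\eqref{wbc}: since $w$ is holomorphic on $\cD_w$~\eqref{cDw} and $\cosh^2(\gamma\,\re z)\ge1$, the quotient $|w(b;z)|/\cosh^2(\gamma\,\re z)$ is continuous on $\cD_w$. On the part of $\cD_w$ with $|\re z|$ bounded one would simply take $C_w(b,z)$ to majorise this quotient there — which automatically lets $C_w$ blow up as $z$ approaches a pole of $w$, exactly as~\eqref{wbc} permits — while for $|\re z|$ large the explicit formula above furnishes a majorant exhibiting the behaviour~\eqref{Cwas}. Gluing the two candidates with a smooth cutoff in $\re z$ (or taking their maximum) then produces a single continuous $C_w$ on $\cD_w$ satisfying both~\eqref{wbc} and~\eqref{Cwas}.

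I expect no genuinely hard step: the computation in the first paragraph is a mechanical manipulation of~\eqref{Gas}, and the remaining arguments are soft. The one point needing attention is bookkeeping — assembling a \emph{single} continuous function $C_w$ on all of $\cD_w$ that simultaneously dominates $|w|/\cosh^2(\gamma\,\re z)$, diverges at the poles of $w$, carries the correct constant as $|\re z|\to\infty$, and depends continuously (in the stated locally uniform way) on $b$. This is the direct analogue of the corresponding bookkeeping in Lemma~B.1, only mildly more involved because four $G$-factors and the second-order zero at the origin have to be accommodated.
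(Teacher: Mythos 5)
Your argument is essentially the paper's own proof of this lemma: the paper likewise applies the $G$-asymptotics \eqref{Gas} (after using the reflection equation \eqref{refl} to move all four arguments to one half-plane) to get $w(b;z)=\exp\big(\pm\big[\alpha bz-i\sum_{\de=+,-}\de(f(z+\de ia)+f(z+\de(ib-ia)))\big]\big)$ for $\pm\re z$ large, and then invokes \eqref{fas} together with the double zero \eqref{wzero} exactly as you do. Your remaining steps (the quotient $|w|/\sinh^2(\gamma x)$ extending continuously through $x=0$, gluing the near and far regions, continuity of the constants in $b$) are just an expanded version of the paper's ``the assertions readily follow,'' and they are correct.
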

\begin{proof}
Letting $\pm \re z>|\im b|+a+a_l$, we get from~\eqref{Gas} and~\eqref{refl}
\be
w(b;z)=\exp\Big(\pm\Big[ \alpha bz -i\sum_{\de=+,-}\de \big[f(z+\de ia)+f(z+\de(ib-ia))\big]\Big]\Big).
\ee
Hence the assertions readily follow by using~\eqref{fas} and~\eqref{wzero}.
\end{proof}

In Section~5 (below Prop.~5.4) we employ contours $\cC$ in the $z$-plane of the form
\be
\cC =\{ s+ic +i f(s) \mid s\in\R, c\in\R,  f\in \cL\},
\ee
where $\cL$ is the space of real-valued, continuous, piecewise linear functions with compact support. There are further restrictions on the functions ensuring that the contours stay away from singularities. In particular, focussing on $w(b;z_1-z_2)$ for~$b\in S_a$ and $z_1,z_2\in \cC$, we need $\pm (z_1-z_2)+ib$ to stay away from the pole sequences in~$-i[0,\infty)$.  Therefore, we restrict attention to the subset~$\cL_w$ of  functions in $\cL$ that satisfy
\be\label{freq}
\pm (s_1-s_2)=\im b \Rightarrow \pm (f(s_1)-f(s_2))\notin (-\infty, -\re b\,],\ \ \ b\in S_a.
\ee
Note that the pole sequences arising for $\pm (z_1-z_2)\in -i[2a,\infty)$ are always avoided. Indeed, $s_1=s_2$ implies $z_1=z_2$, since $z_1$ and $z_2$ belong to the same contour.


\renewcommand{\thesection}{C}
\setcounter{equation}{0}
\setcounter{theorem}{0}

\addcontentsline{toc}{section}{Appendix C. Three explicit integrals}
																																												
\section*{Appendix C. Three explicit integrals}

The integral in the first lemma enables us to render the scheme explicit in the free cases~$b=a_{\pm}$, cf.~Section~3. The next two lemmas may be viewed as corollaries. Lemma~C.2 is a key tool to bound the joint eigenfunctions $J_N$ recursively for real $y$, yielding holomorphy in the variable $x$ in suitable domains, cf.~Props.~6.1 and~6.3. Lemma~C.3 enables us to include holomorphy properties in the variable $y$, cf.~Props.~5.4 and~6.4.

\begin{lemma}
The integral
\be\label{integral}
I_N(p,t)\equiv\int_{\mathbb{R}^N}dz\exp(iz\cdot p)\frac{\prod_{1\leq j<k\leq N}\sinh(z_j-z_k)}{\prod_{j=1}^{N+1}\prod_{k=1}^N\cosh(t_j-z_k)},
\ee
converges absolutely for all $p\in\C^N$ and $t\in\C^{N+1}$ satisfying
\be\label{ptres}
|\im p_i|<2,\ \ \ i=1,\ldots,N,\ \ \ |\im t_j|<\pi/2,\ \ \ j=1,\ldots,N+1.
\ee
For such $p,t$, with in addition $p_i\ne 0$, $i=1,\ldots,N$, and $t_j\ne t_k$ for $1\le j< k\le N+1$, it
is given by
\be\label{IN}
I_N(p,t) = \prod_{j=1}^N\frac{-i\pi}{\sinh (\pi p_j/2)}\prod_{1\leq j<k\leq N+1}\frac{1}{\sinh(t_j-t_k)}\sum_{\tau\in S_{N+1}}(-)^\tau\exp\Big(i\sum_{j=1}^Nt_{\tau(j)}p_j\Big).
\ee
\end{lemma}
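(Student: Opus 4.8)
The plan is to prove the evaluation formula \eqref{IN} by induction on $N$, peeling off one $z$-integration at a time, where the base case $N=0$ is the elementary integral $\int_{\R}dz\,e^{izp}/\prod_{j=1}^1\cosh(t_j-z)$; more precisely the genuinely nontrivial base case is $N=1$, which is a standard residue computation. (Alternatively one can start from $N=1$ directly: shift the contour to $\R\pm i\pi/2$ depending on $\sign\im p$, pick up the residues at $z=t_j\pm i\pi/2$, and resum.) For the inductive step, first I would check absolute convergence in the region \eqref{ptres}: the $\sinh$-numerator grows like $\exp((N-1)|\re z_k|)$ in each variable, while the denominator decays like $\exp(-N|\re z_k|)$ in $z_k$ since there are $N+1$ cosh-factors involving $z_k$; combined with $|\im p_i|<2$ this gives a net exponential decay $\exp(-(2-|\im p_i|)|\re z_i|)$ after accounting for the $e^{iz\cdot p}$ factor, and the $|\im t_j|<\pi/2$ condition keeps the integrand pole-free on $\R^N$.

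The core of the argument is to integrate out $z_N$ while treating $z_1,\dots,z_{N-1}$ as parameters. Write the integrand as a rational function of $w=e^{z_N}$ times $e^{iz_Np_N}$, and close the $z_N$-contour in the upper or lower half-plane according to $\sign(\im p_N)$ (convergence of the arc is guaranteed by the decay just established). The poles in $z_N$ sit at $z_N=t_j\pm i\pi/2$, $j=1,\dots,N+1$, and at the zeros of the remaining $\cosh$-factors in $z_1,\dots,z_{N-1}$ — but the latter are cancelled because those same factors $\cosh(z_j-z_N)$ appear in the denominator while $\sinh(z_j-z_N)$ appears in the numerator, so the only genuine poles contributing are at $z_N=t_m\pm i\pi/2$. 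Summing these $N+1$ residues produces, for each choice $m$ of which $t$-pole is hit, a factor $\prod_{j=1}^{N-1}\tanh(t_m-z_j)$ times the $N-1$-variable integrand with the $(N+1)$-tuple $t$ replaced by the $N$-tuple $(t_j)_{j\ne m}$, together with an elementary prefactor $-i\pi/\sinh(\pi p_N/2)$ coming from $\sum_\pm(\pm)e^{i(t_m\pm i\pi/2)p_N}$ and the $1/\prod_{j\ne m}\sinh(t_m-t_j)$ factor from the other $t$-poles. At this point one invokes the induction hypothesis \eqref{IN} with $N\to N-1$ on each of the $N+1$ resulting $(N-1)$-dimensional integrals.

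The remaining work is purely combinatorial: one must show that
\[
\sum_{m=1}^{N+1}\frac{(-)^{m-1}}{\prod_{j\ne m}\sinh(t_m-t_j)}
\Big(\prod_{\substack{1\le j<k\le N+1\\ j,k\ne m}}\frac{1}{\sinh(t_j-t_k)}\Big)
\!\!\sum_{\sigma\in S_N^{(m)}}\!\!(-)^\sigma e^{i\sum_{j}t_{\sigma(j)}p_j}
=\prod_{1\le j<k\le N+1}\frac{1}{\sinh(t_j-t_k)}\sum_{\tau\in S_{N+1}}(-)^\tau e^{i\sum_{j=1}^N t_{\tau(j)}p_j},
\]
where $S_N^{(m)}$ denotes bijections from $\{1,\dots,N\}$ to $\{1,\dots,N+1\}\setminus\{m\}$, and where I have suppressed the $\tanh(t_m-z_j)$ factors — these must be shown to disappear, which is the one subtlety. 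In fact the $\tanh$-factors do not literally disappear term-by-term; rather, after applying the induction hypothesis the $z_j$-dependence has already been integrated out, so the $\tanh(t_m-z_j)$ are inside the $(N-1)$-dimensional integrand and the induction hypothesis must be applied to the integral with numerator $\prod_{j<k}\sinh(z_j-z_k)\cdot\prod_j\tanh(t_m-z_j)$, not the bare one. I expect the cleanest route is therefore to reorganize: move the factor $\prod_{j=1}^{N-1}\cosh(t_m-z_j)^{-1}$ coming from the residue back into the denominator so that it combines with nothing new, i.e. to recognize that the residue at $z_N=t_m\pm i\pi/2$ of the full $N$-dimensional integrand is exactly $\mp i\pi\,e^{i(t_m\pm i\pi/2)p_N}/\prod_{j\ne m}\sinh(t_m-t_j)$ times the $(N-1)$-dimensional integrand built from $(t_j)_{j\ne m}$ — with the $\sinh(z_j-z_N)$ numerator factors evaluated at $z_N=t_m\pm i\pi/2$ becoming precisely the $\cosh(t_m-z_j)$ factors needed to reduce the denominator count from $N+1$ to $N$. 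Once this bookkeeping is done correctly the $\tanh$ factors never appear, and the identity above reduces to the classical Laplace/Cauchy-type expansion of the Schur-like antisymmetric sum over $S_{N+1}$ by its "first column" $m$, which is the main obstacle: verifying that the signs $(-)^{m-1}$ and the splitting of $\prod_{j<k}\sinh(t_j-t_k)^{-1}$ match up is a finite but somewhat delicate permutation-sign computation, most transparently done by comparing both sides as antisymmetric functions of $t_1,\dots,t_{N+1}$ with the correct poles and asymptotics, or equivalently by induction-within-the-induction on $N+1$. The absolute convergence claim and the residue bookkeeping are routine; the sign combinatorics in the resummation is where care is required.
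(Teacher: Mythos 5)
Your overall architecture is the same as the paper's: peel off the $z_N$-integration, pick up residues at poles determined by the $t_j$'s, recognize that the residue of the full integrand reproduces the $(N-1)$-dimensional integrand with $t_m$ deleted (your corrected second account of this bookkeeping, with the $\sinh(z_j-z_N)$ factors at the pole turning into the $\cosh$-factors needed to drop the $t$-count from $N+2$ to $N+1$, is right), apply the induction/recursion, and finish with a sign-combinatorial resummation. (Incidentally, your worry about poles coming from ``the remaining cosh-factors'' is empty: the denominator contains no factors $\cosh(z_j-z_N)$, so the only $z_N$-poles come from the $\cosh(t_j-z_N)$'s.) The absolute-convergence argument matches the paper's.

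However, the central one-variable step is wrong as you describe it, in three connected places. First, the half-plane in which one could close the $z_N$-contour is dictated by the sign of $\re p_N$, not $\im p_N$, since $|e^{iz_Np_N}|=e^{-\im z_N\,\re p_N-\re z_N\,\im p_N}$. Second, the poles in $z_N$ are not only at $t_j\pm i\pi/2$: each $1/\cosh(t_j-z_N)$ contributes the full vertical string $z_N=t_j+i\pi(k+\tfrac12)$, $k\in\Z$, so closing in a half-plane encloses infinitely many poles, and it is exactly the geometric resummation over these strings (or an equivalent $i\pi$-periodicity argument) that generates the factor $1/\sinh(\pi p_N/2)$. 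Third, your claimed source of that prefactor is backwards: $\sum_{s=\pm}s\,e^{i(t_m+is\pi/2)p_N}=-2\sinh(\pi p_N/2)\,e^{it_mp_N}$, i.e.\ a two-residue sum puts $\sinh(\pi p_N/2)$ in the numerator, not the denominator, so the computation as written cannot reproduce \eqref{IN}. The paper sidesteps all of this with a finite strip argument: multiply $I_N$ by $\sinh(\pi p_N/2)=\frac12\sum_{s=\pm}s\,e^{s\pi p_N/2}$, absorb each exponential as a shift of $z_N$ by $is\pi/2$, whereupon $\sinh$ and $\cosh$ trade places and the only poles in $|\im z_N|\le\pi/2$ are the $N+1$ simple poles at $z_N=t_j$; Cauchy's theorem between the contours $\R\mp i\pi/2$ then gives the recursion directly, and the final sign bookkeeping is settled by an antisymmetry-in-$t$ argument rather than the explicit Laplace-type expansion that you state but leave unverified. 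If you insist on a half-plane closure you must resum the full pole strings for $\re p_N\ne0$ and then continue analytically in $p_N$; that is a legitimate alternative, but it is not what your text does, and as written the key step would produce an incorrect prefactor.
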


\begin{proof}
The first assertion readily follows from the bound 
\be
\Big| \prod_{1\leq j<k\leq N}\sinh(z_j-z_k)\Big|< \exp((N-1)(|z_1|+\cdots +|z_N|)),\ \ \ \forall z\in \R^N.
\ee
In order to prove the explicit evaluation formula~\eqref{IN}, we first observe that
by Fubini's theorem the value of $I_N$ is independent of the order of the multiple integration. Next, starting with the $z_N$-integral, we note the equality
\begin{multline}
\sinh (\pi p_N/2) \int_{\mathbb{R}}dz_N\exp(iz_Np_N)\frac{\prod_{j=1}^{N-1}\sinh(z_j-z_N)}{\prod_{j=1}^{N+1}\cosh(t_j-z_N)}\\ =\frac{1}{2}\sum_{s=+,-}s\int_{\mathbb{R}}dz_N\exp\big(i(z_N-is\pi /2)p_N\big)\frac{\prod_{j=1}^{N-1}\sinh(z_j-z_N)}{\prod_{j=1}^{N+1}\cosh(t_j-z_N)}.
\end{multline}
Changing variable $z_N\to z_N+is\pi /2$, the right-hand side becomes
\be
-\frac{1}{2}\sum_{s=+,-}s\int_{\mathbb{R}-is\pi /2}dz_N\exp(iz_Np_N)\frac{\prod_{j=1}^{N-1}\cosh(z_j-z_N)}{\prod_{j=1}^{N+1}\sinh(t_j-z_N)}.
\ee
In the strip $|\im z_N|\leq \pi/2$, the only poles of the integrand are the simple poles located at $z_N=t_j$, $j=1,\ldots,N+1$. Invoking Cauchy's theorem, we arrive at the sum
\be
i\pi \sum_{l_N=1}^{N+1}\exp(it_{l_N}p_N)\frac{\prod_{j=1}^{N-1}\cosh(z_j-t_{l_N})}{\prod_{j\neq l_N}\sinh(t_j-t_{l_N})}.
\ee

With the $z_N$-integral now determined, we obtain
\be
I_N=\frac{i\pi}{\sinh (\pi p_N/2)}\sum_{l_N=1}^{N+1}\frac{\exp(it_{l_N}p_N)}{\prod_{j_N\neq l_N}\sinh(t_{j_N}-t_{l_N})}I_{N-1}^{(l_N)},
\ee
where
\be
I_{N-1}^{(l_N)}\equiv \int_{\mathbb{R}^{N-1}}dz_1\cdots dz_{N-1}\exp\left(i\sum_{j=1}^{N-1}z_jp_j\right)\frac{\prod_{1\leq j<k\leq N-1}\sinh(z_j-z_k)}{\prod_{j\neq l_N}\prod_{k=1}^{N-1}\cosh(t_j-z_k)}.
\ee
Since $I_{N-1}^{(l_N)}$ is of the same form as $I_{N-1}$, we can proceed recursively to compute the remaining integrals. In this way we end up with
\be
I_N=\prod_{j=1}^N\frac{i\pi}{\sinh (\pi p_j/2)}\sum{^\prime}\frac{\exp\left(i\sum_{j=1}^Nt_{l_j}p_j\right)}{P(l_1,\ldots,l_N;t)},
\ee
where
\be
P(l_1,\ldots,l_N;t)\equiv 
\prod_{j_N\neq l_N}\sinh(t_{j_N}-t_{l_N})\cdots\prod_{j_1\neq l_N,\ldots,l_1}\sinh(t_{j_1}-t_{l_1}),
\ee
and the prime signifies that the sum extends over
\be
l_N=1,\ldots,N+1, l_{N-1}\neq l_N,\ldots, l_1\neq l_N,\ldots,l_2.
\ee
To each such set of indices $l_1,\ldots,l_N$, we now associate a uniquely determined permutation $\tau\in S_{N+1}$ by requiring $l_j=\tau(j)$, $j=1,\ldots,N$.
Clearly, we have
\be\label{Pprod}
P(\tau(1),\ldots,\tau(N);t)=s(\tau) \prod_{1\leq j<k\leq N+1}\sinh(t_j-t_k),
\ee
where $s(\tau)$ equals 1 or $-1$. The product on the right-hand side of~\eqref{Pprod} yields a function that is antisymmetric in $t$, whereas $I_N(p,t)$ is manifestly symmetric in $t$. Thus $s(\tau)$ must be equal to $(-)^{\tau}$ up to an overall sign. Letting $\tau= {\rm id}$, we see this sign equals~$(-)^N$, so that we arrive at~\eqref{IN}.
\end{proof}

The absolute convergence of~\eqref{integral} for all~$(p,t)\in\C^N\times\C^{N+1}$ in the polystrips~\eqref{ptres} implies holomorphy of $I(p,t)$ in the resulting product domain, but there is no useful closed formula for the coincidence limits of the right-hand side of~\eqref{IN}. Also, we can freely take $p_i$-partials under the integral sign in view of the exponential decay of the integrand for $z_i\to\pm\infty$. In particular, in Section~5 we shall encounter an integral
\be\label{B2}
B_2(t)\equiv\int_{\R^2}dz\frac{(z_1-z_2)\sinh  (z_1-z_2)}{\prod_{j=1}^{3}\prod_{k=1}^2\cosh(t_j-z_k)}=\lim_{p\to 0}(-i\partial_{p_1}+i\partial_{p_2})I_2(p,t),
\ee
whose direct calculation from~\eqref{IN} would already be unwieldy.  

In order to avoid this chore, we introduce
\be
F(p,t)\equiv \sum_{\tau\in S_3}(-)^{\tau}\exp (i\tau(t)\cdot p),
\ee
where both $p$ and $t$ belong to $\C^3$. By antisymmetry of $F$ in $p$ and $t$, its power series expansion is of the form
\be
F(p,t)=G(p,t)\prod_{1\le m<n\le 3}(p_m-p_n)(t_m-t_n),
\ee
where $G(p,t)$ is symmetric in $p$ and $t$, and invariant under interchange of $p$ and $t$. Furthermore, we have
\be
G(p,t)=c_0 +\sum_{j,k=1}^3 c_{jk}p_jt_k +O((p_mt_n)^2),
\ee
where the notation will be clear from context. It is not hard to verify that~$c_0$ equals~$-i/2$, so that we can now deduce
\be\label{Frep}
F((p_1,p_2,0),t)=-\frac{i}{2}p_1p_2(p_1-p_2)\prod_{1\le m<n\le 3}(t_m-t_n)(1+R(p_1,p_2,t)),
\ee
where the remainder $R$ vanishes for $p_1=p_2=0$.
When we now use the representation~\eqref{Frep} to rewrite $I_2(p,t)$ as specified by~\eqref{IN}, then a moment's thought shows that the integral~\eqref{B2} is given by
\be
B_2(t)=4 \prod_{1\le m<n\le 3}\frac{t_m-t_n}{\sinh  (t_m-t_n)}.
\ee

The following lemma details the arbitrary-$N$ version of this integral, which we need in Section~6.

\begin{lemma}
Let  $t\in\C^{N+1}$, with $|\im t_j|<\pi /2$, $j=1,\ldots,N+1$. Then the integral
\be\label{Bintegral}
B_N(t)\equiv
\int_{\R^N}dz\frac{\prod_{1\le m<n\le N}(z_m-z_n)\sinh  (z_m-z_n)}{\prod_{j=1}^{N+1}\prod_{k=1}^N\cosh(t_j-z_k)},
\ee
is given by
\be\label{BN}
B_N(t) =2^N  \prod_{1\le m<n\le N+1}\frac{t_m-t_n}{\sinh  (t_m-t_n)}.
\ee
\end{lemma}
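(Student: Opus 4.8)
The plan is to recover $B_N(t)$ by applying the constant-coefficient differential operator $\cD_p\equiv\prod_{1\le m<n\le N}(-i\partial_{p_m}+i\partial_{p_n})$ to the integral $I_N(p,t)$ of Lemma~C.1 and then setting $p=0$. Since $(-i\partial_{p_m}+i\partial_{p_n})\exp(iz\cdot p)=(z_m-z_n)\exp(iz\cdot p)$, the operator $\cD_p$ inserts exactly the polynomial factor $\prod_{1\le m<n\le N}(z_m-z_n)$ into the integrand of \eqref{integral}, so that $\cD_pI_N(p,t)\big|_{p=0}=B_N(t)$. Differentiating under the integral sign is legitimate throughout the polystrip $|\im p_i|<2$, $|\im t_j|<\pi/2$: a $p$-derivative only produces polynomially growing prefactors $z_k$, still dominated by the exponential decay coming from the $\cosh$ denominators. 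Hence $\cD_pI_N$ is holomorphic there, and on the dense open set where all $p_i\ne0$ and all $t_j$ are distinct it equals $\cD_p$ applied to the right-hand side of \eqref{IN}; the coincident-$t$ case of the lemma then follows by continuity. So it suffices to compute the latter and let $p\to0$.

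First I would put \eqref{IN} in a shape that is manifestly regular at $p=0$, the apparent poles of $\prod_j1/\sinh(\pi p_j/2)$ there being cancelled by zeros of the $S_{N+1}$-sum. Writing $p'\equiv(p_1,\dots,p_N,0)\in\C^{N+1}$, the sum equals $\det[\exp(it_kp'_j)]_{j,k=1}^{N+1}$, which is antisymmetric in its rows and in its columns, hence factors as $V(p')\,V(t)\,G(p',t)$ with $V$ a Vandermonde and $G$ a polynomial symmetric in $p'$ and in $t$. Since $p'_{N+1}=0$ one has (up to signs, which I collect at the end) $V(p')=\big(\prod_{m=1}^Np_m\big)\prod_{1\le m<n\le N}(p_m-p_n)$, while $V(t)$ combines with $\prod_{1\le j<k\le N+1}1/\sinh(t_j-t_k)$ to give precisely $\prod_{1\le m<n\le N+1}\tfrac{t_m-t_n}{\sinh(t_m-t_n)}$, the factor occurring in \eqref{BN}. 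Cancelling one power of $p_j$ against each $1/\sinh(\pi p_j/2)$ then turns \eqref{IN} into
\[
I_N(p,t)=\Big[\prod_{j=1}^N\frac{-i\pi p_j}{\sinh(\pi p_j/2)}\Big]\prod_{1\le m<n\le N}(p_m-p_n)\,G(p',t)\prod_{1\le m<n\le N+1}\frac{t_m-t_n}{\sinh(t_m-t_n)},
\]
valid on the whole polystrip, the bracket now analytic in each $p_j$ with value $(-2i)^N$ at $p=0$.

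Next I would apply $\cD_p$ and set $p=0$. The $t$-product is $p$-independent and factors straight out. In the remaining $g(p)\,V_N(p)\,G(p',t)$, with $V_N(p)=\prod_{1\le m<n\le N}(p_m-p_n)$ homogeneous of degree $\binom N2$ and $\cD_p$ of order exactly $\binom N2$, the Leibniz expansion has a surviving term only when all $\binom N2$ derivations hit $V_N$: any derivative spent on $g$ or on $G$ applies an operator of order $<\binom N2$ to $V_N$, leaving a positive-degree polynomial that vanishes at $p=0$. Hence
\[
B_N(t)=g(0)\,G(0,t)\,\big(\cD_pV_N\big)\prod_{1\le m<n\le N+1}\frac{t_m-t_n}{\sinh(t_m-t_n)}.
\]
Here $g(0)=(-2i)^N$; the classical identity $\prod_{m<n}(\partial_{p_m}-\partial_{p_n})\prod_{m<n}(p_m-p_n)=N!\prod_{l=0}^{N-1}l!$ gives $\cD_pV_N=(-i)^{\binom N2}N!\prod_{l=0}^{N-1}l!$; and $G(0,t)=i^{\binom{N+1}{2}}/\prod_{l=0}^{N}l!$ is read off from the leading term of the confluent Vandermonde determinant. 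Multiplying out, the factorials collapse because $N!\prod_{l=0}^{N-1}l!=\prod_{l=0}^{N}l!$, and the phases combine because $N+\binom N2=\binom{N+1}{2}$, so that $(-2i)^N(-i)^{\binom N2}i^{\binom{N+1}{2}}=2^N$. This yields $B_N(t)=2^N\prod_{1\le m<n\le N+1}\tfrac{t_m-t_n}{\sinh(t_m-t_n)}$, which is \eqref{BN}.

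The main obstacle is pinning down the constant $G(0,t)$ for general $N$: the paper evaluates the analogous number only for $N=2$, by inspection, whereas here one needs the leading-order expansion $\det[e^{a_jb_k}]_{j,k=1}^n=V(a)V(b)/\prod_{l=0}^{n-1}l!$ plus terms of $a$-degree exceeding $\binom n2$, applied with $a=ip'$, $b=t$, $n=N+1$. This is classical but merits a short argument (expand $e^{a_jb_k}=\sum_{l\ge0}a_j^lb_k^l/l!$, use row-multilinearity of the determinant, and observe that a nonzero minor forces the row-exponent multiset to be $\{0,1,\dots,n-1\}$, which reassembles into $V(a)V(b)/\prod_{l=0}^{n-1}l!$), together with the remark that the higher-order remainder remains divisible by $V(p')$ and hence, after division by $V(p')$, still vanishes at $p'=0$. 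Alternatively one could avoid computing $G(0,t)$ explicitly by fixing the overall scalar through a single normalization of $B_N$, but the direct route above keeps the proof self-contained and parallels the $N=2$ computation in the text.
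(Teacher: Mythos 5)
Your proposal is correct, and it executes a variant of the paper's computation rather than reproducing it verbatim. The paper extracts the $p\to 0$ information from \eqref{IN} by antisymmetrizing, $\sum_{\sigma\in S_N}(-)^{\sigma}I_N(\sigma(p),t)$, dividing by the Vandermonde $\prod_{m<n}(p_m-p_n)$ and taking the limit $p\to 0$ (formula \eqref{BNlim}); the two constants it needs, $\eta_N$ on the integrand side and $\eta_{N+1}$ on the evaluated side, both come from the bidegree-factorized expansion \eqref{FNpow}, with $\eta_N$ found by extracting a single monomial from $(ip\cdot z)^{N(N-1)/2}$. You instead apply the Vandermonde differential operator $\cD_p=\prod_{m<n}(-i\partial_{p_m}+i\partial_{p_n})$ directly to $I_N(p,t)$ at $p=0$ -- precisely the route the paper flags at \eqref{B2} as unwieldy -- and make it tractable by (i) factoring the $S_{N+1}$-sum as an alternant $\det[e^{ip_j't_k}]=V(p')V(t)G(p',t)$ so that \eqref{IN} becomes manifestly regular at $p=0$, (ii) a Leibniz/homogeneity argument showing only the term with all $\binom{N}{2}$ derivatives on $V_N(p)$ survives, and (iii) the classical constants $\det[e^{a_jb_k}]=V(a)V(b)/\prod_{l=0}^{n-1}l!+\mathrm{h.o.}$ and $\prod_{m<n}(\partial_{p_m}-\partial_{p_n})\prod_{m<n}(p_m-p_n)=N!\,\prod_{l=0}^{N-1}l!$. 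These inputs are the paper's constants in disguise ($G(0,t)=\eta_{N+1}$ and $\cD_pV_N=N!\,\eta_N^{-1}$), and I checked your phase and factorial bookkeeping: $(-2i)^N(-i)^{\binom N2}i^{\binom{N+1}{2}}=2^N$ and $N!\prod_{l=0}^{N-1}l!=\prod_{l=0}^{N}l!$, so \eqref{BN} comes out exactly. Two cosmetic remarks: $G(p',t)$ is an entire power series rather than a polynomial (this changes nothing, since you only use its symmetry, analyticity at $p'=0$ and the leading constant), and your deferred sign for $V(p')=\big(\prod_m p_m\big)\prod_{m<n}(p_m-p_n)$ is indeed $+1$ with the convention $V(a)=\prod_{j<k}(a_j-a_k)$, consistent with the alternant identity as you stated it. What your route buys is a self-contained justification of the regularity at $p=0$ and of the coefficient extraction (no limit under the integral beyond standard differentiation under the integral sign); what the paper's route buys is that a single device, the expansion \eqref{FNpow}, serves both on the integrand side and on the evaluated side, avoiding the extra classical identity for $\cD_pV_N$.
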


\begin{proof}
We shall proceed in a slightly different manner compared to the $N=2$ case considered above. However, the starting point is the same, namely, the function
\be\label{FN}
F_N(p,z)\equiv \sum_{\tau\in S_N}(-)^{\tau} \exp(i\tau(z)\cdot p),
\ee
where $p,z\in\C^N$. Reasoning as before, we find that its power series expansion is of the form
\be\label{FNpow}
\left(\eta_N+\sum_{j,k=1}^Nc_{jk}p_jz_k + O\big((p_mz_n)^2\big)\right)\prod_{1\leq m<n\leq N}(p_m-p_n)(z_m-z_n).
\ee
From this we deduce
\begin{multline}\label{FN0}
F_N((p_1,\ldots,p_{N-1},0),z)\\ = \eta_N p_1\cdots p_{N-1}\prod_{1\leq j<k\leq N-1}(p_j-p_k)\prod_{1\leq m<n\leq N}(z_m-z_n)\cdot\big(1+R(p,z)\big),
\end{multline}
where the remainder $R(p,z)$ vanishes at $p=0$. To determine the constant $\eta_N$, we note that the monomial $(p_1z_1)^{N-1}(p_2z_2)^{N-2}\cdots p_{N-1}z_{N-1}$ only occurs in the expansion of
\begin{multline}
\frac{1}{(N(N-1)/2)!}(ip_1z_1+\cdots +ip_Nz_N)^{N(N-1)/2}\\ = i^{N(N-1)/2}\sum_{\substack{j_1,\ldots,j_N=0\\ j_1+\cdots+j_N=N(N-1)/2}}^{N(N-1)/2}\frac{(p_1z_1)^{j_1}\cdots(p_Nz_N)^{j_N}}{j_1!\cdots j_N!}.
\end{multline}
Consequently, we have
\be
\eta_N=\frac{i^{N(N-1)/2}}{(N-1)!(N-2)!\cdots2!}.
\ee

Now from~\eqref{FNpow} we see that the integral $B_N(t)$ is given by
\be\label{BNlim}
B_N(t) = \lim_{p\to 0}\eta_N^{-1}\frac{\sum_{\sigma\in S_N}(-)^{\sigma}I_N(\sigma(p),t)}{\prod_{1\leq m<n\leq N}(p_m-p_n)}.
\ee
On the other hand, requiring at first $t_j\neq t_k$ for $1\leq j<k\leq N+1$, and using $\tau(t)\cdot\sigma(p)=(\sigma^{-1}\circ\tau)(t)\cdot p$, $\sigma\in S_N$, we deduce from \eqref{IN} that the numerator can be written
\begin{multline}
\sum_{\sigma\in S_N}(-)^{\sigma}I_N(\sigma(p),t)\\ = N!F_{N+1}\big((p_1,\ldots,p_N,0),t\big)\prod_{j=1}^N\frac{-i\pi}{\sinh (\pi p_j/2)}\prod_{1\leq m<n\leq N+1}\frac{1}{\sinh(t_m-t_n)}.
\end{multline}
It follows from \eqref{FN0} that the rhs of this equality is of the form
\be
(-2i)^NN!\eta_{N+1}\prod_{1\leq j<k\leq N}(p_j-p_k)\prod_{1\leq m<n\leq N+1}\frac{t_m-t_n}{\sinh(t_m-t_n)}\cdot\big(1+\tilde{R}(p,t)\big),
\ee
where again the remainder $\tilde{R}(p,t)$ vanishes at $p=0$. Substituting this expression for the numerator and observing that $(-i)^NN!\eta_{N+1}=\eta_N$, we arrive at \eqref{BN}. Since both the integral and the rhs of~\eqref{BN} are well defined when some coordinates coincide, the evaluation formula follows for all $t$ in the specified polystrip.
\end{proof}

To state the final lemma (which we invoke for Props.~5.4 and 6.4), we introduce
\be
Z_N\equiv \frac1N\sum_{j=1}^Nz_j,\ \ \ z\in\R^N.
\ee

\begin{lemma}
The integral
\be\label{Cintegral}
C_N(u,t)  \equiv  \int_{\mathbb{R}^N}dz\frac{\prod_{1\leq j<k\leq N}\sinh(z_j-z_k)}{\prod_{j=1}^{N+1}\prod_{k=1}^N\cosh(t_j-z_k)}\exp\Big( \sum_{j=1}^N z_j(u_{N+1}-u_j)\Big),
\ee
converges absolutely for all $(u,t)$ in the domain
\be\label{cCN}
\cC_N\equiv \{ u,t\in\C^{N+1} \mid  |\re (u_{N+1}-u_i)|<2,\  i=1,\ldots, N,\ |\im t_j|<\pi /2,\  j=1,\ldots,N+1\}.
\ee
The quotient function 
\be\label{QN}
Q_N(u,t)\equiv C_N(u,t)/\prod_{1\le j<k\le N}(u_j-u_k),
\ee
is holomorphic in~$\cC_N$.
For $(u,t)\in\cC_N$, with in addition $u_i\ne u_{N+1}$ for $i=1,\ldots,N$, and $t_j\ne t_k$ for $1\le j< k\le N+1$, we have
\bea\label{CN}
C_N(u,t) & =  & \prod_{j=1}^N\frac{\pi}{\sin (\pi (u_{N+1}-u_j)/2)}\prod_{1\leq j<k\leq N+1}\frac{1}{\sinh(t_j-t_k)}\nonumber \\
&  &  \times \sum_{\tau\in S_{N+1}}(-)^\tau\exp\Big(\sum_{j=1}^Nt_{\tau(j)}(u_{N+1}-u_j)\Big).
\eea
 \end{lemma}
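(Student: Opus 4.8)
The plan is to derive Lemma~C.3 from Lemma~C.1 via the substitution $p_j = -i(u_{N+1}-u_j)$, $j=1,\ldots,N$. With this choice one has $\exp(iz\cdot p) = \exp\big(\sum_{j=1}^N z_j(u_{N+1}-u_j)\big)$, so that the integrand in~\eqref{Cintegral} coincides with that of~\eqref{integral}, and hence $C_N(u,t)$ equals $I_N(p,t)$ evaluated at $p_j = -i(u_{N+1}-u_j)$. Since $\im p_i = -\re(u_{N+1}-u_i)$, the polystrip~\eqref{ptres} translates exactly into the domain $\cC_N$~\eqref{cCN}, so the absolute convergence assertion is immediate from Lemma~C.1. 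Because the majorant for the integrand used in the proof of Lemma~C.1 is locally uniform in the parameters, the standard theorem on holomorphy of parameter integrals then yields holomorphy of $C_N$ on all of $\cC_N$.

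First I would record the evaluation formula. Substituting $p_j = -i(u_{N+1}-u_j)$ into~\eqref{IN} and using $\sinh(\pi p_j/2) = -i\sin\big(\pi(u_{N+1}-u_j)/2\big)$, so that $-i\pi/\sinh(\pi p_j/2) = \pi/\sin\big(\pi(u_{N+1}-u_j)/2\big)$, together with $\exp\big(i\sum_{j=1}^N t_{\tau(j)}p_j\big) = \exp\big(\sum_{j=1}^N t_{\tau(j)}(u_{N+1}-u_j)\big)$, reproduces~\eqref{CN} verbatim. The side conditions $p_i\ne 0$ and $t_j\ne t_k$ in Lemma~C.1 correspond precisely to $u_i\ne u_{N+1}$ and $t_j\ne t_k$, which are exactly the extra hypotheses in Lemma~C.3.

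Next I would show that the quotient $Q_N$~\eqref{QN} is holomorphic on $\cC_N$. The key observation is that $C_N(u,t)$ is antisymmetric under permutations of $u_1,\ldots,u_N$: interchanging $u_j$ and $u_k$ with $1\le j<k\le N$ while simultaneously performing the change of variables $z_j\leftrightarrow z_k$ in~\eqref{Cintegral} leaves both the $\cosh$-product and the exponential factor invariant, whereas the $\sinh$-product acquires a sign $-1$. Hence $C_N$ vanishes on each hyperplane $\{u_j=u_k\}$, $1\le j<k\le N$. Since the Vandermonde $\prod_{1\le j<k\le N}(u_j-u_k)$ is square-free, its zero set is exactly the union of these hyperplanes; near a smooth point of any one of them $Q_N$ is bounded, and the singular locus of the zero set has complex codimension $\ge 2$, so Riemann's removable-singularity theorem shows that $Q_N$ extends to a function holomorphic on all of $\cC_N$.

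I do not anticipate a genuine obstacle, since the lemma is essentially a reparametrisation of Lemma~C.1. The only steps requiring some care are the passage from local-uniform convergence to holomorphy of $C_N$ (routine) and the divisibility of $C_N$ by the Vandermonde in $u_1,\ldots,u_N$, for which the antisymmetry argument above is the crux; everything else is a transcription.
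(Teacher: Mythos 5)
Your proposal is correct and follows essentially the same route as the paper: identify $C_N(u,t)=I_N(p(u),t)$ with $p_j=-i(u_{N+1}-u_j)$ (the paper writes $p(u)_j=i(u_j-u_{N+1})$), transcribe the convergence domain and evaluation formula from Lemma~C.1, and obtain holomorphy of $Q_N$ from the antisymmetry of $C_N$ in $u_1,\ldots,u_N$. The only cosmetic difference is that you derive the antisymmetry directly from the integral by the change of variables $z_j\leftrightarrow z_k$ and spell out the Vandermonde-divisibility step via Riemann's extension theorem, whereas the paper reads the antisymmetry off the evaluated formula and leaves the divisibility implicit.
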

\begin{proof}
Comparing~\eqref{Cintegral} to~\eqref{integral},  we deduce
\be
C_N(u,t)=I_N(p(u),t),\ \ \ p(u)_j=i(u_j-u_{N+1}),\ \ \ j=1,\ldots,N.
\ee
Hence the first assertion and the explicit evaluation~\eqref{CN} follow from Lemma~C.1. The evaluation reveals that $C_N(u,t)$ is antisymmetric under permutations of $u_1,\ldots, u_N$, so that holomorphy of $Q_N$ in~$\cC_N$ readily follows.
\end{proof}

At the end of Section~6 we are in the position to deduce a remarkable positivity feature of~$Q_N(u,t)$ from the use of Lemma~C.3 for obtaining the bound~\eqref{JNby}, cf.~\eqref{QNpos}. Under the stronger assumption~$u_j\in(u_{N+1}-2,u_{N+1})$, $j=1,\ldots,N$, this feature can also be derived directly from~Lemma~C.3 (using finite induction). However, due to the special role of $u_{N+1}$, a direct proof of~\eqref{QNpos} in the setting of this appendix would be quite laborious.


\bibliographystyle{amsalpha}

\begin{thebibliography}{KNS09}
\addcontentsline{toc}{section}{References}

\bibitem[vdB06]{vdB06} F.~J.~van de Bult, \emph{Ruijsenaars' hypergeometric function and the modular double of ${\mathcal U}_q(sl_2(\C))$}, Adv.~Math. {\bf 204} (2006), 539--571.

\bibitem[BRS07]{BRS07} F.~J.~van de Bult, E.~M.~Rains and J.~V.~Stokman, \emph{ Properties of generalized univariate hypergeometric functions}, Commun.~Math.~Phys. {\bf 275} (2007), 37--95.

\bibitem[Cha02]{Cha02} O.~Chalykh, \emph{Macdonald polynomials and algebraic integrability}, Adv.~Math. {\bf 166} (2002), 193--259.

\bibitem[GKL04]{GKL04} A.~Gerasimov, S.~Kharchev and D.~Lebedev, \emph{Representation theory and quantum inverse scattering method: The open Toda chain and the hyperbolic Sutherland model}, Int.~Math.~Res.~Not.~2004, no.~17, 823--854.

\bibitem[Gut81]{Gut81} M.~C.~Gutzwiller, \emph{The quantum mechanical  Toda lattice, II}, Ann.~Phys.~(NY) {\bf 133} (1981), 304--331.


\bibitem[HO87]{HO87} G.~Heckman and E.~Opdam, \emph{Root systems and hypergeometric functions I}, Comp.~Math.~{\bf 64} (1987) 329--352.

\bibitem[HR11]{HR11} M.~Halln\"as and S.~N.~M.~Ruijsenaars, \emph{Kernel functions and B\"acklund transformations for relativistic Calogero-Moser and Toda systems}, preprint.

\bibitem[KLS02]{KLS02} S.~Kharchev, D.~Lebedev and M.~Semenov-Tian-Shansky, \emph{Unitary representations of $U_q(\mathfrak{sl}(2,\mathbb{R}))$, the modular double and the multiparticle q-deformed Toda chains}, Commun.~Math.~Phys.~{\bf 225} (2002), 573--609.


\bibitem[KNS09]{KNS09} Y.~Komori, M.~Noumi and J.~Shiraishi, \emph{Kernel functions for difference operators of Ruijsenaars type and their applications}, Proceedings of the workshop "Elliptic integrable systems, isomonodromy problems, and hypergeometric functions" (M.~Noumi, E.~M.~Rains, H.~Rosengren and V.~P.~Spiridonov, Eds.), SIGMA {\bf 5} (2009), 054, 40 pages.

\bibitem[Koo84]{Koo84} T.~H.~Koornwinder, \emph{Jacobi functions and analysis on noncompact semisimple Lie groups}, in: Special functions: group theoretical aspects and applications (R.~A.~Askey, T.~H.~Koornwinder and W.~Schempp, Eds.), Mathematics and its applications, Reidel, Dordrecht, 1984, pp.~1--85.



\bibitem[OP83]{OP83} M.~A.~Olshanetsky and A.~M.~Perelomov, \emph{Quantum integrable finite-dimensional systems related to Lie algebras}, Phys.~Reps.~{\bf 94} (1983), 313--404.

\bibitem[Opd95]{Opd95} E.~Opdam, \emph{Harmonic analysis for certain representations of graded Hecke algebras}, Acta Math.~{\bf 175} (1995), 75--121.


\bibitem[Rui87]{Rui87} S.~N.~M.~Ruijsenaars, \emph{Complete integrability of relativistic Calogero-Moser systems and elliptic function identities}, Commun.~Math.~Phys.~{\bf 110} (1987), 191--213.

\bibitem[Rui88]{Rui88} S.~N.~M.~Ruijsenaars, \emph{Action-angle maps and scattering theory for some finite-dimensional integrable systems. I The pure soliton case}, Commun.~Math.~Phys.~{\bf 115} (1988), 127--165. 


\bibitem[Rui94]{Rui94} S.~N.~M.~Ruijsenaars, \emph{Systems of Calogero-Moser type}, Proceedings of the 1994 Banff summer school "Particles and fields" (G. Semenoff and L. Vinet, Eds.), CRM series in mathematical physics, Springer, New York, 1999, pp. 251--352.


\bibitem[Rui97]{Rui97} S.~N.~M.~Ruijsenaars, \emph{First-order analytic difference equations and integrable quantum systems}, J.~Math.~Phys.~{\bf 38} (1997), 1069--1146.

\bibitem[Rui99]{Rui99} S.~N.~M.~Ruijsenaars, \emph{A generalized hypergeometric function satisfying four analytic difference equations of Askey-Wilson type}, Commun.~Math.~Phys.~{\bf 206} (1999), 639--690.

\bibitem[Rui00]{Rui00} S.~N.~M.~Ruijsenaars, \emph{Hilbert space
theory for reflectionless relativistic potentials}, 
Publ. RIMS Kyoto Univ.~{\bf 36} (2000), 707--753.

\bibitem[Rui01]{Rui01} S.~N.~M.~Ruijsenaars, \emph{Sine-Gordon solitons vs. relativistic Calogero-Moser
particles},
in Proceedings of the Kiev NATO Advanced Study Institute 
"Integrable structures 
of exactly solvable two-dimensional
models of quantum field theory",
NATO Science 
Series Vol. 35, (S. Pakuliak and G.~von~Gehlen,
Eds.), Kluwer, Dordrecht, 2001, pp.~273--292. 

\bibitem[Rui03II]{Rui03II} S.~N.~M.~Ruijsenaars, \emph{ A generalized hypergeometric
function II. Asymptotics and $D_4$ symmetry},
 Commun. Math. Phys. {\bf 243} (2003), 389--412.

\bibitem[Rui03III]{Rui03III} S.~N.~M.~Ruijsenaars, \emph{ A generalized hypergeometric
function III. Associated Hilbert space transform}, 
Commun. Math. Phys. {\bf 243} (2003) 413--448.

\bibitem[Rui06]{Rui06} S.~N.~M.~Ruijsenaars, \emph{Zero-eigenvalue eigenfunctions for differences of elliptic relativistic Calogero-Moser Hamiltonians}, Theor.~and Math.~Phys.~{\bf 146} (2006), 25--33.


\bibitem[Rui09]{Rui09} S.~N.~M.~Ruijsenaars, \emph{Hilbert-Schmidt operators vs. integrable systems of elliptic Calogero-Moser type. II. The $A_{N-1}$ case: first steps}, Commun.~Math.~Phys.~{\bf 286} (2009), 659--680.

\bibitem[Rui11]{Rui11} S.~N.~M.~Ruijsenaars, \emph{A relativistic conical function and its Whittaker limits}, SIGMA~{\bf 7} (2011), 101, 54 pages.

\bibitem[RS86]{RS86} S.~N.~M.~Ruijsenaars and H.~Schneider, \emph{A new class of integrable systems and its relation to solitons}, Ann.~Phys.~(NY)~{\bf 170} (1986), 370--405.

\bibitem[VMS09]{VMS09} M. van Meer and J.~V. Stokman, \emph{Double affine Hecke algebras and bispectral quantum Knizhnik-Zamolodchikov equations}, Int.~Math.~Res.~Not.~2010, no.~6, 969--1040.


\end{thebibliography}

\end{document}